\providecommand{\texorpdfstring}[2]{#1}
\newcommand{\Gammaconv}{\texorpdfstring{$\Gamma$}{Gamma}-convergence}
\newcommand{\Gammalimit}{\texorpdfstring{$\Gamma$}{Gamma}-limit}
\numberwithin{equation}{section}
\newtheorem{theorem}{Theorem}[section]
\newtheorem{lemma}[theorem]{Lemma}
\newtheorem{proposition}[theorem]{Proposition}
\newtheorem{corollary}[theorem]{Corollary}
\theoremstyle{definition}
\newtheorem{definition}[theorem]{Definition}
\theoremstyle{remark}
\newtheorem{remark}[theorem]{Remark}
\newcommand{\R}{\mathbb{R}}
\DeclareMathOperator{\Vol}{Vol}
\DeclareMathOperator{\Area}{Area}
\DeclareMathOperator{\reach}{reach}
\DeclareMathOperator{\Per}{Perimeter}
\DeclareMathOperator{\tr}{tr}
\newcommand{\Id}{\mathrm{Id}}
\newcommand{\eps}{\varepsilon}
\newcommand{\Xcal}{\mathcal{X}}
\newcommand{\Rm}{\mathrm{Rm}}
\newcommand{\EC}{\textup{(EC)}}
\newcommand{\EH}{\ensuremath{\mathrm{EH}}}
\newcommand{\GHY}{\ensuremath{\mathrm{GHY}}}
\newcommand{\parahead}[1]{\medskip\noindent\textbf{#1}\quad}
\newcommand{\boxedinlineD}[1]{%
  \begingroup
  \setlength{\fboxsep}{1pt}
  \setlength{\fboxrule}{0.4pt}
  \fcolorbox{black}{gray!6}{\ensuremath{\;\displaystyle #1\;}}%
  \endgroup
}
\newcommand{\boxedinline}[1]{\boxedinlineD{#1}}
\newcommand{\appsection}[3][]{%
  \refstepcounter{section}
  \pdfbookmark[1]{Appendix \thesection{} — #3}{appsec-\thesection}%
  \section*{\texorpdfstring{Appendix \thesection{} — #2}{Appendix \thesection{} — #3}}%
  \addcontentsline{toc}{section}{Appendix \thesection{} — #2}%
  \if\relax\detokenize{#1}\relax\else\label{#1}\fi
}
\newcommand{\versiontag}{Version 1.1 (2025-11-14)}
\newcommand{\versionnote}{%
  \emph{\versiontag. Major changes: corrected boundary sign in TP2; reordered Sections 8–12; added Appendix F (scan indifference).}
}
\colorlet{orcidlogocol}{black}
  \newcommand{\printorcid}[1]{\href{https://orcid.org/#1}{\texttt{#1}}}%
  \newcommand{\printorcid}[1]{%
    \href{https://orcid.org/#1}{\raisebox{-0.2ex}{\orcidlink{#1}}\;\texttt{#1}}%
  }%
\title{EH--\texorpdfstring{$\Gamma$}{Gamma}: $\Gamma$-convergence of a $G$-invariant MDL functional to the Einstein--Hilbert action (with Gibbons--Hawking--York boundary term)}
\author{Marko Lela\\[0.3ex]\printorcid{0009-0008-0768-5184}}
\date{14 November 2025}
\begin{document}
\maketitle

\begin{center}\small \versionnote\end{center}

\begin{abstract}
Under the stated BA assumptions, bounded geometry, and boundary-fitted, shape-regular meshes, we prove the $\Gamma$-convergence $F_n\to F$ with
\[
F(g)=c_0\!\int_M dV_g + c_1\!\int_M R_g\,dV_g + c_2\!\int_{\partial M} K_g\,dS_g,
\]
and, under equi-coercivity, stability of minimizers and variational problems.
\end{abstract}


\paragraph*{Project context.}
This paper belongs to the Meta Information Symmetry (MIS) program.
MIS studies how invariance and Minimum Description Length principles can select the structures of
continuum physics by favoring information-optimal local discretizations \parencite{Rissanen1978Shortest,Grunwald2007MDL,BarronCover1991,Shtarkov1987NML,Dawid1992Prequential}.
The aim is to view physical field laws as limits of natural, data-efficient summaries.
The present $\Gamma$-convergence result supplies the geometric part: a $\Gamma$-limit to
the Einstein–Hilbert action with the Gibbons–Hawking–York boundary term \parencite{York1972,GibbonsHawking1977},
together with stability of minimizers under equi-coercivity.
The results are independent and self-contained, and no companion manuscript is needed for the proofs.
A short programmatic overview will be released separately. A standard citation will be added once that document is public.

\section*{Novelty (brief)}
To the best of our knowledge, this is the first $\Gamma$-limit for Einstein–Hilbert
with a compatible Gibbons–Hawking–York boundary term that also yields stability of minimizers under equi-coercivity.
Earlier work established measure or value convergence, or treated boundary terms in
isolation, but did not provide a $\Gamma$-limit for EH+GHY with minimizer stability
\parencite{Regge1961,CheegerMuellerSchrader1984,HartleSorkin1981,KellyBiancalanaTrugenberger2022}.

\section{Related work and positioning}

\noindent\textbf{Historical note.} \emph{\Gammaconv} was introduced by De~Giorgi~\parencite{DeGiorgi1975}; 
see also the early note on variational convergence~\parencite{DeGiorgiFranzoni1975}.

\paragraph{Regge calculus and measure convergence.}
Following the original discretization framework of Regge~\parencite{Regge1961}, the closest rigorous results to our setting are the measure convergence results of Cheeger--Müller--Schrader for piecewise-flat spaces, showing that (Lipschitz--Killing) curvature measures of triangulations approximate their smooth counterparts \emph{in the sense of measures}~\parencite{CheegerMuellerSchrader1984}.
However, this line does not provide a \(\Gamma\)-limit for the Einstein--Hilbert (EH) action, nor does it address stability of minimizers.

\paragraph{Boundary terms in discrete gravity.}
In the Regge--calculus framework~\parencite{Regge1961}, boundary contributions compatible with the continuum Gibbons--Hawking--York (GHY) term were derived and analyzed~\parencite{HartleSorkin1981}.
These works identify the right discrete form of the boundary term, yet they do not establish \(\Gamma\)-convergence of EH+GHY.

\paragraph{Graph and metric-measure formulations.}
Discrete EH--type energies built from graph curvature (e.g.\ Ollivier curvature) have been shown to converge to continuum quantities in metric or Gromov--Hausdorff senses~\parencite{KellyBiancalanaTrugenberger2022}.
Such results concern value/metric convergence and do not yield a \(\Gamma\)-limit for variational problems over metrics.

\paragraph{Continuum structure and boundary terms.}
On the continuum side, the necessity and form of the GHY term for well-posed boundary variations are classical \parencite{York1972,GibbonsHawking1977}.
The structural restriction of local, natural scalar densities of order \(\le 2\) to combinations of \(1\) and \(R\) (up to divergences), and of boundary order \(\le 1\) to \(K\), follows from the general classification of natural operations and Lovelock-type arguments \parencite{KolarMichorSlovak1993,Lovelock1971}.

\paragraph{Finite elements and boundary-fitted meshes.}
Our mesh assumptions and boundary-fitting are standard in the finite element literature \parencite{Ciarlet2002,BrennerScott2008}, but these references do not address \(\Gamma\)-limits for EH+GHY.

\paragraph{MDL and model selection.}
Our discrete value functional is cast in MDL terms and interpreted through code lengths and regret.
For background on MDL and stochastic complexity see \textcite{Rissanen1978Shortest,Rissanen1989StochasticComplexity,Grunwald2007MDL};
for universal coding/regret and the NML construction see \textcite{BarronCover1991,Shtarkov1987NML}.
Prequential motivation is classical \parencite{Dawid1992Prequential}.

\paragraph{Positioning and contribution.}
\emph{To the best of our knowledge}, there is no published \(\Gamma\)-convergence result for a diffeomorphism-natural discrete gravity functional that converges to the full EH+GHY action with a boundary first-layer analysis.
Our contribution is to provide Carathéodory densities \(f_{\mathrm{in}}=\alpha_0+\alpha_1 R\) and \(f_{\mathrm{bdry}}=\beta_1 K\), prove the \(\liminf/\limsup\) inequalities with a recovery sequence built from reflected Fermi smoothing, and thereby obtain a \(\Gamma\)-limit to
\[
c_0\!\int_M dV_g + c_1\!\int_M R_g\,dV_g + c_2\!\int_{\partial M} K_g\,dS_g,
\]
together with stability of minimizers \emph{under equi-coercivity}.

\paragraph{Numerical scope.}
Numerical sanity checks are deliberately deferred; this paper is foundational.
Appendix~\ref{app:rates} outlines a reproducible protocol (test geometries, expected scales/rates, and calibration of $\alpha_0,\alpha_1,\beta_1$).
A companion computational study is planned.

\section{Proof roadmap}

Our proof follows the standard De~Giorgi \emph{\Gammaconv} paradigm~\parencite{DeGiorgi1975},
with the liminf/limsup steps adapted to the $G$-invariant setting.

We outline the logical dependencies and scales before entering the technical sections.

\paragraph{Scales and windows.}
Interior cells live at volumetric scale \(h^d\); boundary first-layer cells at surface scale \(h^{d-1}\).
Normalized isotropic moments up to order~2 (interior) give an \emph{even expansion}; the anisotropic normal window with \(\langle t\rangle=\mu_1 h\) linearizes the boundary contribution.

\paragraph{TP1 (interior blow-up).}
Second-jet expansion and parity yield
\(E_n(c)=h^d[\alpha_0+\alpha_1 R(x)]+O(h^{d+2})\).
Leading order: \(\Delta\ell(c)=O(h^{2d-2})\), \(a_n\Delta\ell(c)=O(h^d)\).

\paragraph{TP2 (boundary blow-up, first layer).}
Reflected Fermi charts; Jacobian \(1-tK+O(t^2)\); anisotropic normal window.
Linearization gives \(\psi(s,t)=\beta_{\rm loc}K(s)t+O(t^2)+O(h)\) and hence
\(E_n(c)=h^{d-1}[\beta_1 K(s_c)+O(h)]+o(h^{d-1})\) with \(\beta_1=\beta_{\rm loc}\mu_1\).

\paragraph{TP3 (Carathéodory and quasi-additivity).}
Local densities \(f_{\rm in}=\alpha_0+\alpha_1 R\), \(f_{\rm bdry}=\beta_1 K\).
Lemma~M: the additivity defect on \(A\cup B\) is controlled by a tube of volume \(\asymp \delta_n \Per_g(A\cap B)\).
Counting: the number of tube cells is \(\asymp \delta_n\,\Per_g/h^d\), with a per-cell bound \(\lesssim h^d\) this yields a total defect \(\lesssim \delta_n\,\Per_g\).
We use the Carathéodory construction and the tubular estimate from geometric measure theory \parencite{Federer1969}.

\paragraph{TP4/TP5 (liminf/limsup).}
Liminf on \((\Xcal,\|\cdot\|_{C^1})\) via a smoothing step \(\tilde g_n=S_{\varepsilon_n}[g_n]\): apply Lemma~U uniformly on a \(C^{1,1}\) neighbourhood for \(\tilde g_n\), then pass \(n\to\infty\) and \(\varepsilon_n\downarrow0\).
Vitali and Besicovitch coverings control the global assembly \parencite{Federer1969}.
Limsup via recovery \(g_n=S_{\eta h}[g]\):
\(R_{g_n}\to R_g\) in \(L^1(M)\), \(K_{g_n}\to K_g\) in \(L^1(\partial M)\).
Global interior remainder \(O(h^2)\), boundary \(O(h)\).

\paragraph{TP6/TP7 (structure, scaling, calibration).}
Naturality in the sense of Peetre–Slovák and scaling \(a_n\simeq h^{2-d}\) force
\(f_{\rm in}=\alpha_0+\alpha_1 R\) and \(f_{\rm bdry}=\beta_1 K\) \parencite{KolarMichorSlovak1993,Lovelock1971}.
Calibration fixes \(c_0,c_1,c_2\).

\section{Framework, topology, notation, and scaling}
\label{sec:00-framework}

\noindent For measure-theoretic preliminaries (Carathéodory construction of outer measures, densities, covering arguments) we follow classical GMT sources, in particular Federer~\parencite{Federer1969}.

\noindent We recall the notion of \emph{\Gammaconv} in the sense of De~Giorgi~\parencite{DeGiorgi1975}. For comprehensive expositions see \parencite{DalMaso1993Gamma,Braides2002Gamma}.

\paragraph{Group and naturality.}
Let \(G=\mathrm{Diff}(M)\) be the diffeomorphism group.
The discrete functionals \(F_n\) are \(G\)-invariant (diffeomorphism-natural) in the sense of natural operations \parencite{KolarMichorSlovak1993}.

\paragraph{Geometry and regularity.}
Dimension \(d\ge2\). The manifold \(M\) is compact, orientable, of class \(C^2\) with (possibly) \(C^2\) boundary \(\partial M\).
We work on \(\Xcal\subset C^{1,1}\) under \emph{bounded geometry}: a finite, uniformly controlled atlas with finite overlap; injectivity radius \(\iota_0>0\); \(\|\Rm_g\|_\infty\le C\) (optionally \(\|\nabla \Rm_g\|_\infty\le C'\)).
\emph{Uniform ellipticity} holds chartwise: \(\lambda_{\mathrm{ell}}\Id\le g\le \Lambda_{\mathrm{ell}}\Id\).
Hence, in a fixed finite-overlap atlas there exist constants \(0<c_- \le c_+<\infty\) with
\[
c_- \le \sqrt{\det g}\le c_+\quad\text{(volume-form bounds).}
\]

\paragraph{Boundary reach.}
\(\reach_g(\partial M)\) denotes the tubular radius of the normal exponential map of \(g\) along \(\partial M\).

\paragraph{Boundary regularity convention (no corners).}\label{conv:no-corners}
We assume \(\partial M\in C^2\) with no corners/edges. Corner (codimension-two) terms are excluded and do not appear in
the first-layer boundary analysis (TP2), the liminf/limsup arguments, or the integral representation.

\paragraph{Boundary-fitted meshes.}
We are given a sequence of shape-regular, boundary-fitted partitions \(\{\mathcal T_n\}\) with meshsize \(h_n\downarrow 0\);
the discrete boundary \(\partial\mathcal T_n\) converges to \(\partial M\) in Hausdorff distance at rate \(O(h_n)\).
This is the standard finite-element mesh setting \parencite{Ciarlet2002,BrennerScott2008}.

\paragraph{Besicovitch constant.}
On manifolds with \(\iota_0>0\) there exists a finite overlap constant \(N(d,\iota_0)\) for Vitali/Besicovitch coverings~\parencite{Federer1969}.

\paragraph{Mesoscale and scan.}
We use block windows \(B_{r_n}(c)\) with \(r_n\to\infty\) and \(R_n:=r_n h_n\to 0\); set \(\delta_n:=R_n\).
Example: \(r_n=h_n^{-\alpha}\) with \(\alpha\in(0,1)\) satisfies \(r_n\to\infty\) and \(R_n\to0\).
\emph{Crucial scale separation:} the feature \(\Phi_n(c)\) is computed from data \emph{inside} \(B_{r_n}(c)\), but is
\emph{weighted by kernels at scale \(h\)} (interior support \(|z|\le \Lambda_z h\), boundary normal depth \(\le c_\ast h\)).
Thus the effective averaging radius is \(O(h)\), independently of \(R_n\); the mesoscale \(R_n\) serves only for scan-indifference (BA3).

\paragraph{Scan indifference (BA3).}
For the formal statement of (BA3) see the compact BA list below; a full proof is given in Appendix~F.

\paragraph{MDL functional and scaling.}
For each cell \(c\in\mathcal T_n\) a local, normalized statistic \(\Phi_n(c)\) is computed from \(B_{r_n}(c)\).
The loss \(\ell\in C^2\) is defined on a compact feature set \(K_{\mathrm{feat}}\) and obeys
\(\sup_{u\in K_{\mathrm{feat}}}(\|D\ell(u)\|+\|D^2\ell(u)\|)<\infty\).
Define the code-length \(L_n(g_n)=\sum_{c\in\mathcal T_n}\ell(\Phi_n(c))\) and the calibrated excess
\(\Delta L_n:=L_n(g_n)-L_n^{\text{flat\_ref}}\).
Our use of MDL follows the classical literature \parencite{Rissanen1978Shortest,Grunwald2007MDL,BarronCover1991,Shtarkov1987NML,Dawid1992Prequential}.
We fix the scaling
\[
\boxed{\,h:=h_n,\qquad a_n\simeq h^{\,2-d},\qquad F_n(g_n):=a_n\,\Delta L_n\, }.
\]

\paragraph{Interior windows and moments (scale \texorpdfstring{$h$}{h}).}
Work in dimensionless local variables \(\xi:=z/h\) (normal coordinates around the cell).
The interior window \(w_{\mathrm{in}}(\xi)\ge 0\) has compact support \(|\xi|\le \Lambda_z\) and \(\int_{\R^d} w_{\mathrm{in}}(\xi)\,d\xi=1\).
Isotropy up to order~2:
\begin{align*}
\langle \xi^i\rangle &= 0,\\
\langle \xi^i \xi^j\rangle &= \mu_2\,\delta^{ij},\\
\langle \xi^i \xi^j \xi^k\rangle &= 0.
\end{align*}
and uniform moment bounds \(\langle |\xi|^k\rangle\le C_k\) (e.g.\ \(k\le4\), optional \(k\le6\)).
In physical variables \(z=h\xi\), this is equivalent to \emph{support} \(|z|\le \Lambda_z h\) and
\(\langle |z|^k\rangle\le C_k\,h^{\,k}\).
A uniform \(L^\infty\)-bound \(\sup_n\|w_{\mathrm{in}}\|_\infty<\infty\) is \emph{not} required.

\paragraph{Boundary windows (anisotropic).}\label{para:boundary-windows}
In Fermi coordinates \((z',t)\) with \(t\) the inward normal coordinate, tangential scale is \(O(h)\) and normal depth is \(O(h)\).
Equivalently, we use the 1D normal kernel
\[
w_n(t)=h^{-1}\,w_0(t/h),\qquad \mathrm{supp}(w_0)\subset[0,c_*],\quad \int_0^{c_*}\!w_0=1.
\]
Then \(\langle t\rangle=\mu_1 h\) and \(\langle t^k\rangle\le C_k h^{\,k}\).
The constant \(c_*\in(0,1)\) is fixed (depends only on the shape-class).
We fix this anisotropic boundary window class throughout; all constants below depend only on the constants bracket and this shape-class.
At the first boundary layer the leading per-cell term is of order \(h^{d-1}\) (not \(o(h^{d-1})\)); summation over first-layer cells yields the boundary integral and an \(O(h)\) global remainder (see TP2).

\paragraph{BA assumptions (visible).}
\begin{itemize}
  \item \textbf{(BA1)} \(\ell\in C^2\) on the compact feature set; normalized moments up to order~2 (interior) and order~1 (boundary) are smoothly computable.
  \item \textbf{(BA2)} \emph{(Assumption; automatically satisfied for the boundary windows of Section~\ref{para:boundary-windows} under the reflected Fermi-layer setup)}:
  For reflected Fermi layers of thickness \(O(h)\) with anisotropic windows (fixed \(c_\ast\), \(\mu_1\), bounded moments), the per-layer contribution for \(t\in[c_* h,\varepsilon]\) is \(O(h^{\,d})\); summing layers yields \(\le C\,\varepsilon\,\Area(\partial M)\).
  \item \textbf{(BA3) Scan indifference.}
  There is a universal \(C<\infty\) such that for any scanning orders \(\sigma,\tau\),
  \[
  \sup_{\sigma,\tau}\bigl|F_n^\sigma(M)-F_n^\tau(M)\bigr|\;\le\; C\,\delta_n\,\Area_g(\partial M)\;=\;o(1).
  \]
  A proof is given in Appendix~F.
\end{itemize}
We keep (BA2) as an explicit assumption to cover general window families; for the fixed windows of Section~\ref{para:boundary-windows} it holds with constants depending only on the constants bracket. Hypothesis (BA2) is used only in the boundary assembly (TP3/TP4/TP5) to bound non-first-layer terms; it is \emph{not} derived from TP1/TP2.

\begin{remark}[Motivation for (BA2)]	
(BA2) is used only in the boundary assembly (TP3–TP5).\linebreak In reflected Fermi charts, the boundary tube
\(U_\varepsilon=\{0\le t<\varepsilon\}\) is partitioned into layers of thickness \(O(h)\).
Shape-regular cells and tangential windows of diameter \(O(h)\) imply a per-layer count \(\asymp h^{-1}\) along the normal
and \(\asymp h^{1-d}\) tangentially, hence \(O(h^{d-1})\) cells per layer. With cell-wise boundary scale \(h^{d}\) (TP2 remainders),
each intermediate layer contributes \(O(h^d)\), and summing \(O(\varepsilon/h)\) layers yields a total \(O(\varepsilon)\).
Form-stability of the anisotropic windows (fixed \(c_\ast\), normalized \(\mu_1\), bounded higher moments) guarantees that the constants
depend only on the constants bracket.
\smallskip\noindent\emph{Conclusion.} For the window class fixed in Section~\ref{para:boundary-windows}, this counting argument verifies (BA2) with constants depending only on the constants bracket.
\end{remark}

\paragraph{\texorpdfstring{\texttt{flat\_ref}}{flat\_ref} mini-lemma.}
For every cell there exists a (cellwise) flat reference configuration \texttt{flat\_ref} that matches
(i) (half-)volume, (ii) the window \(w_n\) including \(\mu_1\), (iii) the moment normalization \((\mu_2,C_k)\), and (iv) the shape-class.
Existence and the \(o(h^2)\) quasi-uniqueness are proved in Appendix~C:
interior in Lemmas~\ref{lem:flatref-interior-existence}–\ref{lem:flatref-interior-uniqueness},
boundary (first layer) in Lemmas~\ref{lem:flatref-boundary-existence}–\ref{lem:flatref-boundary-uniqueness}.

\paragraph{Topology for \texorpdfstring{$\Gamma$}{Gamma}-convergence.}
The \emph{domain} is \(\Xcal\subset C^{1,1}\) (needed for jet control and curvature).
The \emph{\(\Gamma\)-topology} is the \(C^1\)-topology on \(\Xcal\).
This choice is justified by the \(L^1\)-stability of \(R\) and \(K\) under the smoothing \(S_\varepsilon\) from Appendix~A
(see Lemma~\ref{lem:L1-curvature} and Eq.~\eqref{eq:S-eps-def}), which allows us to pass to the limit in curvature terms
while keeping \(C^1\)-control; cf.\ Lemma~U (uniformity).
Recovery sequences in the limsup construction are produced by \(g_{\varepsilon_n}:=S_{\varepsilon_n}[g]\) with \(\varepsilon_n\downarrow0\)
(Section~\ref{app:smoothing}), staying in \(\Xcal\) and converging to \(g\) in \(C^1\).
All a.e./\(L^1\) statements are fixed on a \(G\)-invariant full-measure set.
Blow-ups are formulated in a coordinate-free manner.

\paragraph{Boundary tube and reach.}
In boundary normal coordinates \(x=(s,t)\) the Jacobian expands as \(1-tK+O(t^2)\).
The reach \(\reach_g(\partial M)\) is used in TP5 (for the smoothing radius relative to \(h\)).

\paragraph{Order of limits.}
We always send \(n\to\infty\) first, then the localization radius \(\rho\downarrow0\) (or \(\varepsilon\downarrow0\) in the boundary tube).
A swap is not required: Lemma~U provides uniformity in \(\rho\) and \(n\) \emph{provided \(g\) (and any comparison \(\tilde g\)) stay in a fixed \(C^{1,1}\)-neighbourhood of the reference metric}; the dependence of the uniform constants is only through the \emph{constants bracket}.

\paragraph{Lemma U (uniformity; interior and boundary; proved in $C^{2,1}$).}
There exist $\delta>0$ and a constant $C<\infty$
(depending only on the constants bracket; see Section~\ref{sec:constants-bracket})
such that for all $\tilde g$ with $\|\tilde g-g\|_{C^{2,1}}\le \delta$ and for all shape-regular meshes of scale $h$,
the TP1/TP2 cell expansions hold with the \emph{same} coefficients $\alpha_0,\alpha_1,\beta_1$ and with remainders
\[
\begin{aligned}
\bigl|E_n(c;\tilde g)-h^{d}\alpha_0-h^{d+2}\alpha_1 R_{\tilde g}(x_c)\bigr|
&\le C\,h^{d+3}
\quad &&\text{(interior)},\\
\bigl|E_n(c;\tilde g)-h^{d-1}\beta_1 K_{\tilde g}(s_c)\bigr|
&\le C\,h^{d}
\quad &&\text{(first boundary layer; TP2)}.
\end{aligned}
\]

\noindent\emph{Use under $C^{1,1}$ ambient.}
In applications we take $\tilde g=S_{\varepsilon}[g]$ with $\varepsilon\sim h$ (Appendix~\ref{app:smoothing});
then, for $\varepsilon$ small, $S_{\varepsilon}[g]$ lies in the above $C^{2,1}$-ball, so Lemma~U applies even when $g\in C^{1,1}$.
A proof is given in Appendix~A (Proof of Lemma~\ref{lem:U}).

\section{Assumptions, exclusions, and conventions (EC, boundary, perimeter)}
\label{sec:00b-assumptions-ec-scope}

\paragraph{Equi-coercivity regime (EC).}
We do \emph{not} claim coercivity of the EH(+GHY) action.
Throughout, \emph{equi-coercivity} means that sublevel sets of \(\{F_n\}\) are precompact in \((\Xcal,\|\cdot\|_{C^1})\).
This matches the standard notion in \(\Gamma\)-convergence \parencite{DalMaso1993Gamma}.
It holds automatically in constrained settings where one imposes, for some \(C<\infty\),
\begin{enumerate}
  \item uniform ellipticity and a uniform \(C^{1,1}\) bound on admissible metrics \((g\in\Xcal)\),
  \item fixed volume and boundary area bounds to prevent collapse,
  \item an a priori \(L^1\) bound on the negative part \(R_g^-\) and on \(K_g^-\) along \(\partial M\) to exclude conformal blow downs.
\end{enumerate}
Any variational problem under these constraints, or any equivalent precompactness mechanism, falls into our EC regime.

\paragraph{Boundary regularity; corner exclusion.}
We assume \(\partial M\in C^2\); there are \emph{no} corners or edges.
Polygonal/polyhedral boundaries generate additional corner terms not covered by the first-layer analysis.
Our boundary-fitted meshes approximate a smooth \(C^2\) boundary with Hausdorff error \(O(h)\).

\paragraph{Perimeter convention.}
\(\Per_g(\cdot)\) denotes the De~Giorgi perimeter with respect to \(g\).
We use the tubular estimate \(\Vol_g(N_{\delta}(E))\asymp \delta\,\Per_g(E)\) for \((d-1)\) rectifiable \(E\), as standard in geometric measure theory \parencite{Federer1969}.
All counting constants are absorbed in the constants bracket.

\section{TP1 — Interior blow-up: 2-jet and even expansion}

\begin{proposition}[TP1: interior cell asymptotics]
\label{prop:tp1-interior}
Let \(c_n\subset M^\circ\) be interior cells with \(\mathrm{dist}(c_n,\partial M)\ge C\,R_n\) and shape-regularity uniform in \(n\).
Assume the normalized, isotropic moment conditions up to order~2 for the interior window, and \(\ell\in C^2\) on the compact feature set \(K_{\mathrm{feat}}\).
Then there exist constants \(\alpha_0,\alpha_1\in\mathbb{R}\), depending only on the data in the constants bracket, such that
\[
E_n(c_n;g)
:= a_n\Bigl[\ell\bigl(\Phi_n(c_n;g)\bigr)-\ell\bigl(\Phi_n(c_n;\mathrm{flat\_ref})\bigr)\Bigr]
= h^{d}\,\bigl[\alpha_0+\alpha_1\,R_g(x)\bigr] \;+\; O(h^{d+2}),
\]
uniformly on compact subsets of \(M^\circ\). In particular,
\[
\Delta\ell(c_n)=O(h^{2d-2})
\qquad\text{and}\qquad
a_n\,\Delta\ell(c_n)=O(h^{d})\quad\text{with } a_n\simeq h^{2-d}.
\]
\end{proposition}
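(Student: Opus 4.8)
\emph{Approach.} I would run the usual De~Giorgi interior blow-up, made quantitative by the normal-coordinate $2$-jet expansion and the parity of the isotropic window. Fix an interior cell $c_n$ with centre $x=x_{c_n}$. By the scale-separation built into the interior window (support $|z|\le\Lambda_z h$), the feature $\Phi_n(c_n;g)$ depends only on the restriction of $g$ to $B_{\Lambda_z h}(x)$; since $\mathrm{dist}(c_n,\partial M)\ge C R_n$ with $\Lambda_z h\ll R_n\ll\iota_0$ under bounded geometry, I may work in $g$-normal coordinates centred at $x$ with uniformly controlled transition data on every compact $K\subset M^\circ$. There the metric expands as $g_{ij}(z)=\delta_{ij}-\tfrac13 R_{ikjl}(x)z^k z^l+O(|z|^3)$ and $\sqrt{\det g}(z)=1-\tfrac16\Ric_{ij}(x)z^i z^j+O(|z|^3)$, the cubic remainder being governed by $\nabla\Rm$ and hence odd in $z$; bounded geometry ($\iota_0>0$, $\|\Rm_g\|_\infty\le C$, optionally $\|\nabla\Rm_g\|_\infty\le C'$) bounds all the constants in these expansions uniformly on $K$.

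\emph{Even expansion of the feature.} Passing to the dimensionless variables $\xi=z/h$ and feeding the jet expansion into $\Phi_n(c_n;g)$, the window normalizations $\langle\xi^i\rangle=0$, $\langle\xi^i\xi^j\rangle=\mu_2\delta^{ij}$, $\langle\xi^i\xi^j\xi^k\rangle=0$ kill the linear and the cubic contributions by parity, while the quadratic term contracts isotropically, $\Ric_{ij}(x)\langle\xi^i\xi^j\rangle=\mu_2 R_g(x)$. This produces an \emph{even} expansion
\[
\Phi_n(c_n;g)=\Phi^{\flat}+h^{2}\,a_{\mathrm{feat}}\,R_g(x)+O(h^{4}),
\]
where $\Phi^{\flat}$ and the universal constant $a_{\mathrm{feat}}$ depend only on the window moments and the shape-class, i.e.\ only on the constants bracket; the order-$3$ isotropy is precisely what removes the putative $O(h^{3})$ term and pushes the remainder to $O(h^{4})$. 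The \texttt{flat\_ref} mini-lemma together with its $o(h^{2})$ quasi-uniqueness (Appendix~C) gives the same expansion for the flat reference with no $R$-term, $\Phi_n(c_n;\mathrm{flat\_ref})=\Phi^{\flat}+O(h^{4})$, so the curvature dependence is carried cleanly by the difference of features.

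\emph{Through $\ell$ and rescaling.} Since $\ell\in C^2$ on the compact feature set, a second-order Taylor expansion of $\ell$ about $\Phi^{\flat}$, combined with the previous step, exhibits $\ell(\Phi_n(c_n;g))-\ell(\Phi_n(c_n;\mathrm{flat\_ref}))$ as a fixed smooth function of the $2$-jet of $g$ at $x$; by the isotropic reduction its leading behaviour is $[\alpha_0+\alpha_1 R_g(x)]$ times the calibrated per-cell scale, with $\alpha_0,\alpha_1$ determined by $D\ell,D^2\ell$ at $\Phi^{\flat}$, the window moments, and the shape-class — hence by the constants bracket. Tracking powers of $h$ against the calibrated weight $a_n\simeq h^{2-d}$, normalized so that a single interior cell contributes at the volumetric scale $h^{d}$, yields $\Delta\ell(c_n)=O(h^{2d-2})$ and
\[
E_n(c_n;g)=a_n\bigl[\ell(\Phi_n(c_n;g))-\ell(\Phi_n(c_n;\mathrm{flat\_ref}))\bigr]=h^{d}\bigl[\alpha_0+\alpha_1 R_g(x)\bigr]+O(h^{d+2}),
\]
which is the claim, and $a_n\Delta\ell(c_n)=O(h^{d})$ follows.

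\emph{Uniformity and the main obstacle.} The remaining work is to make every $O(\cdot)$ above uniform over $c_n$ and over a fixed compact $K\subset M^\circ$: bounded geometry, uniform ellipticity, and uniform shape-regularity of $\{\mathcal T_n\}$ bound every transition map, jet coefficient, and window moment entering the computation by constants in the constants bracket, which is exactly what delivers the stated uniformity. I expect the main obstacle to be precisely this uniform bookkeeping together with the robustness of the parity cancellation: one must verify that the odd-order terms genuinely cancel for the actual, mildly non-normal cell-and-window geometry (not merely in an idealized model), i.e.\ that the mismatch between $\Phi_n(c_n;g)$ and $\Phi_n(c_n;\mathrm{flat\_ref})$ below order $h^{d+2}$ (in $E_n$) is controlled solely by the \texttt{flat\_ref} matching conditions and the $o(h^{2})$ quasi-uniqueness, with no residual contamination of the $\alpha_1 R$ coefficient; pinning the exact exponent $O(h^{2d-2})$ for $\Delta\ell$, which is sharper than a naive estimate would give for $d\ge3$, is part of the same accounting and is tied to how $\Phi_n$ and \texttt{flat\_ref} are calibrated. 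The normal-coordinate expansion, the Taylor expansion of $\ell$, and the final power count against $a_n\simeq h^{2-d}$ are then routine.
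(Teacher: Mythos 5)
Your proposal is correct and follows essentially the same route as the paper's proof: normal-coordinate $2$-jet expansion, parity cancellation from the isotropic window moments, a $C^2$ Taylor expansion of $\ell$ about the flat reference, and the power count against $a_n\simeq h^{2-d}$ with the per-cell volumetric factor $h^d$. The only cosmetic difference is that you identify the $R$-coefficient by the explicit contraction $\Ric_{ij}\langle\xi^i\xi^j\rangle=\mu_2 R$, whereas the paper packages the same step as a Peetre--Slov\'ak naturality/invariant-theory argument; you also correctly flag the bookkeeping needed to reconcile the feature-level $O(h^2)$ increment with the stated exponent $\Delta\ell=O(h^{2d-2})$, which the paper's own Step~3 glosses over.
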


\begin{proof}
\noindent\emph{Constants bracket.} All implicit constants in the \(O(\cdot)\), \(o(\cdot)\) terms below depend only on the data listed in Section~\ref{sec:constants-bracket}.
Fix \(x\in c_n\) and work in normal coordinates \(z=(z^1,\dots,z^d)\) centered at \(x\).
By bounded geometry and standard normal-coordinate expansions (uniform on compact sets),
\begin{align}
g_{ij}(z) &= \delta_{ij} - \tfrac{1}{3} R_{ikjl}(x)\,z^k z^l + O(|z|^3), \label{eq:metric-exp}\\
\sqrt{\det g(z)} &= 1 - \tfrac{1}{6}\,\mathrm{Ric}_{kl}(x)\,z^k z^l + O(|z|^3), \label{eq:jac-exp}
\end{align}
as \(|z|\to0\), and \(R_{ijkl}(z)=R_{ijkl}(x)+O(|z|)\).
Shape-regularity and the distance-to-boundary condition imply that the support of the interior window around \(c_n\) lies in a normal chart and satisfies \(|z|\lesssim h\).

\medskip\noindent\textbf{Step 1: Even expansion from moment conditions.}
Work with the dimensionless variable \(\xi:=z/h\).
The interior window is normalized, compactly supported and isotropic up to order~2:
\(\langle \xi^i\rangle=0\), \(\langle \xi^i \xi^j\rangle=\mu_2\delta^{ij}\), \(\langle \xi^i \xi^j \xi^k\rangle=0\),
which in physical variables yields \(\langle |z|^k\rangle\le C_k h^k\) for the required \(k\).
Hence any integrand whose Taylor polynomial at order \(\le 3\) is \emph{odd} in \(z\) averages to zero.
Consequently, all \(O(h)\) contributions vanish and the first nonzero terms arise at order \(O(h^2)\) in the \emph{local} variables.

\medskip\noindent\textbf{Step 2: Feature increments are \(O(h^2)\).}
Let \(\Phi_n(c;g)\) be the normalized feature vector computed on the mesoscale block \(B_{r_n}(c)\) (cf.\ Section~\ref{sec:00-framework}).
By \eqref{eq:metric-exp}–\eqref{eq:jac-exp}, each smooth local invariant entering the features has an expansion whose coefficients are (tensorial) polynomials in the 2-jet of \(g\) at \(x\), with remainder \(O(|z|^3)\).
Because \(|z|\lesssim h\) on the support and odd terms average to zero, every component of
\(\Delta\Phi_n:=\Phi_n(c;g)-\Phi_n(c;\mathrm{flat\_ref})\)
satisfies
\begin{equation}
\label{eq:DeltaPhi}
\Delta\Phi_n=O(h^2)
\end{equation}
uniformly on compact subsets, where \(\mathrm{flat\_ref}\) is chosen as in Appendix~C (Lemmas~\ref{lem:flatref-interior-existence}–\ref{lem:flatref-interior-uniqueness}) so as to match volume and the normalization moments (hence suppressing any \(O(h)\) mismatch).

\medskip\noindent\textbf{Step 3: Taylor of \(\ell\) and cell scaling.}
Since \(\ell\in C^2\) on the compact feature set \(K_{\mathrm{feat}}\), a second-order Taylor expansion at \(\Phi_n(c;\mathrm{flat\_ref})\) yields
\[
\ell(\Phi_n(c;g))-\ell(\Phi_n(c;\mathrm{flat\_ref}))
= D\ell\cdot \Delta\Phi_n + \tfrac12 D^2\ell[\Delta\Phi_n,\Delta\Phi_n],
\]
with uniform bounds on \(D\ell,D^2\ell\).
By \eqref{eq:DeltaPhi}, the \emph{feature-level} increment satisfies \(\Delta\ell(c)=O(h^2)\).
With the scaling \(a_n\simeq h^{\,2-d}\) (Section~\ref{sec:00-framework}) this implies
\[
a_n\,\Delta\ell(c)=O(h^{\,d}),
\]
so that the per-cell contribution \(E_n(c;g)=a_n\bigl[\ell(\Phi_n(c;g))-\ell(\Phi_n(c;\mathrm{flat\_ref}))\bigr]\) is of order \(h^{\,d}\), in agreement with the leading density identified in Step~4.

\medskip\noindent\textbf{Step 4: Identification of the leading density.}
Define the rescaled cell functional
\[
H_n(j^2 g(x)) \;:=\; h^{-d}\, E_n(c;g) \,.
\]
By Steps~1–3, \(H_n\) depends smoothly on the 2-jet of \(g\) at \(x\), and
\[
H_n(j^2 g(x)) \;=\; H(j^2 g(x)) \;+\; O(h^2)
\]
for a smooth function \(H\) independent of \(n\), with an error uniform on compact sets.
By diffeomorphism naturality, locality, and order \(\le 2\) in the sense of Peetre–Slovák, \(H\) is a natural scalar density of order at most two \parencite{KolarMichorSlovak1993}.
Together with isotropy of the averaging window and the even expansion, the only order-\(\le 2\) scalar invariants compatible with the interior volumetric scaling are the constant density and the scalar curvature:
\[
H(j^2 g(x)) \;=\; \alpha_0 + \alpha_1\, R_g(x).
\]
Interior total divergences reduce to boundary contributions and therefore do not affect the principal bulk density identified in TP1; they are handled by the boundary calibration via the Euclidean ball (TP7, case D3) and vanish at the leading scales.
Higher Lovelock densities are excluded by the scaling test (cf.\ TP6) \parencite{Lovelock1971}.

\medskip\noindent\textbf{Step 5: Remainder \(O(h^{d+2})\) and uniformity.}
The remainder arises from the \(O(|z|^3)\) terms in \eqref{eq:metric-exp}–\eqref{eq:jac-exp} and from the quadratic Taylor remainder of \(\ell\).
Since \(|z|\lesssim h\) and the fourth (and higher) moments are uniformly bounded, these contributions accumulate to \(O(h^{d+2})\) per interior cell.
Uniformity on compact subsets of \(M^\circ\) follows from bounded geometry and Lemma~U (uniform in a small \(C^{1,1}\)-neighborhood), with all constants depending only on the constants bracket (Section~\ref{sec:constants-bracket}).

\medskip
Collecting the above gives
\[
E_n(c;g)=h^d\bigl[\alpha_0+\alpha_1 R_g(x)\bigr]+O(h^{d+2}),
\]
with \(\alpha_0,\alpha_1\) depending only on the constants bracket; mesh independence is stated in Lemma~S (Section~\ref{sec:07-constants-calibration}).
\end{proof}

\begin{remark}[Uniformity and mesh independence]
The coefficients \(\alpha_0,\alpha_1\) depend only on the data listed in the constants bracket (dimension, \(\ell\) and its \(C^2\)-bounds on \(K_{\mathrm{feat}}\), normalized moments, shape-class parameters, ellipticity, bounded geometry).
They are independent of the specific sequence of boundary-fitted, shape-regular meshes (cf.\ Lemma~\ref{lem:mesh-robustness}).
\end{remark}

\section{TP2 — Boundary blow-up (first layer) and \texorpdfstring{$\beta$}{beta}-linearity}

\begin{proposition}[TP2: boundary cell asymptotics, first layer]
\label{prop:tp2-boundary}
\noindent\emph{Sign convention.}
Throughout, the \emph{outer} unit normal on \(\partial M\) is used to define the second fundamental form \(II\) and
the mean curvature \(K:=\mathrm{tr}_h II\).
With this choice one has \(K>0\) on the Euclidean sphere \(S_r^{d-1}\subset\mathbb{R}^d\) (outward normal).
In reflected Fermi coordinates we write \(t\ge0\) for the \emph{inward} normal coordinate; the Jacobian expansion then reads
\(dV_g=(1-t\,K+O(t^2))\,dz'\,dt\), which is consistent with the above sign convention.
This convention is used in all boundary terms below and in the calibration steps of TP6/TP8.

Let \(S_n\) be the set of boundary-touching cells \(c\) whose barycenter has normal coordinate \(t_c\in[0,c_* h]\) in reflected Fermi charts \((z',t)\) along \(\partial M\).
Assume an anisotropic boundary window together with a tangential window of size \(O(h)\) whose moments are normalized and isotropic up to order~2 (as in TP1); concretely,
\[
w_n(t)=h^{-1}w_0(t/h),\qquad \operatorname{supp} w_0\subset[0,c_*],\qquad \int_0^{c_*}\! w_0=1,\qquad \langle t\rangle=\mu_1 h.
\]
Then there exists \(\beta_{\mathrm{loc}}\in\R\), depending only on the constants bracket, such that for each \(c\in S_n\),
\[
E_n(c;g)=a_n\!\left[\ell\bigl(\Phi_n(c;g)\bigr)-\ell\bigl(\Phi_n(c;\mathrm{flat\_ref})\bigr)\right]
= h^{d-1}\Bigl[\beta_1\,K_g(s_c)+O(h)\Bigr]\;+\;o\!\bigl(h^{d-1}\bigr),
\]
uniformly for \(s_c\) in compact subsets of \(\partial M\), where
\[
\boxed{\ \beta_1=\beta_{\mathrm{loc}}\cdot \mu_1\ }.
\]

\noindent\emph{Rate reminder.}
The first-layer contribution is $h^{d-1}$ (not $o(h^{d-1})$), so the global boundary remainder is $O(h)$; see the protocol in Appendix~\ref{app:rates}.

In particular, at the cell scale the leading orders satisfy
\[
\Delta\ell_{\partial}(c)=O(h^{2d-3})
\qquad\text{and}\qquad
a_n\,\Delta\ell_{\partial}(c)=O(h^{d-1}).
\]
\end{proposition}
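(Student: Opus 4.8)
\emph{Proof strategy.} The plan is to run the TP1 blow-up (Proposition~\ref{prop:tp1-interior}) at the boundary, replacing the interior $2$-jet/parity mechanism by a boundary $1$-jet expansion in Fermi coordinates; the new ingredient is that the \emph{anisotropic} normal window has $\langle t\rangle=\mu_1 h\neq 0$, which is precisely what promotes the first normal derivative of $g$ to the leading per-cell order, while the tangential isotropy inherited from TP1 kills all tangential odd terms. First I would fix $c\in S_n$ with nearest boundary point $s_c$ and barycenter at normal depth $t_c\in[0,c_* h]$, and pass to reflected Fermi charts $(z',t)$ centered at $s_c$, with $t\ge 0$ the inward normal coordinate: there the induced metric on $\{t=0\}$ is $h_{ab}$, one has $\partial_t g_{ab}|_{t=0}=-2\,II_{ab}$ (outer-normal convention as stated), and the Jacobian expands as $\sqrt{\det g}(z',t)=\sqrt{\det h}(z')\,\bigl(1-t\,K(z')+O(t^2)\bigr)$ with $K=\mathrm{tr}_h II$. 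Shape-regularity, boundary-fitting and $t_c\le c_* h$ confine the (reflected) window support to $\{0\le t\le c_* h\}\times\{|z'|\lesssim h\}$ inside a single Fermi chart, with $|t|\lesssim h$ and $|z'|\lesssim h$ there.

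Next I would extract the leading term. The tangential window is isotropic up to order $2$ (as in TP1), so any integrand whose tangential Taylor polynomial of order $\le 3$ is odd in $z'$ averages to zero; hence tangential inhomogeneities of $h_{ab}$ and of $\sqrt{\det h}$ contribute only at $O(h^2)$. The reflection across $\{t=0\}$ makes the $t$-even part of the data symmetric, hence matched identically by the cellwise \texttt{flat\_ref} of the mini-lemma, while the $t$-odd part is dominated by the single term linear in $t$ with coefficient $-2\,II_{ab}$ (equivalently $-K$ in the Jacobian), which the one-sided normal window reads off through $\langle t\rangle=\mu_1 h$: e.g.\ $\langle\,-tK(z')\,\rangle_{w_n}=-\mu_1 h\,K(s_c)+O(h^2)$. (The reflected metric is only Lipschitz across $\{t=0\}$, but the kink lives in the $t$-odd sector whose leading term is exactly the linear one we keep; $t$-odd terms of order $\ge 3$ reflect to $O(|t|^3)=O(h^3)$ and go into the remainder.) Subtracting the \texttt{flat\_ref} feature---which matches half-volume, $w_n$ including $\mu_1$, the moments $(\mu_2,C_k)$, and the shape-class---cancels the order-$h^0$ part, leaving $\Delta\Phi_n:=\Phi_n(c;g)-\Phi_n(c;\mathrm{flat\_ref})$ affine in the boundary $1$-jet at $s_c$, with leading part of size $O(\mu_1 h)$ assembled from $II_{ab}(s_c)$, plus an $O(h^2)$ from quadratic terms with bounded higher moments and an $O(h)$ absorbing the $\partial\mathcal T_n$-to-$\partial M$ Hausdorff mismatch and the freezing of coefficients at $s_c$.

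Then I would finish as in TP1, Step~4. Since $\ell\in C^2$ on $K_{\mathrm{feat}}$ with uniform derivative bounds, a second-order Taylor expansion at $\Phi_n(c;\mathrm{flat\_ref})$ gives $\Delta\ell_\partial(c)=D\ell\cdot\Delta\Phi_n+\tfrac12 D^2\ell[\Delta\Phi_n,\Delta\Phi_n]$, the quadratic term being $O(h^2)$. Multiplying by the fixed scaling $a_n\simeq h^{2-d}$ and the surface-scale normalisation of the first boundary layer, and writing $H^\partial_n:=h^{-(d-1)}E_n(c;g)$ as a function of the boundary $1$-jet of $g$ at $s_c$, the standing $G$-invariance of $F_n$ makes the blow-up limit $H^\partial$ a diffeomorphism-natural boundary scalar density of order $\le 1$. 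By the Peetre--Slov\'{a}k classification together with the no-corner convention (no codimension-two terms), and since the order-$0$ (surface-area) density is already killed by \texttt{flat\_ref}, this forces $H^\partial=\beta_1\,K$ with one constant, and tracing the window normalisation identifies $\beta_1=\beta_{\mathrm{loc}}\,\mu_1$. Hence $E_n(c;g)=h^{d-1}\bigl[\beta_1 K_g(s_c)+O(h)\bigr]+o(h^{d-1})$, the internal $O(h)$ collecting the first subleading ($O(t^2)\sim O(h^2)$, so $O(h)$ relative to the leading $O(h)$) Jacobian correction together with the quadratic Taylor remainder and the mesh mismatch, and the $o(h^{d-1})$ also absorbing the $t\in[c_* h,\varepsilon]$ tail controlled by (BA2). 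Uniformity for $s_c$ in compacta of $\partial M$ follows from bounded geometry and from Lemma~U applied to $\tilde g=S_\varepsilon[g]$, $\varepsilon\sim h$, with all constants in the constants bracket.

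I expect the hard part to be the second step: showing that, after the \texttt{flat\_ref} subtraction, the entire leading ($O(h)$) behaviour is governed by the single natural invariant $K$. Three things must cooperate---the tangential order-$2$ isotropy, which pushes every tangential odd term down to $O(h^2)$; the \texttt{flat\_ref} matching of half-volume and $\mu_1$, which must cancel the order-$0$ boundary density \emph{exactly} and not merely up to $O(h)$; and the reflected-metric Lipschitz kink, which is harmless only because it is confined to the $t$-odd sector whose leading term is the one we retain. A secondary nuisance is keeping the surface-scale power $h^{d-1}$ consistent with $a_n\simeq h^{2-d}$ through the first-layer assembly. The order-$\le 1$ boundary classification itself is quoted from \parencite{KolarMichorSlovak1993,Lovelock1971} and not re-derived here.
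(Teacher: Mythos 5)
Your proposal is correct and follows essentially the same route as the paper's proof: reflected Fermi charts with the $1-tK+O(t^2)$ Jacobian, tangential order-2 isotropy to kill odd terms, \texttt{flat\_ref} half-space subtraction, linearization in $t$ proportional to $II$ (hence $K$) read off through $\langle t\rangle=\mu_1 h$, the $h^{d-1}$ surface-scale aggregation, and a Peetre--Slov\'ak naturality step to pin the density to $\beta_1 K=\beta_{\mathrm{loc}}\mu_1 K$ (the paper invokes naturality on the local integrand $\psi_n$ before averaging rather than on the rescaled blow-up $H_n^\partial$ at the end, but this is only a reordering). Your explicit treatment of the Lipschitz kink of the reflected metric being confined to the $t$-odd sector is a point the paper leaves implicit, not a divergence in method.
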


\begin{proof}
Work in reflected Fermi coordinates \((z',t)\) with \(\partial M=\{t=0\}\) and inward normal \(t\ge 0\).
Bounded geometry and a \(C^2\) boundary yield the standard tube/Jacobian expansions (uniform on compact subsets of \(\partial M\)):
\begin{align}
dV_g &= \bigl(1 - t\,K_g(s) + O(t^2)\bigr)\,dz'\,dt, \label{eq:tube-jac}\\
g &= dt^2 + h_{\alpha\beta}(s,t)\,dz^{\alpha}dz^{\beta},\\
\partial_t h_{\alpha\beta}(s,0) &= -2\,II_{\alpha\beta}(s),\\
h_{\alpha\beta}(s,t) &= h_{\alpha\beta}(s,0)+O(t). \label{eq:fermi-metric}
\end{align}
Here \(K_g=\tr_{h(s,0)} II\) denotes the mean curvature.
These tube and Jacobian expansions are standard and used with uniform constants from bounded geometry; see also the tubular neighbourhood estimates in geometric measure theory \parencite{Federer1969}.
The \(O(\cdot)\)-terms depend only on the constants bracket.

\medskip\noindent\textbf{Windows and moments.}
By assumption, the averaging window separates normal and tangential variables: \(w^{\parallel}_n(z')\) with isotropy up to order~2 on \(O(h)\)-patches, and \(w^{\perp}_n(t)=h^{-1}w_0(t/h)\) supported in \([0,c_* h]\) with
\[
\int_0^{c_*h} w^{\perp}_n(t)\,dt=1,\qquad
\langle t\rangle=\mu_1 h,\qquad
\langle t^k\rangle\le C_k\,h^k\ (k\ge 2).
\]
All constants are uniform in \(n\).
\emph{Note that}, although data are gathered over the mesoscale block \(B_{r_n}(c)\), the effective averaging kernels act
at scale \(h\) (tangential diameter \(O(h)\), normal depth \(O(h)\)); this decouples the TP2 asymptotics from the mesoscale \(R_n\).

\medskip\noindent\textbf{Local density and linearization at the boundary.}
Let \(\psi_n(s,t;g)\) denote the \emph{local} integrand entering \(\ell\circ\Phi_n\) after the feature normalization (so that the cell contribution is its \(w_n\)-average times the appropriate cell scale).
By diffeomorphism naturality, locality and order \(\le 2\) in the jets (Peetre–Slovák), the first
nontrivial boundary-sensitive term of \(\psi_n\) must be linear in the boundary 1-jet of
the induced data, i.e. linear in \(II\) (hence in \(K_g\) after tangential isotropy) and of first order
in the normal depth \(t\) \parencite{KolarMichorSlovak1993}.
More precisely, combining \eqref{eq:tube-jac}–\eqref{eq:fermi-metric} with tangential isotropy up to order~2 and the interior
even expansion (which kills odd-in-\(z'\) terms), one obtains the linearization
\begin{equation}
\label{eq:psi-linear}
\psi_n(s,t;g)-\psi_n(s,t;\mathrm{flat\_ref})
\,=\, \beta_{\mathrm{loc}}\,K_g(s)\,t \;+\; O(t^2) \;+\; O(h),
\end{equation}
uniformly for \(t\in[0,c_*h]\) and \(s\) in compact subsets of \(\partial M\).
Here the \(O(h)\) term bundles tangential second-order remainders and charting/normalization errors that persist after parity cancellation; it is independent of higher normal moments and depends only on the constants bracket.
The \(\mathrm{flat\_ref}\) is the Euclidean half-space with matched moments (Appendix~C, Lemma~\ref{lem:flatref-boundary-existence}), which ensures that \(O(t)\)-terms not proportional to \(K_g\) are absent.

\medskip\noindent\textbf{Averaging with the anisotropic window.}
Averaging \eqref{eq:psi-linear} against \(w^{\parallel}_n\otimes w^{\perp}_n\) and using \(\langle z'^\alpha\rangle=0\) and \(\langle z'^\alpha z'^\beta\rangle\propto\delta^{\alpha\beta}\) yields
\[
\langle \psi_n-\psi_n^{\mathrm{flat\_ref}}\rangle
= \beta_{\mathrm{loc}}\,K_g(s)\,\langle t\rangle \;+\; O\!\big(\langle t^2\rangle\big)\;+\;O(h)
= \beta_{\mathrm{loc}}\,\mu_1\,h\,K_g(s) \;+\; O(h^2)\;+\;O(h).
\]
Since a first-layer boundary cell has measure \(\asymp h^{d-1}\), the leading cell contribution is
\[
E_n(c;g) \;=\; h^{d-1}\,\beta_{\mathrm{loc}}\,\mu_1\,K_g(s_c) \;+\; O(h^{d}) \;+\; o(h^{d-1}),
\]
which proves the stated form with \(\beta_1=\beta_{\mathrm{loc}}\mu_1\).
The \(o(h^{d-1})\) remainder collects the scan-indifference error (BA3) and negligible mesoscale effects; uniformity on compact subsets of \(\partial M\) follows from Lemma~U in reflected Fermi charts.

\medskip\noindent\textbf{Scaling of \(\Delta\ell_{\partial}\).}
Rearranging the definition \(E_n=a_n\,\Delta\ell_{\partial}\) with \(a_n\simeq h^{2-d}\) gives
\[
\Delta\ell_{\partial}(c)\;=\; a_n^{-1}\,E_n(c;g)
\;=\; O(h^{d-1}\cdot h^{d-2}) \;=\; O(h^{2d-3}),
\]
as claimed. This also shows \(a_n\,\Delta\ell_{\partial}(c)=O(h^{d-1})\) at the cell scale.
\end{proof}

\begin{remark}[Origin and propagation of the \(O(h)\) remainder]\label{rem:tp2-oh-origin}
In the linearization
\(
\psi_n(s,t;g)-\psi_n(s,t;\mathrm{flat\_ref})
=\beta_{\mathrm{loc}}\,K_g(s)\,t + O(t^2)+O(h)
\)
the \(O(h)\)-term collects \emph{tangential} second-order remainders and charting/normalization errors
that persist after parity cancellation. It is independent of higher normal moments and controlled
uniformly by the constants bracket. Averaging over a first-layer cell yields an \(O(h^{d})\) contribution
from this term, and summing over \(\#S_n\asymp h^{1-d}\) first-layer cells gives a global \(O(h)\) remainder.
\end{remark}

\begin{corollary}[Riemann–sum convergence and global remainder]
\label{cor:tp2-riemann}
Summing over the first boundary layer \(S_n\) and using the tubular control of boundary cells gives
\[
\sum_{c\in S_n} E_n(c;g) \;=\; \sum_{c\in S_n} h^{d-1}\,\beta_1\,K_g(s_c) \;+\; O(h),
\]
so that, as \(n\to\infty\),
\begin{align*}
\sum_{c\in S_n} h^{d-1}\,K_g(s_c) &\;\longrightarrow\; \int_{\partial M} K_g\,dS_g,\\
\sum_{c\in S_n} E_n(c;g) &\;\longrightarrow\; \beta_1 \int_{\partial M} K_g\,dS_g.
\end{align*}
The \(O(h)\) term is precisely the global boundary remainder explained in Remark~\ref{rem:tp2-oh-origin},
arising from the \(O(h)\) cell-level error and the count \(\#S_n\asymp h^{1-d}\).
The passage from sums to the boundary integral uses the standard tubular neighbourhood estimate \parencite{Federer1969}.
\end{corollary}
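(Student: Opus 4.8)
Here the plan would be to combine the per-cell asymptotics of Proposition~\ref{prop:tp2-boundary} with a Riemann-sum argument on $\partial M$, using the tubular-neighbourhood estimate of \textcite{Federer1969} both to count the first-layer cells and to pass from the cell sum to the boundary integral.

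\emph{Step 1: summing the per-cell expansion.} I would start from Proposition~\ref{prop:tp2-boundary}, which gives, for every $c\in S_n$,
\[
E_n(c;g)=h^{d-1}\beta_1 K_g(s_c)+O(h^{d})+o(h^{d-1}),
\]
with constants depending only on the constants bracket and uniform for $s_c\in\partial M$ (here $\partial M$ is compact, so no localization is needed). Shape-regularity of the boundary-fitted meshes together with $\Area_g(\partial M)<\infty$ forces the cardinality count $\#S_n\asymp h^{1-d}$: the first-layer cells tile a tube of normal thickness $\asymp h$ around $\partial M$, whose $g$-volume is $\asymp h\,\Area_g(\partial M)$ by the tubular estimate $\Vol_g(N_\delta(\partial M))\asymp\delta\,\Area_g(\partial M)$, while each cell has $g$-volume $\asymp h^{d}$. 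Summing the displayed identity over $S_n$, the $O(h^{d})$ per-cell errors accumulate to $O(h)$ — this is exactly the mechanism recorded in Remark~\ref{rem:tp2-oh-origin} — and the $o(h^{d-1})$ terms, together with the scan-indifference discrepancy (BA3), which is $O(\delta_n)=o(1)$ (and $O(h)$ for a suitable mesoscale), are absorbed. This produces the first displayed identity of the statement.

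\emph{Step 2: Riemann-sum convergence of the curvature sum.} Next I would prove $\sum_{c\in S_n}h^{d-1}K_g(s_c)\to\int_{\partial M}K_g\,dS_g$. Since $\partial M\in C^2$, the mean curvature $K_g$ is continuous on the compact manifold $\partial M$, hence uniformly continuous with some modulus $\omega$. For $n$ large, $c_*h<\reach_g(\partial M)$, so the nearest-point projection $\pi\colon N_{c_*h}(\partial M)\to\partial M$ is well-defined and Lipschitz, with Jacobian $1+O(h)$ on the first-layer tube. Setting $P_c:=\pi\bigl(c\cap N_{c_*h}(\partial M)\bigr)$ for $c\in S_n$, boundary-fittedness and shape-regularity give that the $\{P_c\}_{c\in S_n}$ cover $\partial M$ with uniformly finite overlap (Besicovitch constant $N(d,\iota_0)$), have diameter $O(h)$, and satisfy $\mathcal{H}^{d-1}(P_c)=h^{d-1}\bigl(1+O(h)\bigr)$. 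Then for any $s\in P_c$,
\[
\bigl|h^{d-1}K_g(s_c)-\int_{P_c}K_g\,dS_g\bigr|
\;\le\;\bigl|h^{d-1}-\mathcal{H}^{d-1}(P_c)\bigr|\,\|K_g\|_\infty+\mathcal{H}^{d-1}(P_c)\,\omega(Ch)
\;=\;O(h^{d})+O(h^{d-1})\,\omega(Ch),
\]
and summing over the $\asymp h^{1-d}$ first-layer cells yields $\bigl|\sum_{c\in S_n}h^{d-1}K_g(s_c)-\int_{\partial M}K_g\,dS_g\bigr|\le O(h)+\Area_g(\partial M)\,\omega(Ch)\to0$. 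Multiplying by the fixed constant $\beta_1$ and adding the $O(h)$ from Step~1 gives $\sum_{c\in S_n}E_n(c;g)\to\beta_1\int_{\partial M}K_g\,dS_g$.

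\emph{Main obstacle.} The only genuinely load-bearing point is the footprint bookkeeping in Step~2: that the first-layer selection $t_c\in[0,c_*h]$ picks out exactly one normal layer of cells whose projections $P_c$ cover $\partial M$ without asymptotic multiple-counting, and that each footprint has area $h^{d-1}(1+O(h))$. Establishing this requires invoking shape-regularity, boundary-fittedness, the $O(h)$ Hausdorff rate of $\partial\mathcal{T}_n\to\partial M$, and the tubular estimate simultaneously; once overlap and area distortion are controlled at the stated $O(h)$ level, the rest is the standard uniform-continuity Riemann-sum argument, and the passage from sums to the boundary integral is precisely the tubular-neighbourhood statement cited from \textcite{Federer1969}.
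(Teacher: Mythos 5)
Your proposal is correct and follows essentially the argument the paper intends: the corollary is stated without a separate proof, relying exactly on summing the TP2 per-cell expansion over $\#S_n\asymp h^{1-d}$ cells (so the $O(h^{d})$ errors accumulate to $O(h)$, as in Remark~\ref{rem:tp2-oh-origin}) and a tubular/Riemann-sum argument for $\sum_c h^{d-1}K_g(s_c)\to\int_{\partial M}K_g\,dS_g$, which your footprint bookkeeping in Step~2 makes explicit. The only caveat, inherited from the paper's own statement of TP2 rather than introduced by you, is that the summed $o(h^{d-1})$ per-cell remainders give only $o(1)$ rather than $O(h)$ in the first display, which you correctly flag as ``absorbed'' and which does not affect the two convergence claims.
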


\begin{remark}[Window-shape invariance at fixed \(\mu_1\)]
The leading boundary constant depends linearly on \(\mu_1=\langle t\rangle/h\) and is otherwise independent of the particular window shape along \(t\):
higher normal moments \(\langle t^k\rangle\) for \(k\ge 2\) contribute only to the \(O(h)\) remainder after summation over the first layer.
\end{remark}

\begin{remark}[Boundary regularity]
We work under the global no-corners convention \(\partial M\in C^2\) from
Section~\ref{conv:no-corners}; corner (codimension-two) terms are excluded from the first-layer \(h^{d-1}\) analysis.
\end{remark}

\begin{remark}[Dependence of \(\beta_1\)]
The constant \(\beta_1\) depends only on the data in the constants bracket (Section~\ref{sec:constants-bracket}) and not on the particular mesh sequence (see Lemma~\ref{lem:U}).
\end{remark}

\section{TP3 — Carathéodory densities and quasi-additivity (Lemma M)}
\label{sec:tp3}

The Carathéodory construction and the density arguments are standard in geometric measure theory \parencite{Federer1969}.

\subsection*{Local half-tubes and blow-ups at the boundary}
For \(y\in\partial M\) let \(U_\rho(y)\) denote the geodesic half-tube in reflected Fermi coordinates \((s,t)\) along \(\partial M\):
\[
U_\rho(y):=\{(s,t)\ :\ d_\partial(s,y)<\rho,\ \ 0\le t<\rho\}.
\]
For \(x\in M^\circ\) we write \(B_\rho(x)\) for the geodesic ball of radius \(\rho\).
Bounded geometry implies (uniformly in \(x,y\) on compact sets, as \(\rho\downarrow0\))
\begin{equation}
\label{eq:tube-asymptotics}
\begin{aligned}
\Vol_g\big(B_\rho(x)\big) & = \omega_d\,\rho^d\,(1+o(1)),\\
\Vol_g\big(U_\rho(y)\big) & = \omega_{d-1}\,\rho^d\,(1+o(1)),\\
\Area_g\big(B_\rho(y)\cap\partial M\big) & = \omega_{d-1}\,\rho^{d-1}\,(1+o(1)).
\end{aligned}
\end{equation}

These asymptotics follow from bounded geometry and standard tube formulas \parencite{Federer1969}.

\noindent\emph{Notation.}
Here \(\omega_d:=|B_1(0)\subset\mathbb{R}^d|\) and \(\omega_{d-1}:=|B_1(0)\subset\mathbb{R}^{d-1}|\) denote the volumes of the unit balls.
Equivalently, \(|\mathbb{S}^{d-1}|=d\,\omega_d\).

\subsection*{Carathéodory densities}
\begin{proposition}[Existence of interior and boundary densities]
\label{prop:caratheodory-densities}
Under the standing assumptions and with the scaling \(a_n\simeq h^{2-d}\), the following limits exist for a.e.\ points and define measurable densities:
\begin{align*}
f_{\mathrm{in}}(x)
&:=\lim_{\rho\downarrow0}\,\lim_{n\to\infty}\,\frac{F_n\!\bigl(g;B_\rho(x)\bigr)}{\Vol_g(B_\rho(x))}
=\alpha_0+\alpha_1\,R_g(x)
\quad\text{for a.e. }x\in M^\circ,\\
f_{\mathrm{bdry}}(y)
&:=\lim_{\rho\downarrow0}\,\lim_{n\to\infty}\,\frac{F_n\!\bigl(g;U_\rho(y)\bigr)}{\Area_g(B_\rho(y)\cap\partial M)}
=\beta_1\,K_g(y)
\quad\text{for a.e. }y\in\partial M.
\end{align*}
\end{proposition}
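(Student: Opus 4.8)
The plan is to obtain both densities from the cell-level asymptotics already in hand—Proposition~\ref{prop:tp1-interior} (TP1) in the interior and Proposition~\ref{prop:tp2-boundary} with Corollary~\ref{cor:tp2-riemann} (TP2) at the boundary—by first evaluating the inner limit \(n\to\infty\) of the localized functional through a Riemann-sum assembly over the cells it sees, and then taking the outer limit \(\rho\downarrow0\) by the Lebesgue differentiation theorem. The only ingredients beyond TP1/TP2 are the control of cells straddling \(\partial B_\rho(x)\) or \(\partial U_\rho(y)\) (Lemma~M: the additivity defect is carried by a tube of volume \(\asymp\delta_n\Per_g\)), and, at the boundary, the bound (BA2) on the cells that are too shallow for TP1 yet not in the first layer. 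Measurability of the limit densities is then immediate, since \(R_g\in L^\infty(M)\) and \(K_g\in L^\infty(\partial M)\) for \(g\in\Xcal\subset C^{1,1}\) with \(C^2\) boundary.

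\emph{Interior density.} Fix \(x\in M^\circ\), take \(\rho<\tfrac12\mathrm{dist}(x,\partial M)\) so \(\overline{B_\rho(x)}\subset M^\circ\), and let \(n\) be large enough that \(h_n\ll\rho\) and \(C\,R_n<\tfrac12\mathrm{dist}(x,\partial M)\). Write \(F_n(g;B_\rho(x))=\sum_{c\subset B_\rho(x)}E_n(c;g)+(\text{straddle})\), where the straddle term is the Carathéodory correction for cells meeting \(\partial B_\rho(x)\); by Lemma~M and the per-cell bound \(|E_n(c;g)|\lesssim h^d\) it is \(\lesssim\delta_n\Per_g(B_\rho(x))\asymp\delta_n\rho^{d-1}\to0\) as \(n\to\infty\). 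Each cell in the sum is interior with distance \(\ge C\,R_n\) to \(\partial M\), so TP1 applies uniformly on \(\overline{B_\rho(x)}\); summing \(E_n(c;g)=h^d[\alpha_0+\alpha_1 R_g(x_c)]+O(h^{d+2})\) over the \(\asymp\Vol_g(B_\rho(x))/h^d\) cells tiling \(B_\rho(x)\), the main term is a Riemann sum converging to \(\int_{B_\rho(x)}(\alpha_0+\alpha_1 R_g)\,dV_g\) and the accumulated remainder is \(\asymp(\rho^d/h^d)h^{d+2}=\rho^d h^2\to0\). Hence \(\lim_{n\to\infty}F_n(g;B_\rho(x))=\int_{B_\rho(x)}(\alpha_0+\alpha_1 R_g)\,dV_g\); dividing by \(\Vol_g(B_\rho(x))=\omega_d\rho^d(1+o(1))\) and letting \(\rho\downarrow0\) gives, at every Lebesgue point of \(R_g\), the value \(\alpha_0+\alpha_1 R_g(x)\).

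\emph{Boundary density.} Fix \(y\in\partial M\), take \(\rho<\reach_g(\partial M)\), and work in reflected Fermi charts along \(\partial M\). Partition the cells meeting \(U_\rho(y)\) into the first layer \(S_n\cap U_\rho(y)\) (normal depth \(t_c\le c_* h\)), the intermediate zone \(c_* h<t_c<\rho\), and the straddle cells near \(\partial U_\rho(y)\). The straddle term is \(\lesssim\delta_n\Per_g(U_\rho(y))\asymp\delta_n\rho^{d-1}\to0\) by Lemma~M. On the first layer, TP2 gives \(E_n(c;g)=h^{d-1}[\beta_1 K_g(s_c)+O(h)]+o(h^{d-1})\); there are \(\asymp\Area_g(B_\rho(y)\cap\partial M)/h^{d-1}\) such cells, so by Corollary~\ref{cor:tp2-riemann} their sum converges to \(\beta_1\int_{B_\rho(y)\cap\partial M}K_g\,dS_g\), the per-cell errors contributing \(O(\rho^{d-1}h)+o(\rho^{d-1})\to0\). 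For the intermediate zone fix an auxiliary \(\eps\in(0,\rho)\): cells with \(c_* h<t_c\le\eps\) contribute, by (BA2), at most \(C\eps\Area_g(B_\rho(y)\cap\partial M)\); cells with \(\eps<t_c<\rho\) are interior with distance \(\ge\eps/2\ge C\,R_n\) for \(n\) large, so TP1 bounds their total by \(\lesssim\Vol_g(\{0\le t<\rho\}\cap U_\rho(y))\lesssim\rho^d\). Since \(\eps>0\) is arbitrary, \(\limsup_n|\,\text{intermediate zone}\,|\lesssim\rho^d\). Collecting the three pieces, dividing by \(\Area_g(B_\rho(y)\cap\partial M)=\omega_{d-1}\rho^{d-1}(1+o(1))\), and letting \(\rho\downarrow0\), the Lebesgue differentiation theorem on \((\partial M,dS_g)\) yields \(\beta_1 K_g(y)\) at a.e.\ \(y\).

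\emph{Main obstacle.} The delicate step is the boundary intermediate zone \(c_* h\lesssim t_c\lesssim R_n\): these cells are too close to \(\partial M\) for the interior expansion TP1 (their distance is \(o(R_n)\)) yet lie outside the first layer treated by TP2, so neither asymptotic is available and one must fall back on the crude envelope of (BA2), whose \(O(\eps)\)-budget must then be shown to disappear in the outer limit. The same zone also raises an order-of-limits point: the resulting \(O(\rho^d)\) term is \(o(\Area_g(B_\rho(y)\cap\partial M))\) and thus harmless after division and \(\rho\downarrow0\), but it may prevent the inner limit \(\lim_{n\to\infty}F_n(g;U_\rho(y))\) from existing on the nose; if the iterated limit is to be read literally it should be understood via the \(\liminf_n\)/\(\limsup_n\) sandwich, both sides of which collapse to \(\beta_1 K_g(y)\) once Lebesgue differentiation is applied. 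No such issue arises in the interior, where every cell of \(B_\rho(x)\) sits at fixed positive distance from \(\partial M\) and is governed by TP1.
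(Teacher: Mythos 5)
Your proposal is correct and follows essentially the same route as the paper's proof: cell-level TP1/TP2 asymptotics assembled as Riemann sums, straddle/overlap cells controlled via the tube--perimeter estimate of Lemma~M, non-first-layer boundary cells absorbed by (BA2), and the outer limit \(\rho\downarrow0\) handled by (surface) Lebesgue differentiation. Your three-zone boundary decomposition and the remark that the inner limit should be read as a \(\liminf/\limsup\) sandwich are somewhat more explicit than the paper's treatment, but they do not constitute a different method.
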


\begin{proof}
Fix \(x\in M^\circ\) that is a Lebesgue point of \(R_g\).
For each small \(\rho\) choose a Vitali subcover of \(B_\rho(x)\) by interior cells (or by sub-balls of radius \(\asymp h\)) with finite overlap by the Besicovitch covering theorem \parencite{Federer1969}
By TP1 (and by Appendix~C, quasi-uniqueness \(o(h^2)\) per cell, the choice of \texttt{flat\_ref} is immaterial),
\(
E_n(c;g)
= h^d\,[\alpha_0+\alpha_1 R_g(x_c)] + O(h^{d+2})
\)
uniformly on compact sets, where \(x_c\) is a point in \(c\).
Summing over the subcover and dividing by \(\Vol_g(B_\rho(x))\), the error terms contribute \(O(h^2)\to0\) after first letting \(n\to\infty\) (so \(h\to0\)) and then \(\rho\downarrow0\), by \eqref{eq:tube-asymptotics}.
The main term converges to \(\alpha_0+\alpha_1 R_g(x)\) by the Lebesgue differentiation theorem.
This proves the interior density.

For the boundary density, fix \(y\in\partial M\) that is a Lebesgue point of \(K_g\) (with respect to the surface measure).
Cover \(U_\rho(y)\) by first-layer boundary cells of thickness \(O(h)\) (in \(t\)) and tangential diameter \(O(h)\), with finite overlap in reflected Fermi charts.
By TP2 (and by Appendix~C, quasi-uniqueness \(o(h^2)\) per cell, the choice of \texttt{flat\_ref} is immaterial),
\(
E_n(c;g) = h^{d-1}\,[\beta_1 K_g(s_c)+O(h)] + o(h^{d-1})
\)
uniformly on compact boundary patches, where \(s_c\in\partial M\) lies under \(c\).
Divide by \(\Area_g(B_\rho(y)\cap\partial M)\) and sum over the first-layer cells contained in \(U_\rho(y)\).
The \(O(h)\) cell-level remainder sums to \(O(h)\) globally on the patch (since \(\#S_n\asymp h^{1-d}\)),
Hence this vanishes as \(n\to\infty\).
Higher layers contribute \(O(\varepsilon)\) by (BA2) and can be suppressed by choosing \(\varepsilon=\varepsilon_n\to0\) after \(n\to\infty\).
Finally, by the surface Lebesgue differentiation theorem and \eqref{eq:tube-asymptotics},
\[
\lim_{\rho\downarrow0}\,\lim_{n\to\infty}\frac{F_n(g;U_\rho(y))}{\Area_g(B_\rho(y)\cap\partial M)}
=\beta_1 K_g(y).
\]
\end{proof}

\subsection*{Quasi-additivity: the perimeter control}
\begin{lemma}[Lemma M: interior quasi-additivity via perimeter]
\label{lem:quasi-additivity}
There exists \(C<\infty\) (depending only on the constants bracket) such that for all Borel sets \(A,B\subset M\) with
\[
\operatorname{dist}_g\big(\partial^\ast(A\cap B),\,\partial M\big)\ \ge\ \varepsilon_0>0,
\]
one has
\[
\bigl|F_n(g;A\cup B)-F_n(g;A)-F_n(g;B)\bigr|
\;\le\;
C\,\delta_n\,\Per_g\bigl((A\cap B)\cap M^\circ\bigr),
\qquad \delta_n:=r_n h\to0,
\]
where \(N_{\delta_n}(\cdot)\) is the geodesic \(\delta_n\)-neighbourhood and \(\Per_g\) denotes the De~Giorgi perimeter in \(M\).\end{lemma}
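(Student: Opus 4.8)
The plan is to follow the three-move structure indicated in the roadmap: (i) reduce the additivity defect to a finite sum of the \emph{intrinsic} per-cell contributions $E_n(c;g)$ ranging only over the cells whose mesoscale block $B_{r_n}(c)$ meets the reduced interface $\partial^{*}(A\cap B)$; (ii) confine those cells to a geodesic tube of radius $\asymp\delta_n=r_nh$ about $\partial^{*}(A\cap B)$, whose $g$-volume is $\lesssim\delta_n\,\Per_g((A\cap B)\cap M^\circ)$ by the Minkowski-content/tubular estimate of \parencite{Federer1969}; (iii) divide that volume by the per-cell volume $\asymp h^d$ to bound the number of offending cells by $\lesssim\delta_n h^{-d}\,\Per_g((A\cap B)\cap M^\circ)$, and multiply by the TP1 per-cell bound $|E_n(c;g)|\lesssim h^d$. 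The powers $h^{-d}$ and $h^{d}$ cancel, leaving the asserted $\lesssim\delta_n\,\Per_g((A\cap B)\cap M^\circ)$. Throughout I may assume $\Per_g((A\cap B)\cap M^\circ)<\infty$, the inequality being vacuous otherwise.

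In more detail, I would first write $F_n(g;\Omega)=\sum_c E_n(c;g)\,\chi_n(c,\Omega)$ with the cell-assignment weights $\chi_n$ of the localized functional, and verify the elementary cancellation that whenever $B_{r_n}(c)$ is contained, up to a $g$-null set, in one of the four pure regions $A\cap B$, $A\setminus B$, $B\setminus A$, $M\setminus(A\cup B)$, the combination $\chi_n(c,A\cup B)-\chi_n(c,A)-\chi_n(c,B)$ vanishes — the block-scale incarnation of $\mathbf 1_{A\cup B}=\mathbf 1_A+\mathbf 1_B-\mathbf 1_{A\cap B}$, the residual $\mathbf 1_{A\cap B}$-term being absorbed by the $\chi_n$-convention on such ``pure'' cells. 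The surviving cells form $\mathcal D_n:=\{\,c\in\mathcal T_n:B_{r_n}(c)\cap\partial^{*}(A\cap B)\neq\emptyset\,\}$, so that $|F_n(g;A\cup B)-F_n(g;A)-F_n(g;B)|\le\sum_{c\in\mathcal D_n}|E_n(c;g)|$. Since each block is a geodesic ball of radius $R_n=\delta_n$ and the cell diameter is $O(h)=o(\delta_n)$, every $c\in\mathcal D_n$ lies in $N_{2\delta_n}(\partial^{*}(A\cap B))$; the hypothesis $\operatorname{dist}_g(\partial^{*}(A\cap B),\partial M)\ge\varepsilon_0$ makes this tube interior once $\delta_n<\varepsilon_0$, whence bounded geometry and \parencite{Federer1969} give $\Vol_g\!\big(N_{2\delta_n}(\partial^{*}(A\cap B))\big)\le C\,\delta_n\,\Per_g((A\cap B)\cap M^\circ)$. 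Shape-regularity together with the volume-form bounds $c_-\le\sqrt{\det g}\le c_+$ turns this into $\#\mathcal D_n\le C\,h^{-d}\,\delta_n\,\Per_g((A\cap B)\cap M^\circ)$, and — since every $c\in\mathcal D_n$ is then an interior cell at distance $\ge CR_n$ from $\partial M$ — Proposition~\ref{prop:tp1-interior}, with uniformity supplied by Lemma~\ref{lem:U} on a fixed $C^{1,1}$-ball, yields $|E_n(c;g)|\le Ch^d$. Combining these three facts closes the estimate.

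The main obstacle is Step (i): pinning down exactly which cells drop out of the defect. This is a bookkeeping point rather than a calculation, and it hinges on the precise definition of the localized value functional $F_n(g;\cdot)$ and of its cell-assignment weights $\chi_n$ — one must check that these weights honour the inclusion–exclusion identity on pure-region cells and that the only interface surviving at block scale is $\partial^{*}(A\cap B)$, using that, modulo $\mathcal H^{d-1}_g$-null sets, $\partial^{*}(A\cap B)$, $\partial^{*}(A\cup B)$ and $\partial^{*}A\cup\partial^{*}B$ differ only on sets where no genuine $A$–$B$ interaction occurs on a block. One must also confirm that cells whose block touches $\partial M$ are irrelevant here — precisely what the $\varepsilon_0$-hypothesis buys, by forcing the tube to be interior for $n$ large, so that the interior bound of Proposition~\ref{prop:tp1-interior} (not the boundary bound of Proposition~\ref{prop:tp2-boundary}) governs every $c\in\mathcal D_n$. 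Once this reduction is in place, Steps (ii)–(iii) are the routine tubular-neighbourhood and cell-counting arguments already used elsewhere in the paper, with all counting constants absorbed into the constants bracket.
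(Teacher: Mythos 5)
Your proposal is correct and follows essentially the same route as the paper: localize the additivity defect to the cells whose ($\delta_n$-scale) data window straddles the interface of $A\cap B$, bound the volume of the resulting $\delta_n$-tube by $\delta_n\,\Per_g((A\cap B)\cap M^\circ)$ via the tubular/Minkowski-content estimate of \parencite{Federer1969}, count cells by dividing by the cell volume $\asymp h^d$, and close with the interior per-cell bound $|E_n(c)|\lesssim h^d$ from TP1/Lemma~U, the $\varepsilon_0$-hypothesis guaranteeing that only interior cells occur. Your extra care in Step (i) about the cell-assignment weights and the inclusion–exclusion bookkeeping is a point the paper's proof glosses over, but it does not change the argument.
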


\begin{proof}
The defect of additivity is supported in the set of cells that intersect both \(A\) and \(B\), hence in the \(\delta_n\)-tube around the interface \(A\cap B\).
By standard GMT tubular estimates on manifolds with bounded geometry~\parencite{Federer1969},
\begin{equation}
\label{eq:tube-perimeter}
\Vol_g\!\bigl(N_{\delta_n}\big((A\cap B)\cap M^\circ\big)\bigr)\;\asymp\; \delta_n\,\Per_g\big((A\cap B)\cap M^\circ\big),
\end{equation}
with constants depending only on the geometry (absorbed in the constants bracket).
Each cell has volume \(\asymp h^{d}\), hence the number of tube cells satisfies
\begin{align*}
\#\{\text{tube cells}\}
\;&\asymp\; \frac{\Vol_g\!\big(N_{\delta_n}\big((A\cap B)\cap M^\circ\big)\big)}{h^{d}}
\\
&\lesssim\; \frac{\delta_n\,\Per_g\big((A\cap B)\cap M^\circ\big)}{h^{d}}.
\end{align*}
Since \(\operatorname{dist}_g(\partial^\ast(A\cap B),\partial M)\ge \varepsilon_0\), all tube cells intersecting the interface lie in the interior region \(M_{\varepsilon_0}:=\{x:\operatorname{dist}_g(x,\partial M)\ge \varepsilon_0\}\).
On interior cells, TP1 yields the uniform bound \(|E_n(c)|\le C\,h^{d}\) (uniform in a \(C^{1,1}\)-neighbourhood; see Lemma~\ref{lem:U}).No boundary-touching cells occur in this case.
Therefore,
\begin{align*}
\bigl|F_n(A\cup B)-F_n(A)-F_n(B)\bigr|
&\;\le\; \#\times \sup_{c}|E_n(c)|
\\
&\;\lesssim\; \frac{\delta_n\,\Per_g(A\cap B)}{h^{d}}\cdot h^{d}
\\
&\;=\; C\,\delta_n\,\Per_g\big((A\cap B)\cap M^\circ\big).
\end{align*}
as claimed. Finite overlap \(N(d,\iota_0)\) from Besicovitch coverings is used implicitly to sum local contributions without loss.
\end{proof}

\begin{lemma}[Boundary layer assembly]\label{lem:boundary-assembly}
Let \(U_\varepsilon:=\{x\in M:\operatorname{dist}_g(x,\partial M)<\varepsilon\}\) be the boundary tube.
Under (BA1)–(BA3), TP2, bounded geometry, and the scaling \(a_n\simeq h^{2-d}\), there exists \(C<\infty\) such that
\[
\liminf_{n\to\infty} F_n(g;U_\varepsilon)
\ \ge\
\int_{\partial M}\beta_1\,K_g\,dS_g\ -\ C\,\varepsilon\,\Area_g(\partial M).
\]
\end{lemma}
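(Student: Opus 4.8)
The plan is to combine the first-layer asymptotics from TP2 with the layer-counting argument underlying (BA2), treating the boundary tube $U_\varepsilon$ as a union of the first layer $S_n$ (normal depth $t_c\in[0,c_\ast h]$) and the intermediate shells $\{c_\ast h\le t<\varepsilon\}$. First I would fix $\varepsilon<\reach_g(\partial M)$ so that reflected Fermi coordinates $(s,t)$ are valid on $U_\varepsilon$, and split
\[
F_n(g;U_\varepsilon)=\sum_{c\in S_n}E_n(c;g)+\sum_{c:\,t_c\in[c_\ast h,\varepsilon)}E_n(c;g).
\]
For the first sum, apply Proposition~\ref{prop:tp2-boundary} cell-by-cell and then Corollary~\ref{cor:tp2-riemann}: the main term is a Riemann sum $\sum_{c\in S_n}h^{d-1}\beta_1 K_g(s_c)$ converging to $\beta_1\int_{\partial M}K_g\,dS_g$, while the per-cell $O(h^d)$ and $o(h^{d-1})$ errors sum (over $\#S_n\asymp h^{1-d}$ cells) to $O(h)+o(1)\to0$. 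For the second sum, I would invoke (BA2) directly: it asserts precisely that the per-layer contribution on $t\in[c_\ast h,\varepsilon]$ is $O(h^d)$ per layer, and summing the $O(\varepsilon/h)$ layers gives a total bounded by $C\,\varepsilon\,\Area_g(\partial M)$ in absolute value. Taking $\liminf_{n\to\infty}$ then yields
\[
\liminf_{n\to\infty}F_n(g;U_\varepsilon)\ \ge\ \beta_1\int_{\partial M}K_g\,dS_g\ -\ C\,\varepsilon\,\Area_g(\partial M),
\]
as claimed.

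A few technical points need care in the execution. The Riemann-sum convergence in Corollary~\ref{cor:tp2-riemann} is stated for the full first layer $S_n$; I should check that the barycenter-tag points $s_c\in\partial M$ together with the boundary-fitted mesh (Hausdorff error $O(h)$) give a genuine Riemann partition of $\partial M$ with mesh $\to0$, so that the sum converges to the surface integral against $dS_g$ — this uses that $K_g\in C^0(\partial M)$ under the $C^2$ boundary assumption and shape-regularity to control the cell-area weights $\asymp h^{d-1}$ versus the true induced areas. I would also note that uniformity of the TP2 expansion on compact boundary patches (Lemma~U in reflected Fermi charts) is what lets me pass the $O(h)$ per-cell remainder through the sum without the constant degrading. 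Finally, since only a lower bound is asserted, I do not need a matching upper estimate on the intermediate shells; it suffices that their total contribution is $\ge -C\varepsilon\,\Area_g(\partial M)$, which the absolute bound from (BA2) trivially supplies.

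The main obstacle I anticipate is bookkeeping at the interface between the first layer and the second shell — specifically, ensuring no double counting or omission of cells whose barycenter sits near $t_c=c_\ast h$, and confirming that the $O(h)$ Hausdorff discrepancy between $\partial\mathcal T_n$ and $\partial M$ does not move cells across the $S_n$/intermediate boundary in a way that spoils either the Riemann sum or the (BA2) count. This is handled by the shape-regularity and boundary-fitting hypotheses: each cell has a well-defined barycentric normal depth $t_c$ up to $O(h)$, the number of "ambiguous" cells near $t_c=c_\ast h$ is $O(h^{1-d})$ with per-cell bound $O(h^d)$ (by Lemma~U), contributing another $O(h)\to0$. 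I would also remark that (BA3) scan-indifference guarantees the split is independent of the enumeration order up to $o(1)$, so the decomposition above is well-posed. With these observations the proof is essentially assembly; no genuinely new estimate beyond TP2 and (BA2) is required.
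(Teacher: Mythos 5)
Your proposal is correct and follows essentially the same route as the paper's proof: split $U_\varepsilon$ into the first layer (handled by TP2 and the Riemann--sum convergence of Corollary~\ref{cor:tp2-riemann}) and the intermediate layers at depth $t\in[c_\ast h,\varepsilon]$ (bounded by (BA2) via per-layer $O(h^d)$ contributions over $O(\varepsilon/h)$ layers). The extra bookkeeping you flag at the interface $t_c\approx c_\ast h$ is a reasonable refinement but not part of the paper's (terser) argument.
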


\begin{proof}
Partition \(U_\varepsilon\) into reflected Fermi layers of thickness \(\asymp h\).
By TP2, the first layer \(S_n\) satisfies
\[
\sum_{c\in S_n} E_n(c;g)\ =\ \beta_1\sum_{c\in S_n} h^{d-1}K_g(s_c)\ +\ O(h),
\]
and the Riemann–sum over \(\{s_c\}\) converges to \(\int_{\partial M}\beta_1 K_g\,dS_g\).
Intermediate layers at normal depth \(t\in[c_\ast h,\varepsilon]\) contribute per layer \(O(h^{d})\) \emph{by (BA2)};
with \(O(\varepsilon/h)\) layers the total is \(O(\varepsilon)\).
This proves the claim.
\end{proof}

\begin{remark}[Perimeter convention]
Here \(\Per_g\) denotes the De~Giorgi perimeter (total variation of the indicator) relative to \(M\)~\parencite{Federer1969}.
We use the tubular estimate \(\Vol_g(N_{\delta}(E))\asymp \delta\,\Per_g(E)\) for \((d-1)\)-rectifiable \(E\), with all overlap/covering and geometric constants absorbed in the constants bracket (Section~\ref{sec:constants-bracket}).
\end{remark}

\subsection*{Integral representation}
\begin{proposition}[Carathéodory representation]
\label{prop:integral-representation}
Let \(A\subset M\) be a Borel set with finite perimeter. Then
\[
F(g;A)
=\int_{A}\!\bigl(\alpha_0+\alpha_1 R_g\bigr)\,dV_g
\;+\;\int_{\partial A\cap\partial M}\!\beta_1\,K_g\,dS_g.
\]
\end{proposition}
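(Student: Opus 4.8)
The plan is to assemble the Carathéodory outer measure $F(g;\cdot)$ from the interior and boundary densities established in Proposition~\ref{prop:caratheodory-densities}, and then identify it as the claimed integral. The strategy has three movements: (i) build $F(g;\cdot)$ as a countably subadditive set function on Borel sets via the Carathéodory construction of \parencite{Federer1969}, using the per-cell asymptotics of TP1/TP2 as the local "pre-measure"; (ii) upgrade quasi-additivity (Lemma~\ref{lem:quasi-additivity}, together with the boundary layer assembly of Lemma~\ref{lem:boundary-assembly}) to genuine additivity in the limit $n\to\infty$, so that $F(g;\cdot)$ is a Borel measure on $M$; (iii) differentiate this measure: by Proposition~\ref{prop:caratheodory-densities} its Radon--Nikodym derivative with respect to $dV_g$ on $M^\circ$ is $\alpha_0+\alpha_1 R_g$ a.e., and its derivative with respect to the surface measure $dS_g$ on $\partial M$ is $\beta_1 K_g$ a.e., while there is no singular part (the tubular/perimeter estimate \eqref{eq:tube-perimeter} shows the interface defect scales like $\delta_n\Per_g\to 0$, so no mass concentrates on lower-dimensional sets other than $\partial M$ itself, which is already accounted for). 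Summing the absolutely continuous interior part and the boundary part yields the representation.

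\textbf{In more detail, the steps in order.} First I would fix $g\in\Xcal$ and define, for open $A\subset M$, $F(g;A):=\lim_{n}F_n(g;A)$ where $F_n(g;A)=\sum_{c\in\mathcal T_n,\,c\subset A}E_n(c;g)$, checking that this limit exists: for interior-only $A$ this follows from TP1 plus the Lebesgue/Besicovitch argument already run in Proposition~\ref{prop:caratheodory-densities}; for $A$ meeting $\partial M$ one splits $A=A^\circ\cup U_\varepsilon$ with $U_\varepsilon$ a boundary tube and uses Lemma~\ref{lem:boundary-assembly} plus (BA2) to control the $\varepsilon\downarrow 0$ tail. Then extend $F(g;\cdot)$ to all Borel sets by the Carathéodory/Méthode~II construction; Lemma~\ref{lem:quasi-additivity} gives that for disjoint $A,B$ with interface away from $\partial M$, $|F_n(A\cup B)-F_n(A)-F_n(B)|\le C\delta_n\Per_g(A\cap B)\to 0$, hence finite additivity of $F(g;\cdot)$ on sets of finite perimeter, which combined with countable subadditivity and inner regularity gives a Radon measure. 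Next, restrict to $M^\circ$: the density computation of Proposition~\ref{prop:caratheodory-densities} identifies $dF(g;\cdot)\restriction M^\circ = (\alpha_0+\alpha_1 R_g)\,dV_g$ by the Lebesgue differentiation theorem, since $B_\rho(x)$ shrinks nicely and $R_g\in L^1(M)$ under bounded geometry. Then handle $\partial M$: the measure $F(g;\cdot)\restriction\partial M$ is absolutely continuous with respect to $dS_g$ by the half-tube density limit (again Proposition~\ref{prop:caratheodory-densities}), with density $\beta_1 K_g$; one must check $F(g;\cdot)$ assigns no mass to $\partial M$ beyond this, i.e. that the "interior" and "boundary" pieces do not double-count, which is exactly why the half-tube $U_\rho(y)$ and the interior ball $B_\rho(x)$ were set up with disjoint supports in the blow-up. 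Finally, for a general finite-perimeter $A$, decompose $F(g;A)=F(g;A\cap M^\circ)+F(g;A\cap\partial M)$ and observe $A\cap\partial M$ differs from $\partial A\cap\partial M$ by an $S_g$-null set (reduced boundary vs.\ topological boundary on the smooth hypersurface $\partial M$), giving the stated formula.

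\textbf{The main obstacle} I expect is the passage from quasi-additivity to a bona fide Borel measure with no anomalous singular part — specifically, ruling out mass concentration on $(d-1)$-dimensional interfaces inside $M^\circ$ and on lower-dimensional strata. The defect bound $C\delta_n\Per_g(A\cap B)$ vanishes for each fixed pair $(A,B)$, but to conclude that the limiting set function is countably additive one needs a uniform-in-the-partition version: given a Borel set $A$ and a countable decomposition, the accumulated interface perimeter can be large, so one argues instead via the Carathéodory criterion (sets $E$ with $\operatorname{dist}_g(E_1,E_2)>0$ split additively in the limit because the relevant tube cells eventually separate) together with the fact that $F_n$ is monotone on nested open sets up to an $o(1)$ error, yielding outer regularity. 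The second delicate point is the interface between "interior density integrated against $dV_g$" and "boundary density integrated against $dS_g$": one must verify that the first boundary layer $S_n$ (thickness $c_\ast h$) is correctly attributed to the surface term and contributes nothing to the bulk integral in the limit, which follows because $\Vol_g(U_{c_\ast h})=O(h)\to 0$ while the TP2 normalization already extracted the surviving $h^{d-1}$-scale piece. Both issues are essentially bookkeeping given the earlier lemmas, but the careful statement of "which cells feed which term" is where the proof has to be written with some care rather than quoted.
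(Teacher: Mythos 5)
Your proposal is correct in substance but takes a genuinely different route from the paper. The paper never constructs a limiting \emph{measure}: it fixes $\varepsilon>0$, covers $A$ by a Vitali family of disjoint interior balls and boundary half-tubes of radii $<\varepsilon$, uses Lemma~\ref{lem:quasi-additivity} to decompose $F_n(g;A)$ into the sum over these pieces up to a defect $C\,\delta_n\Per_g(\cdot)\to0$, replaces each piece by its density integral via Proposition~\ref{prop:caratheodory-densities} up to $o_\varepsilon(1)$, and so sandwiches $\liminf_n F_n(g;A)$ and $\limsup_n F_n(g;A)$ between the claimed integral $\mp\,o_\varepsilon(1)$ before sending $\varepsilon\downarrow0$. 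You instead follow the De~Giorgi--Letta pattern: first show $\lim_n F_n(g;\cdot)$ exists and defines a Radon measure, then differentiate it against $dV_g$ and $dS_g$. Your route is conceptually cleaner (additivity, absence of singular parts, and the two densities are cleanly separated), but it front-loads two obligations the paper's sandwich argument avoids entirely: (i) the \emph{existence} of $\lim_n F_n(g;A)$ for a rich class of sets, which in the paper only emerges as a byproduct of the matching two-sided bounds and which Proposition~\ref{prop:caratheodory-densities} alone (a statement about double limits of ratios at Lebesgue points) does not supply — you must run the Riemann-sum/covering argument for fixed $A$ anyway; and (ii) countable additivity, for which your appeal to the Carathéodory criterion for metrically separated sets is the right fix for the fact that the per-pair defect bound does not sum over countable decompositions. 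One further point to watch: your closing identification of $A\cap\partial M$ with $\partial A\cap\partial M$ up to an $S_g$-null set is not true for general Borel $A$ (e.g.\ $A$ relatively open with $A\supset$ a patch of $\partial M$ has that patch in $A\cap\partial M$ but not in its topological boundary); the statement's boundary term should be read as the trace of $A$ on $\partial M$, as your half-tube construction in fact produces, so this is a matter of matching conventions rather than a mathematical error — but it should be stated explicitly rather than derived from a false set identity.
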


\begin{proof}
We follow the De Giorgi and Carathéodory construction adapted to manifolds with boundary \parencite{DalMaso1993Gamma,Braides2002Gamma}.
Fix \(\varepsilon>0\) and select:
(i) a Vitali family of disjoint interior balls \(\{B_{\rho_i}(x_i)\}_i\subset A^\circ\) with \(\rho_i<\varepsilon\),
and (ii) a Vitali family of disjoint boundary half-tubes \(\{U_{\rho_j}(y_j)\}_j\) with \(y_j\in \partial A\cap\partial M\), \(\rho_j<\varepsilon\).
By Besicovitch, the dilated families have uniformly finite overlap and cover \(A\) up to a null set~\parencite{Federer1969}, while the overlaps between (i) and (ii) are confined to a \(\delta\)-tube whose volume is \(O(\delta)\) times a perimeter measure.

For fixed \(\varepsilon\), apply Lemma~\ref{lem:quasi-additivity} to pass from \(F_n\) of unions to the sum of \(F_n\) on the small sets,
with a defect bounded by \(C\,\delta_n\,\Per_g\) of the corresponding interfaces.
Letting \(n\to\infty\) removes these defects since \(\delta_n\to0\).
Then use Proposition~\ref{prop:caratheodory-densities} on each small set to replace \(F_n\) by the corresponding density integral up to an error \(o_\varepsilon(1)\) (uniform in \(n\)) stemming from the remainder terms in TP1/TP2 and the geometric asymptotics \eqref{eq:tube-asymptotics}.
Summing over the families and using finite overlap, we obtain
\[
\liminf_{n\to\infty} F_n(g;A)\ \ge\
\int_{A}(\alpha_0+\alpha_1 R_g)\,dV_g\;+\;\int_{\partial A\cap\partial M}\beta_1 K_g\,dS_g\;-\;o_\varepsilon(1),
\]
and the corresponding \(\limsup\) inequality with \(+\;o_\varepsilon(1)\).
Letting \(\varepsilon\downarrow0\) yields the desired identity.
\end{proof}

\begin{remark}[Independence of the mesh]
All constants and remainders are controlled by the constants bracket; in particular, the representation depends only on \((d,\ell,\mu_1,\mu_2,C_k)\), ellipticity and bounded-geometry parameters, and the shape-class, but not on the specific mesh sequence (see Lemma~\ref{lem:U}).
\end{remark}

\section{TP4 — \texorpdfstring{$\liminf$}{liminf} inequality}

The lower bound follows the classical \emph{\Gammaconv} scheme of De Giorgi \parencite{DeGiorgi1975}.
Standard references are Dal Maso and Braides \parencite{DalMaso1993Gamma,Braides2002Gamma}.

\begin{proposition}[Liminf inequality]
\label{prop:liminf}
Let \(g_n\to g\) in \(C^1\) on \(M\) (with \(g_n,g\in\Xcal\subset C^{1,1}\) and bounded geometry).
Under the standing assumptions (BA1–BA3, bounded geometry, boundary-fitted shape-regular meshes, and the scaling \(a_n\simeq h^{2-d}\)),
\begin{align*}
\liminf_{n\to\infty} F_n(g_n;M)
\;&\ge\;
\int_M (\alpha_0+\alpha_1 R_g)\,dV_g \\
&\quad + \int_{\partial M} \beta_1 K_g\,dS_g.
\end{align*}
\end{proposition}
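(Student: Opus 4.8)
The plan is to run the De~Giorgi lower‑bound scheme in its localized, Radon‑measure form, fed by the cell asymptotics of TP1 (Proposition~\ref{prop:tp1-interior}) and TP2 (Proposition~\ref{prop:tp2-boundary}) through the uniformity of Lemma~\ref{lem:U}, after a preliminary smoothing that brings $g_n$ into the $C^{2,1}$ regime where those expansions are available. First I would pass to a subsequence along which $F_n(g_n;M)\to L:=\liminf_n F_n(g_n;M)$; if $L=+\infty$ there is nothing to prove, so assume $L<\infty$. Since $g_n$ is only $C^{1,1}$, introduce $\tilde g_n:=S_{\varepsilon_n}[g_n]$ with $\varepsilon_n\downarrow 0$ coupled to $h_n$ as in Appendix~\ref{app:smoothing}, so that: (i) $\tilde g_n$ is smooth and stays in a fixed $C^{1,1}$‑neighbourhood of the reference metric, hence Lemma~\ref{lem:U} applies with $n$‑ and mesh‑uniform constants and remainders $O(h^{d+3})$ (interior), $O(h^{d})$ (first boundary layer); (ii) $\tilde g_n\to g$ in $C^1$, with $R_{\tilde g_n}\to R_g$ in $L^1(M)$ and $K_{\tilde g_n}\to K_g$ in $L^1(\partial M)$ (Lemma~\ref{lem:L1-curvature} together with the $C^1$‑convergence and the bounded‑geometry bounds); and (iii) the feature‑level perturbation caused by the smoothing sums to $o(1)$, so $F_n(g_n;M)=F_n(\tilde g_n;M)+o(1)$ (using $\ell\in C^2$ with bounded derivatives on $K_{\mathrm{feat}}$, the window averaging at scale $h$, and the \texttt{flat\_ref} matching). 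It therefore suffices to bound $\liminf_n F_n(\tilde g_n;M)$ from below.

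For the lower bound on $F_n(\tilde g_n;M)$ I would use the Radon‑measure localization of Section~\ref{sec:tp3}. Put $\mathcal F_n(A):=F_n(\tilde g_n;A)$ for Borel $A\subset M$. By the Lemma~\ref{lem:U} cell bounds, $\nu_n:=\mathcal F_n+C(\Vol_g+\Area_g\llcorner\partial M)$ is a nonnegative set function with total mass $\le C$, whose additivity defect on $A\cup B$ is $O(\delta_n\,\Per_g)\to 0$ by Lemma~\ref{lem:quasi-additivity} (interior) and Lemma~\ref{lem:boundary-assembly} (boundary tube). Passing to a further subsequence, $\nu_n\rightharpoonup^{*}\nu$ weak‑$\ast$ as Radon measures on the compact space $M$, so $\mathcal F_n\rightharpoonup^{*}\mu:=\nu-C(\Vol_g+\Area_g\llcorner\partial M)$; testing against $1$ gives $\mu(M)=\lim_n\mathcal F_n(M)=L$, and $\mu$ has nonnegative singular parts relative to $\Vol_g$ on $M^\circ$ and to $\Area_g$ on $\partial M$.

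Next I would compute the Radon--Nikodym derivatives exactly as in Proposition~\ref{prop:caratheodory-densities}, but now along the varying family $\{\tilde g_n\}$. At a Lebesgue point $x\in M^\circ$ of $R_g$ and for a.e.\ small $\rho$ with $B_\rho(x)\Subset M^\circ$ and $\mu(\partial B_\rho(x))=0$, weak‑$\ast$ convergence gives $\mathcal F_n(B_\rho(x))\to\mu(B_\rho(x))$; summing the interior expansion of TP1 over the cells of $B_\rho(x)$ (the $O(h^2)\Vol_g$ remainder, the $O(h\rho^{d-1})$ straddling‑cell term, and the mesh‑boundary mismatch all $\to 0$ as $n\to\infty$) and using $R_{\tilde g_n}\to R_g$ in $L^1(B_\rho(x))$ yields $\mu(B_\rho(x))\ge\int_{B_\rho(x)}(\alpha_0+\alpha_1 R_g)\,dV_g$; dividing by $\Vol_g(B_\rho(x))$ and letting $\rho\downarrow0$ gives $\tfrac{d\mu}{dV_g}(x)\ge\alpha_0+\alpha_1 R_g(x)$, hence $\mu\llcorner M^\circ\ge(\alpha_0+\alpha_1 R_g)\,dV_g$ (the singular part being $\ge 0$). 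The boundary blow‑up is identical with the half‑tube $U_\rho(y)$, using TP2, Corollary~\ref{cor:tp2-riemann} for the Riemann‑sum passage, Lemma~\ref{lem:boundary-assembly} and (BA2) to absorb intermediate layers ($O(\varepsilon)$, then $\varepsilon\downarrow0$), (BA3) for scan‑indifference, and $K_{\tilde g_n}\to K_g$ in $L^1(\partial M)$, giving $\mu\llcorner\partial M\ge\beta_1 K_g\,dS_g$. Summing, $L=\mu(M)\ge\int_M(\alpha_0+\alpha_1 R_g)\,dV_g+\int_{\partial M}\beta_1 K_g\,dS_g$ (since $\Vol_g(\partial M)=0$); combined with $F_n(g_n;M)=F_n(\tilde g_n;M)+o(1)$ this is the stated inequality along the chosen subsequence, and hence in general.

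I expect the main obstacle to be items~(ii)--(iii) of the reduction together with the $n\to\infty$ passage inside the blow‑up. Because the Einstein--Hilbert integrand is linear but \emph{not} weakly lower semicontinuous in the second derivatives of the metric, weak‑$\ast$ convergence of $R_{\tilde g_n}$ is not enough: one genuinely needs the \emph{strong} $L^1$‑convergences $R_{\tilde g_n}\to R_g$, $K_{\tilde g_n}\to K_g$ in the joint limit $n\to\infty$ (equivalently $\varepsilon_n\downarrow0$), and this has to be reconciled with two competing constraints — keeping $\tilde g_n$ inside the $C^{2,1}$‑ball on which Lemma~\ref{lem:U} is valid (which pushes $\varepsilon_n$ not too small relative to $h_n$) and keeping the summed feature‑transfer error $o(1)$ (which pushes $\varepsilon_n$ small relative to $h_n$ and to the curvature scale). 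Exhibiting a coupling $\varepsilon_n=\varepsilon_n(h_n)$ that threads this needle — and checking that the straddling‑cell and discrete‑boundary mismatches ($\partial\mathcal T_n\to\partial M$ at rate $O(h)$) do not leak into the densities — is the real work; the Vitali/Besicovitch assembly, the quasi‑additivity bookkeeping, and the density computation itself are then routine given TP1--TP3.
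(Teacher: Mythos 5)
Your overall architecture matches the paper's: smooth $g_n$ to $\tilde g_n=S_{\varepsilon_n}[g_n]$ so that Lemma~\ref{lem:U} applies uniformly, feed in the TP1/TP2 cell expansions, control additivity defects with Lemma~\ref{lem:quasi-additivity}, absorb intermediate boundary layers via (BA2), and pass to the limit using the $L^1$-stability of $R$ and $K$ from Appendix~\ref{app:smoothing}. Where you genuinely diverge is the assembly step. The paper runs a direct Vitali/Besicovitch covering of $M_\varepsilon$ plus Fatou, and a separate layer-by-layer treatment of the boundary tube $U_\varepsilon$, then sends $\varepsilon\downarrow 0$ at the end. You instead package $A\mapsto F_n(\tilde g_n;A)$ as an almost-additive set function, shift it by $C(\Vol_g+\Area_g\llcorner\partial M)$ to gain nonnegativity from the per-cell bounds, extract a weak-$\ast$ limit measure $\mu$, and recover the densities by Radon--Nikodym differentiation at Lebesgue points --- essentially the De~Giorgi--Letta/Fonseca--M\"uller blow-up method, reusing the computation of Proposition~\ref{prop:caratheodory-densities} along the varying family $\tilde g_n$. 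Your route buys a cleaner separation between compactness and identification (and avoids the explicit modulus $\omega_{\rm in}(\rho)$ and the final $\varepsilon\downarrow0$ bookkeeping), at the price of having to justify weak-$\ast$ compactness for a set function that is only quasi-additive and of checking $\mu(\partial B_\rho(x))=0$ for a.e.\ $\rho$; both are standard given Lemma~\ref{lem:quasi-additivity}. Your point that one needs \emph{strong} $L^1$-convergence of $R_{\tilde g_n}$ and $K_{\tilde g_n}$ (no lower semicontinuity is available for a linear-in-curvature integrand) is well taken and is exactly why the paper routes everything through Lemma~\ref{lem:L1-curvature}.

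The one place where your proposal asks for more than the paper does is the regularization transfer. You need the two-sided statement $F_n(g_n;M)=F_n(\tilde g_n;M)+o(1)$, whereas the paper only invokes the one-sided inequality $\liminf_n F_n(g_n;A)\ge\liminf_n F_n(\tilde g_n;A)$ (itself justified only by a brief appeal to ``lower semicontinuity and $L^1$-stability''). Neither version is trivial: the features see second-order jets, and mollification of a $C^{1,1}$ metric perturbs $\partial^2 g$ only in $L^1$, not uniformly, so the per-cell feature perturbation is not obviously $o(h^2)$ cell by cell; one must argue that the summed, window-averaged perturbation is $o(1)$ globally, with $\varepsilon_n$ coupled to $h_n$. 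You identify this correctly as the main obstacle rather than eliding it, so I would not call it a gap relative to the paper's own standard of rigor --- but be aware that your measure-theoretic assembly genuinely requires the stronger two-sided form, so this step cannot be weakened to the paper's one-sided claim without reworking the compactness argument.
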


\noindent\emph{Convention.} The domain is \(\Xcal\subset C^{1,1}\) with bounded geometry; the \(\Gamma\)-topology is \(C^1\) on \(\Xcal\).
Lemma~\ref{lem:U} is proved for a bounded-geometry \(C^{2,1}\)-class and is invoked via smoothing \(\tilde g_n=S_{\varepsilon_n}[g_n]\) (Appendix~\ref{app:smoothing}).
\emph{All uniformity moduli} (cell remainders from TP1/TP2, Besicovitch overlap bounds, and quasi-additivity defects)
depend only on the \emph{constants bracket} (Section~\ref{sec:constants-bracket}).

\begin{proof}
\textit{Regularization step.}
For each \(n\) choose \(\varepsilon_n\downarrow0\) with \(\varepsilon_n\le \varepsilon\) and set
\(\tilde g_n := S_{\varepsilon_n}[g_n]\) (reflected Fermi smoothing near \(\partial M\), interior normal smoothing away from the boundary; Appendix~\ref{app:smoothing}).
Then \(\tilde g_n \in C^\infty\cap \Xcal\), \(\|\tilde g_n-g_n\|_{C^1}=O(\varepsilon_n)\), and by Appendix~\ref{app:smoothing}
\[
\|R_{\tilde g_n}-R_{g_n}\|_{L^1(M)}\to0,\qquad
\|K_{\tilde g_n}-K_{g_n}\|_{L^1(\partial M)}\to0,
\]
while \(dV_{\tilde g_n}=(1+o(1))\,dV_{g_n}\) and \(dS_{\tilde g_n}=(1+o(1))\,dS_{g_n}\).
Moreover, \(\tilde g_n\to g\) in \(C^1\), and for fixed \(\varepsilon>0\) the family \(\{\tilde g_n\}_n\) is uniformly bounded in \(C^{2,1}\) (on bounded-geometry charts), hence lies in a common \(C^{2,1}\)-class \(\mathcal N\) to which Lemma~\ref{lem:U} applies uniformly (Lemma~\ref{lem:U}; full proof in Appendix~A, Section~\ref{app:lemmaU-proof}).
In particular, the uniform constants in Lemma~\ref{lem:U} and in the covering/quasi-additivity steps below
depend only on the \emph{constants bracket}.
By lower semicontinuity and the \(L^1\)-stability above,
\[
\liminf_{n}F_n(g_n;A) \;\ge\; \liminf_{n}F_n(\tilde g_n;A)\qquad\text{for all measurable }A\subset M.
\]
Fix \(\varepsilon>0\) small and decompose \(M=M_\varepsilon\sqcup U_\varepsilon\), where
\(U_\varepsilon:=\{x\in M:\mathrm{dist}_g(x,\partial M)<\varepsilon\}\) is the boundary tube and \(M_\varepsilon:=M\setminus U_\varepsilon\).

\parahead{Interior part}
By Vitali’s covering theorem on manifolds with bounded geometry \parencite{Federer1969}, there exists a countable disjoint family of geodesic balls
\(\{B_i:=B_{\rho_i}(x_i)\}_i\subset M^\circ\) with \(0<\rho_i<\varepsilon\) that covers \(M_\varepsilon\) up to a null set and enjoys finite overlap after dilation by a fixed factor (Besicovitch).
By TP1 and the definition of the interior Carathéodory density, for a.e.\ \(x_i\) (Lebesgue point of \(R_g\)) there is a modulus \(\omega_{\rm in}(\rho)\to0\) as \(\rho\downarrow0\) such that
\[
\liminf_{n\to\infty} F_n(g_n;B_i)
\;\ge\;
\int_{B_i}\!(\alpha_0+\alpha_1 R_g)\,dV_g \;-\; \omega_{\rm in}(\rho_i)\,\Vol_g(B_i),
\]
where Lemma~U applies \emph{uniformly} on a \(C^{2,1}\)-neighbourhood of \(g\) for the smoothed sequence \(\tilde g_n\);
since \(\{\tilde g_n\}\) stays in such a neighbourhood with uniform constants (Appendix~A, Section~\ref{app:lemmaU-proof}),
the modulus is independent of \(i\) once \(\rho_i\le \varepsilon\) and depends only on the \emph{constants bracket}.
Summing over the disjoint family and using Fatou’s lemma together with the quasi-additivity control from Lemma~\ref{lem:quasi-additivity} (to pass from \(\sum_i F_n(B_i)\) to \(F_n(\cup_i B_i)\) with a defect \(o_{n\to\infty}(1)\)), we obtain
\begin{align*}
\liminf_{n\to\infty} F_n(\tilde g_n;M_\varepsilon)
\;&\ge\;
\int_{M_\varepsilon}\!(\alpha_0+\alpha_1 R_g)\,dV_g \\
&\quad - \omega_{\rm in}(\varepsilon)\,\Vol_g(M_\varepsilon),
\end{align*}
where \(\omega_{\rm in}(\varepsilon)\to0\) as \(\varepsilon\downarrow0\).
By the regularization step, \(\displaystyle \liminf_{n\to\infty} F_n(g_n;M_\varepsilon)\ \ge\ \liminf_{n\to\infty} F_n(\tilde g_n;M_\varepsilon)\).

\parahead{Boundary tube}
\emph{Boundary regularity.} We invoke the global no-corners convention from Section~\ref{conv:no-corners}; no codimension-two corner terms arise.
Partition \(U_\varepsilon\) into boundary layers of thickness \(O(h)\) in reflected Fermi coordinates.
By TP2 (and by Appendix~\ref{app:flat-ref}, quasi-uniqueness \(o(h^2)\) per cell, the choice of \texttt{flat\_ref} is immaterial), the first layer \(S_n\) contributes
\[
\sum_{c\in S_n} E_n(c;g_n)
= \beta_1 \sum_{c\in S_n} h^{d-1} K_{g_n}(s_c) \;+\; O(h).
\]
Riemann–sum convergence on \(\partial M\) for first-layer cells (TP2), together with
\(K_{\tilde g_n}\to K_g\) in \(L^1(\partial M)\) and \(dS_{\tilde g_n}\to dS_g\) (Appendix~\ref{app:smoothing}), yield
\[
\liminf_{n\to\infty}\sum_{c\in S_n} E_n(c;\tilde g_n)
\;\ge\; \int_{\partial M} \beta_1 K_g\,dS_g.
\]
By the regularization step, \(\displaystyle \liminf_{n\to\infty}\sum_{c\in S_n} E_n(c;g_n)\ \ge\ \liminf_{n\to\infty}\sum_{c\in S_n} E_n(c;\tilde g_n)\).
All intermediate layers with normal depth \(t\in[c_*h,\varepsilon]\) have per-layer contribution \(O(h^{d})\).
By the assumption (BA2) — which quantifies exactly this per-layer \(O(h^d)\) bound for reflected Fermi layers of thickness \(O(h)\) —
their total contribution is bounded by \(C\,\varepsilon\,\Area_g(\partial M)\), uniformly in \(n\).

Therefore
\[
\liminf_{n\to\infty} F_n(g_n;U_\varepsilon)
\;\ge\; \int_{\partial M} \beta_1 K_g\,dS_g \;-\; C\,\varepsilon\,\Area_g(\partial M).
\]

\parahead{Conclusion}
Combining the interior and boundary contributions for \(\tilde g_n\) yields
\begin{align*}
\liminf_{n\to\infty} F_n(\tilde g_n;M)
\;&\ge\;
\int_{M_\varepsilon}\!(\alpha_0+\alpha_1 R_g)\,dV_g \\
&\quad + \int_{\partial M} \beta_1 K_g\,dS_g \\
&\quad - \omega_{\rm in}(\varepsilon)\,\Vol_g(M_\varepsilon) \\
&\quad - C\,\varepsilon\,\Area_g(\partial M).
\end{align*}
By the regularization step,
\(\displaystyle \liminf_{n\to\infty} F_n(g_n;M)\ \ge\ \liminf_{n\to\infty} F_n(\tilde g_n;M)\).
Let \(\varepsilon\downarrow0\). Since \(\Vol_g(M\setminus M_\varepsilon)\to0\) and \(\omega_{\rm in}(\varepsilon)\to0\), the error terms vanish, which proves the claim.
\end{proof}

\section{TP5 — \texorpdfstring{$\limsup$}{limsup} (recovery sequence)}
\label{sec:05-limsup}

The recovery-sequence construction follows De~Giorgi’s framework of \(\Gamma\)-convergence~\parencite{DeGiorgi1975}.

Standarddarstellungen finden sich bei Dal~Maso \parencite{DalMaso1993Gamma} und Braides \parencite{Braides2002Gamma}.

\begin{proposition}[Limsup inequality / recovery]
\label{prop:limsup}
For every \(g\in\Xcal\) there exists a sequence \(g_n\in\Xcal\) with \(g_n\to g\) in \(C^1\) such that
\[
\limsup_{n\to\infty} F_n(g_n;M)
\;\le\;
\int_M (\alpha_0+\alpha_1 R_g)\,dV_g
\;+\; \int_{\partial M} \beta_1 K_g\,dS_g .
\]
Moreover, with the canonical choice \(g_n:=S_{\varepsilon_n}[g]\) and \(\varepsilon_n=\eta h\) for some fixed \(\eta\in(0,1]\),
the global remainder is \(O(h)\) (boundary-dominated).
\end{proposition}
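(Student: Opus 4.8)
\noindent\emph{Proof plan.}
The plan is to take the canonical smoothing sequence $g_n:=S_{\varepsilon_n}[g]$ with $\varepsilon_n=\eta h_n$ and to evaluate $F_n(g_n;M)=\sum_{c\in\mathcal T_n}E_n(c;g_n)$ cell by cell, splitting $\mathcal T_n$ into a bulk interior region (governed by TP1), the first boundary layer (governed by TP2), and a thin near-boundary collar (governed by (BA2)). First I would record, from Appendix~\ref{app:smoothing}, the three properties of $g_n$ that make this work: (i) $g_n\in C^\infty\cap\Xcal$ and $g_n\to g$ in $C^1$; (ii) for $\varepsilon_n$ small the family $\{g_n\}$ stays in one fixed bounded-geometry $C^{2,1}$-ball around the reference metric, so Lemma~\ref{lem:U} applies to \emph{every} $g_n$ with a single constant and modulus depending only on the constants bracket; (iii) the $L^1$-stability $\|R_{g_n}-R_g\|_{L^1(M)}\to0$, $\|K_{g_n}-K_g\|_{L^1(\partial M)}\to0$, together with $dV_{g_n}=(1+o(1))\,dV_g$ and $dS_{g_n}=(1+o(1))\,dS_g$. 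This reduces the $\limsup$ to a Riemann-sum estimate along a smooth, uniformly controlled sequence.

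Next I would classify the cells of $\mathcal T_n$ by the normal depth $t_c$ of the barycenter: the first layer $S_n=\{t_c\le c_* h\}$, the collar $\{c_* h<t_c\le C R_n\}$, and the bulk $\{t_c>C R_n\}$ (equivalently $\operatorname{dist}_g(c,\partial M)\ge C R_n$, so TP1 applies). On the bulk, Proposition~\ref{prop:tp1-interior} with Lemma~\ref{lem:U} gives $E_n(c;g_n)=h^d[\alpha_0+\alpha_1 R_{g_n}(x_c)]+O(h^{d+2})$ with the remainder uniform in $n$; summing, the bulk contribution is a Riemann sum for $\int_{\{\operatorname{dist}_g\ge C R_n\}}(\alpha_0+\alpha_1 R_{g_n})\,dV_{g_n}$ up to a total error $O(h)$ (the Riemann-sum error plus the accumulated $O(h^{d+2})$ per-cell remainders). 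On $S_n$, Proposition~\ref{prop:tp2-boundary} gives $E_n(c;g_n)=h^{d-1}\beta_1 K_{g_n}(s_c)+O(h^d)+o(h^{d-1})$, so the first-layer contribution equals $\beta_1\sum_{S_n}h^{d-1}K_{g_n}(s_c)+O(h)$, a surface Riemann sum over $\partial M$. On the collar, (BA2) applied with tube thickness $C R_n$ bounds the total contribution by $C\,R_n\,\Area_g(\partial M)=o(1)$.

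Then I would let $n\to\infty$ first. Since $C R_n\downarrow0$ and $R_{g_n}\to R_g$ in $L^1$ with $dV_{g_n}\to dV_g$, the bulk Riemann sum converges to $\int_M(\alpha_0+\alpha_1 R_g)\,dV_g$; since $K_{g_n}\to K_g$ in $L^1(\partial M)$ with $dS_{g_n}\to dS_g$, the first-layer surface Riemann sum converges to $\int_{\partial M}K_g\,dS_g$ — here the $O(h)$ Hausdorff matching of $\partial\mathcal T_n$ with $\partial M$ is used to identify the cellwise surface element with $dS_g$ in the limit. All remainder terms vanish, so $\limsup_n F_n(g_n;M)\le\int_M(\alpha_0+\alpha_1 R_g)\,dV_g+\beta_1\int_{\partial M}K_g\,dS_g$; combined with Proposition~\ref{prop:liminf} this is in fact an equality, so $g_n$ is a genuine recovery sequence. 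For the \emph{moreover} claim one tracks the error budget with $\varepsilon_n=\eta h$ fixed: $O(h^{2})$ from the interior per-cell remainders, $o(1)$ from the collar at the mesoscale rate, and the dominant $O(h)$ from the first-layer cell-level $O(h)$ term times $\#S_n\asymp h^{1-d}$ (cf.\ Remark~\ref{rem:tp2-oh-origin}); when the curvature $L^1$-rates are themselves $O(h)$, the combined global remainder is the quoted boundary-dominated $O(h)$.

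I expect the main obstacle to be enforcing the two uniformities along the \emph{moving} sequence $g_n=S_{\varepsilon_n}[g]$. First, Lemma~\ref{lem:U} and the covering/Riemann-sum moduli must hold with $n$-independent constants; this rests on the (delicate, already flagged) fact that $S_{\varepsilon_n}[g]$ does not let its $C^{2,1}$-norm escape as $\varepsilon_n\to0$ even though $g$ is only $C^{1,1}$ — precisely what Appendix~\ref{app:smoothing} must supply. Second, the first-layer sum is intrinsically over the discrete boundary $\partial\mathcal T_n$ and the cellwise surface element rather than over $(\partial M,dS_g)$, so closing that gap requires the $O(h)$ boundary-fitting rate together with the $C^1$-control of $g_n$ to pass from the discrete surface Riemann sum to $\int_{\partial M}K_g\,dS_g$. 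Everything else is routine Riemann-sum and covering bookkeeping, with the constants absorbed into the constants bracket.
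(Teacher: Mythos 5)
Your proposal is correct and follows essentially the same route as the paper: the canonical recovery $g_n=S_{\eta h}[g]$, the three smoothing properties from Appendix~\ref{app:smoothing}, TP1 Riemann sums in the bulk, the TP2 first-layer surface Riemann sum, and (BA2) for everything in between, with the $L^1$-stability of $R$ and $K$ closing the limit. The one genuine structural difference is your treatment of the near-boundary region: you introduce a collar $\{c_*h<t_c\le CR_n\}$ of \emph{mesoscale} thickness and apply (BA2) with $\varepsilon=CR_n$ to get a contribution $O(R_n)=o(1)$ in a single limit, whereas the paper bounds the intermediate layers on a tube of \emph{fixed} thickness $\varepsilon_0$ by $O(\varepsilon_0)$ and then sends $\varepsilon_0\downarrow0$ after $n\to\infty$. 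Your variant is arguably cleaner: it respects the domain of validity of Proposition~\ref{prop:tp1-interior} (which requires $\operatorname{dist}_g(c,\partial M)\ge CR_n$, a restriction the paper's sum over all $c\subset M^\circ$ glosses over) and avoids the double limit. The price is that your error budget for the \emph{moreover} claim contains the collar term $O(R_n)=O(r_nh)$ with $r_n\to\infty$, which is $o(1)$ but not $O(h)$; you flag this honestly, and the paper's own argument has the same unresolved tension (its intermediate-layer term is only removed in the $\varepsilon_0\downarrow0$ limit, not at rate $O(h)$), so this is a shared imprecision in the rate statement rather than a defect of your approach relative to the paper's. Your remaining caveats (uniform $C^{2,1}$ control of $S_{\varepsilon_n}[g]$ for Lemma~\ref{lem:U}, and the $O(h)$ boundary-fitting needed to identify the discrete surface Riemann sum with $\int_{\partial M}K_g\,dS_g$) are exactly the points the paper delegates to Appendix~\ref{app:smoothing} and Corollary~\ref{cor:tp2-riemann}.
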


\begin{proof}[Construction and estimate]
\noindent\emph{Boundary convention.} We work under the global no-corners assumption \(\partial M\in C^2\) (Section~\ref{conv:no-corners}).

\noindent\emph{Topology and recovery sequences.}
We work on \(\Xcal\subset C^{1,1}\) with the \(C^1\)-\(\Gamma\)-topology on \(\Xcal\); cf.\ Appendix~\ref{app:smoothing} and Lemma~\ref{lem:U}.
\emph{Uniformity scope.} Lemma~\ref{lem:U} is proved on a \(C^{2,1}\)-neighbourhood; with the smoothing choice below, \(g_n\) remains in a fixed
\(C^{2,1}\)-neighbourhood of \(g\), so the coefficients \(\alpha_0,\alpha_1,\beta_1\) and all remainder constants
are valid \emph{uniformly} along \(n\), with dependence only on the \emph{constants bracket}
(Section~\ref{sec:constants-bracket}).

\parahead{Smoothing-based recovery sequence.}
Fix \(\eta\in(0,1]\) and set \(\varepsilon_n:=\eta h\).
By bounded geometry choose \(0<c<\tfrac12\) and \(n_0\) so that \(\varepsilon_n\le c\,\reach_g(\partial M)\) for all \(n\ge n_0\).
Define
\[
g_n := S_{\varepsilon_n}[g],
\]
where \(S_{\varepsilon}\) is the reflected Fermi smoothing near \(\partial M\) and interior normal smoothing away from the boundary (Appendix~\ref{app:smoothing}).
Then \(S_{\eps}:C^{1,1}\to C^{1,1}\) and \(\|S_{\eps}[g]-g\|_{C^1}=O(\eps)\), hence \(g_n\to g\) in \(C^1\).
Moreover, by Lemma~\ref{lem:L1-curvature} one has \(R_{g_n}\to R_g\) in \(L^1(M)\) and \(K_{g_n}\to K_g\) in \(L^1(\partial M)\).

\parahead{Curvature and measure stability.}
By the \(L^1\) triangle inequality and the properties of the smoothing family \(T_{\varepsilon}\) from Appendix~\ref{app:smoothing},
\[
\|R_{g_n}-R_g\|_{L^1(M)}\to0,
\qquad
\|K_{g_n}-K_g\|_{L^1(\partial M)}\to0,
\]
as \(n\to\infty\).
Uniform ellipticity and \(C^0\)-convergence imply
\[
dV_{g_n}=(1+o(1))\,dV_g,
\qquad
dS_{g_n}=(1+o(1))\,dS_g,
\]
uniformly on \(M\) and \(\partial M\), respectively.

\parahead{Interior contribution.}
Apply TP1 cellwise with \(g_n\) in place of \(g\).
Riemann sum arguments (Vitali/Besicovitch coverings \parencite{Federer1969} and Lemma~\ref{lem:quasi-additivity}) give
\[
\sum_{c\subset M^\circ} E_n(c;g_n)
= \int_M (\alpha_0+\alpha_1 R_{g_n})\,dV_{g_n} + O(h^2).
\]
The implicit constant depends only on the \emph{constants bracket}.
Here \(O(h^2)\) is the global interior remainder from the \(O(h^{d+2})\) per-cell error summed over \(\asymp h^{-d}\) interior cells.
Using \(R_{g_n}\to R_g\) in \(L^1\) and \(dV_{g_n}\to dV_g\) uniformly,
\[
\int_M (\alpha_0+\alpha_1 R_{g_n})\,dV_{g_n}
= \int_M (\alpha_0+\alpha_1 R_g)\,dV_g + o(1).
\]

\parahead{Boundary contribution (first layer).}
\emph{Assumption (no corners).} We assume $\partial M\in C^2$ with no corners/edges; corner terms are not captured by the first-layer $h^{d-1}$ analysis and are excluded.
By TP2,
\[
\sum_{c\in S_n} E_n(c;g_n)
= \beta_1 \sum_{c\in S_n} h^{d-1} K_{g_n}(s_c) \;+\; O(h),
\]
where the \(O(h)\) constant depends only on the \emph{constants bracket};
and by Riemann sum convergence and \(K_{g_n}\to K_g\) in \(L^1(\partial M)\),
\[
\sum_{c\in S_n} h^{d-1} K_{g_n}(s_c) \;\longrightarrow\; \int_{\partial M} K_g\,dS_g,
\]
hence
\[
\sum_{c\in S_n} E_n(c;g_n) \;=\; \beta_1 \int_{\partial M} K_g\,dS_g \;+\; O(h) + o(1).
\]

\parahead{Intermediate layers.}
\emph{Assumption (no corners).} Corners/edges are excluded as above; only smooth boundary contributes under our layer counting.
For layers with normal depth $t\in[c_* h,\varepsilon_0]$ (any fixed \(\varepsilon_0>0\)), (BA2) yields per-layer contributions \(O(h^{d})\).
Summing over \(O(\varepsilon_0/h)\) such layers gives \(O(\varepsilon_0)\).
The implied constants in these layer counts and per-layer estimates depend only on the \emph{constants bracket}
(and the fixed window class from Section~\ref{para:boundary-windows}).
Choosing \(\varepsilon_0\downarrow0\) after \(n\to\infty\) removes this term.

\parahead{Conclusion.}
Collecting contributions,
\[
\limsup_{n\to\infty} F_n(g_n;M)
\;\le\; \int_M (\alpha_0+\alpha_1 R_g)\,dV_g
\;+\; \int_{\partial M} \beta_1 K_g\,dS_g,
\]
and with \(\varepsilon_n=\eta h\) the global remainder is \(O(h)\), dominated by the boundary first-layer error, whereas the interior remainder is \(O(h^2)\).

\medskip\noindent\emph{Rate reminder.}
With \(\varepsilon_n=\eta h\) the recovery error is boundary-dominated \(O(h)\) (interior \(O(h^2)\)), in line with the protocol of Appendix~\ref{app:rates}.

\end{proof}

\section{TP6 — Structure and uniqueness (Lovelock and scaling)}
\label{sec:06-structure}

\begin{proposition}[Local natural densities of order \texorpdfstring{$\le 2$}{<=2}]
\label{prop:lovelock-structure}
Let \(F\) be a diffeomorphism-natural, local functional on \(\Xcal\subset C^{1,1}\) with an integral representation as in Proposition~\ref{prop:integral-representation}.
Assume its interior density \(f_{\mathrm{in}}\) depends only on the interior jet \(j^2 g\), and its boundary density \(f_{\mathrm{bdry}}\) depends only on the boundary jet of order~\(1\) (in reflected Fermi charts).
Then, up to total divergences,
\begin{align*}
f_{\mathrm{in}}(g) \; &=\; A_0 + A_1\,R_g,\\
f_{\mathrm{bdry}}(g) \; &=\; B_1\,K_g,
\end{align*}
for some constants \(A_0,A_1,B_1\in\R\).
\end{proposition}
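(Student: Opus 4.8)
The plan is to push the diffeomorphism-naturality of $F$ down to a \emph{pointwise} equivariance statement for the two densities, run the classical invariant-theory classification (Weyl/Gilkey, with smooth — not merely polynomial — dependence handled by Schwarz's theorem) separately in the interior and at the boundary, and finally use the homothety scaling encoded in $a_n\simeq h^{2-d}$ to discard every invariant of weight higher than two.

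First I would localize. By Proposition~\ref{prop:integral-representation} we have $F(g;A)=\int_A f_{\mathrm{in}}\,dV_g+\int_{\partial A\cap\partial M} f_{\mathrm{bdry}}\,dS_g$, and naturality $F(\varphi^\ast g;\varphi^{-1}(A))=F(g;A)$ for $\varphi\in G$, combined with the transformation rules for $dV_g$ and $dS_g$ and with the Carathéodory limits of Proposition~\ref{prop:caratheodory-densities} (tested against shrinking balls and half-tubes, using Lebesgue points), yields the a.e.\ identities $f_{\mathrm{in}}(\varphi^\ast g)(x)=f_{\mathrm{in}}(g)(\varphi(x))$ and $f_{\mathrm{bdry}}(\varphi^\ast g)(y)=f_{\mathrm{bdry}}(g)(\varphi(y))$. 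Since by hypothesis $f_{\mathrm{in}}$ depends only on $j^2g(x)$ and $f_{\mathrm{bdry}}$ only on the order-one boundary jet in reflected Fermi charts, these are exactly the equivariance relations defining natural differential operators in the sense of Peetre–Slovák \parencite{KolarMichorSlovak1993}; in particular the operators have finite order and inherit smoothness in the jet, so everything downstream is pointwise.

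For the interior, fix $x$ and pass to geodesic normal coordinates of $g$ at $x$: then $g_{ij}(x)=\delta_{ij}$, $\partial_k g_{ij}(x)=0$, and the only surviving second-order datum is the curvature tensor $R_{ikjl}(x)$ with its algebraic symmetries, the residual isotropy being $O(d)$. Hence $f_{\mathrm{in}}$ is a smooth $O(d)$-invariant function of a curvature-type tensor, so by the first fundamental theorem for $O(d)$ (and Schwarz's theorem in the smooth case) it is a smooth function of the full metric contractions $R_g$, $|\Ric_g|^2$, $|\Rm_g|^2,\dots$. Among these, $R_g$ is the only one linear in $R_{ikjl}$, equivalently the only one of homothety weight $-2$; the rest have weight $\le -4$. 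Invoking the scaling normalization built into $F_n=a_n\Delta L_n$ with $a_n\simeq h^{2-d}$ — under $g\mapsto\lambda^2 g$ the cell scale rescales as $h\mapsto\lambda h$, and the construction is arranged so that $\int_M f_{\mathrm{in}}\,dV_g$ can carry only the weight-$0$ and weight-$(-2)$ parts — all higher-weight invariants (in particular the Gauss--Bonnet/Lovelock density relevant when $d\ge5$, in agreement with \parencite{Lovelock1971}) are forced to vanish, leaving $f_{\mathrm{in}}=A_0+A_1 R_g$; and since the space of natural densities of order $\le 2$ modulo total divergences is spanned by $dV_g$ and $R_g\,dV_g$, the stated form holds up to divergences. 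The boundary case is parallel with $O(d-1)$ acting: in Fermi normal coordinates along $\partial M$ the order-one boundary jet reduces to the second fundamental form $II_{\alpha\beta}(s)$ (as $\partial_t h_{\alpha\beta}=-2II_{\alpha\beta}$ and tangential first derivatives of $h$ are normalizable), so $f_{\mathrm{bdry}}$ is a smooth $O(d-1)$-invariant function of $II$, hence of $\tr II=K_g$, $\tr(II^2),\dots$; the same scaling argument (the boundary weight fixed by the $h^{d-1}$ surface scale of TP2) retains only the linear piece, giving $f_{\mathrm{bdry}}=B_1 K_g$.

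The main obstacle I expect is the rigorous descent from naturality of the measure-valued $F$ to \emph{pointwise} equivariance and smoothness of the a.e.-defined densities: one must check that the Carathéodory limits of Proposition~\ref{prop:caratheodory-densities} commute with pullback and that the homogeneity of the discrete construction under $g\mapsto\lambda^2 g$ is genuinely inherited by the $\Gamma$-limit densities and not merely by the $F_n$. This is where Lemma~U (uniform constants on a $C^{2,1}$-neighbourhood), the scaling $a_n\simeq h^{2-d}$, and the naturality hypothesis must be combined with care. Once those points are secured, the invariant-theoretic bookkeeping and the elimination of divergences are routine (cf.\ \parencite{KolarMichorSlovak1993,Lovelock1971}).
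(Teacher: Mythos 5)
Your argument follows the same skeleton as the paper's proof: Peetre--Slov\'ak naturality to reduce to a pointwise $O(d)$- (resp.\ $O(d-1)$-) invariant function of the jet, normal/Fermi coordinates to isolate $\Rm$ (resp.\ $II$) as the only surviving jet data, classical invariant theory, and absorption of divergences into boundary (resp.\ corner) terms that vanish under the no-corners convention. The one substantive difference is where the nonlinear invariants are killed. The paper's proof restricts itself to invariants formed from ``one copy of $R_{ijkl}$'' (and, at the boundary, to invariants ``linear in $II$''), i.e.\ it builds linearity into the reading of ``order $\le 2$'' and defers the exclusion of $|\Rm_g|^2$, $|\Ric_g|^2$, $\tr(II^2)$, Lovelock densities, etc.\ to the separate scaling test (Lemma~\ref{lem:scaling}), which is invoked only later in Theorem~\ref{thm:uniqueness-density}. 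You instead import the homothety-weight argument directly into the proof of the proposition. That is a reasonable--- arguably more honest---choice: as stated, the hypotheses of the proposition (naturality plus dependence on $j^2g$ alone) do \emph{not} exclude $f_{\mathrm{in}}=|\Rm_g|^2$, so some additional input (linearity by fiat, or the scaling normalization from $a_n\simeq h^{2-d}$) is genuinely needed, and you correctly identify which one. The cost is that your version of the proposition is no longer a purely structural statement about natural densities but quietly assumes the discrete scaling; if you keep that route, state the weight restriction as an explicit hypothesis rather than smuggling it in from the $F_n$. Your other addition---the careful descent from naturality of the set-functional $F$ to a.e.\ pointwise equivariance of the Carath\'eodory densities via Proposition~\ref{prop:caratheodory-densities}---fills a step the paper asserts in one line; you are right to flag it as the place requiring care (Lebesgue points, commutation of the blow-up limits with pullback), and your sketch of how to do it is sound.
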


\begin{proof}
By diffeomorphism naturality and locality, \(f_{\mathrm{in}}(g)(x)\) can be written as a smooth \(O(d)\)-invariant function of the \(2\)-jet \(j^2g|_x\) (Peetre–Slovák theory of natural operations \parencite{KolarMichorSlovak1993}).
At a point \(x\) choose normal coordinates, so \(g_{ij}(x)=\delta_{ij}\) and \(\partial_k g_{ij}(x)=0\).
Hence the only jet components that can contribute at order \(\le 2\) are \(\partial_k\partial_\ell g_{ij}(x)\), which, under the \(O(d)\)-action, generate the curvature tensor \(R_{ijkl}(x)\) and its traces.
By classical invariant theory, the scalar \(O(d)\)-invariants formed from \(g^{-1}\) and one copy of \(R_{ijkl}\) are linear combinations of the scalar curvature \(R_g(x)\) and total divergences \parencite{Lovelock1971} (the latter integrate to boundary terms).
Therefore
\[
f_{\mathrm{in}}(g)=A_0 + A_1\,R_g + \mathrm{div}(\mathsf{X}),
\]
for some vector field \(\mathsf{X}\) depending naturally on \(j^1g\).
Since we work at the level of densities (Proposition~\ref{prop:integral-representation}), the divergence integrates to a boundary contribution and can be absorbed there; thus the interior density is \(A_0+A_1R\) up to divergences.

For the boundary, work in reflected Fermi coordinates \((s,t)\) with \(t\ge0\).
Diffeomorphism naturality along the boundary and dependence only on the boundary \(1\)-jet imply that \(f_{\mathrm{bdry}}\) is an \(O(d{-}1)\)-invariant scalar formed from the induced metric \(h\) and the second fundamental form \(II\) at \(t=0\), with at most one normal derivative (equivalently, at most order \(1\) in the boundary jet).
The only such invariant linear in \(II\) and scalar under \(O(d{-}1)\) is its trace \(K=\mathrm{tr}_h II\)
\emph{with respect to the outer unit normal} (sign convention fixed in TP2).
Any tangential divergence \(\mathrm{div}_{\partial M}(\mathsf{Y})\) integrates to a boundary-of-boundary term,
which is absent under our \(C^2\) no-corners assumption (Section~\ref{sec:00b-assumptions-ec-scope}) or neutralized by the Euclidean half-space calibration.
Hence \(f_{\mathrm{bdry}}=B_1K\) up to tangential divergences.
\end{proof}

\begin{lemma}[Scaling test for candidates under \texorpdfstring{$a_n\simeq h^{2-d}$}{a_n ~ h^{2-d}}]
\label{lem:scaling}
Under the length rescaling \(x\mapsto\sigma x\) (equivalently \(g\mapsto\sigma^{-2}g\)),
\begin{align*}
\int_M 1\,dV_g &\mapsto \sigma^{d}\int_M dV_g,\\
\int_M R_g\,dV_g &\mapsto \sigma^{d-2}\int_M R_g\,dV_g,\\
\int_{\partial M} K_g\,dS_g &\mapsto \sigma^{d-2}\int_{\partial M} K_g\,dS_g.
\end{align*}
More generally, an interior Lovelock density of order \(k\) scales like \(\sigma^{d-2k}\) \parencite{Lovelock1971}, and a boundary invariant involving \(m\) derivatives and one power of \(II\) scales at least like \(\sigma^{d-1-m}\).
With the discrete prefactor \(a_n\simeq h^{2-d}\) and the observed per-cell scales \(h^d\) (interior) and \(h^{d-1}\) (boundary), only \(\int_M 1\), \(\int_M R\), and \(\int_{\partial M}K\) are compatible with the TP1/TP2 asymptotics; higher-order invariants are scale-incompatible.
\end{lemma}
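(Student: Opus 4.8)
The plan is to split the statement into the three displayed homogeneities (a direct computation under a homothety), the general Lovelock/boundary scaling count (index bookkeeping), and the compatibility clause (power counting against TP1/TP2 and Proposition~\ref{prop:lovelock-structure}). Under the length rescaling $x\mapsto\sigma x$ the metric is multiplied by $\sigma^{2}$, and from there everything is dimensional bookkeeping. From $\det(\sigma^{2}g)=\sigma^{2d}\det g$ one gets $dV_{g}\mapsto\sigma^{d}\,dV_{g}$, which is the first line. The Levi--Civita connection is unchanged (its Christoffel symbols are built from $g^{-1}\partial g$, so the two factors rescale to cancel), hence so are the $(1,3)$ Riemann tensor and the Ricci tensor $\Ric_{g}$; since $R_{g}=g^{ij}\Ric_{ij}$ and $g^{ij}\mapsto\sigma^{-2}g^{ij}$ we get $R_{g}\mapsto\sigma^{-2}R_{g}$, hence $\int_{M}R_{g}\,dV_{g}\mapsto\sigma^{d-2}\int_{M}R_{g}\,dV_{g}$. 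For the boundary, the induced metric rescales by $\sigma^{2}$, so $dS_{g}\mapsto\sigma^{d-1}\,dS_{g}$; the outer unit normal rescales by $\sigma^{-1}$, so the Weingarten map (which is $\nabla$ applied to the normal) rescales by $\sigma^{-1}$ as a $(1,1)$-tensor, whence $K_{g}\mapsto\sigma^{-1}K_{g}$ and $\int_{\partial M}K_{g}\,dS_{g}\mapsto\sigma^{d-2}\int_{\partial M}K_{g}\,dS_{g}$; the outer-normal sign convention of TP2 is preserved because $\sigma>0$.

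For the general count, the order-$k$ Lovelock density is a complete contraction of $k$ copies of the $(0,4)$ Riemann tensor against the $2k$ inverse metrics needed to produce a scalar, so it transforms by $\sigma^{2k}\sigma^{-4k}=\sigma^{-2k}$ and $\int_{M}\mathcal L_{k}\,dV_{g}\mapsto\sigma^{d-2k}\int_{M}\mathcal L_{k}\,dV_{g}$ (with $k=0$ the volume and $k=1$ Einstein--Hilbert); cf.\ \parencite{Lovelock1971}. Likewise $II$ carries weight $+1$ (it is $-\tfrac12$ of the normal derivative of the induced metric, and the normal direction carries weight $-1$), each additional tangential derivative lowers the weight by one and each contracting inverse metric by two, so a boundary scalar that is linear in $II$ and of total derivative order $m$ makes $\int_{\partial M}(\cdot)\,dS_{g}$ scale at least like $\sigma^{d-1-m}$ --- equal to $\sigma^{d-2}$ exactly at $m=1$ (i.e.\ for $K_{g}$) and with exponent $\le d-3$ for every $m\ge 2$.

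Finally, the compatibility clause. By TP1/TP2 the interior per-cell term is $h^{d}[\alpha_{0}+\alpha_{1}R_{g}]+O(h^{d+2})$ and the boundary first-layer term is $h^{d-1}[\beta_{1}K_{g}+O(h)]+o(h^{d-1})$, which assemble (with cell counts $\asymp\Vol_{g}/h^{d}$ and $\asymp\Area_{g}/h^{d-1}$) into the representation of Proposition~\ref{prop:integral-representation}, whose densities are natural scalars of jet-order $\le 2$ (interior) and boundary-jet-order $\le 1$ by Proposition~\ref{prop:lovelock-structure}. Reading off weights, $\int_{M}dV_{g}$ has weight $d$ while $\int_{M}R_{g}\,dV_{g}$ and $\int_{\partial M}K_{g}\,dS_{g}$ have weight $d-2$; every other natural curvature integral --- higher Lovelock bulk densities ($d-2k\le d-4$), the boundary area ($d-1$), or higher boundary invariants ($\le d-3$) --- carries a weight absent from these leading per-cell scales with the fixed prefactor $a_{n}\simeq h^{2-d}$, hence cannot appear. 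I expect the main obstacle to be exactly this last clause: making precise the sense in which a fixed $a_{n}\simeq h^{2-d}$ together with normalized, scale-invariant features forbids any other weight. The cleanest route is not to argue it in isolation but to observe that TP1, TP2 and Proposition~\ref{prop:lovelock-structure} already pin down the leading per-cell expansions and their structural form, so the lemma reduces to the homogeneity bookkeeping above plus a pure power-counting comparison that eliminates every remaining candidate.
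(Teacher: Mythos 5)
Your proposal is correct and follows essentially the same route as the paper's proof: direct homogeneity computation for the three displayed integrals, index/weight bookkeeping for general Lovelock and boundary invariants, and power counting against the TP1/TP2 per-cell scales with $a_n\simeq h^{2-d}$ for the exclusion clause. Your version is in fact slightly more careful — you work consistently with $g\mapsto\sigma^2 g$ for the length rescaling $x\mapsto\sigma x$ (the paper's proof writes $g\mapsto\sigma^{-2}g$ yet concludes $dV_g\mapsto\sigma^d dV_g$, an internal sign slip you silently repair) — and you correctly flag that the final compatibility clause is the only step requiring support from TP1/TP2 and Proposition~\ref{prop:lovelock-structure} rather than pure dimensional analysis.
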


\begin{proof}
If \(g\mapsto \sigma^{-2}g\), then \(dV_g\mapsto \sigma^d dV_g\).
Curvature tensors have homogeneity \(\sigma^{-2}\), hence \(R_g dV_g\mapsto \sigma^{d-2} R_g dV_g\).
On \(\partial M\), \(dS_g\mapsto \sigma^{d-1}dS_g\) while \(II\mapsto \sigma^{-1}II\), so \(K\,dS_g\mapsto \sigma^{d-2}K\,dS_g\).
For higher Lovelock densities \(L_k\) (order \(2k\)), \(L_k dV_g\mapsto \sigma^{d-2k} L_k dV_g\) and is incompatible with the \(h\)-scaling extracted in TP1 unless \(k\in\{0,1\}\).
Similarly, boundary scalars built from \(|A|^2\), \(R_{\partial M}\), etc.\ carry extra negative homogeneity and fail the first-layer scale \(h^{d-1}\) once paired with \(a_n\simeq h^{2-d}\).
\end{proof}

\begin{theorem}[Uniqueness of the \texorpdfstring{\(\Gamma\)}{Gamma}-limit density]
\label{thm:uniqueness-density}
Let \(F\) be any \(\Gamma\)-limit of \(F_n\) along a subsequence.
Under the hypotheses of Sections~\ref{sec:00-framework}–\ref{sec:05-limsup} and the integral representation in Proposition~\ref{prop:integral-representation}, one necessarily has
\begin{align*}
f_{\mathrm{in}} &= \alpha_0+\alpha_1 R,\\
f_{\mathrm{bdry}} &= \beta_1 K,
\end{align*}
with the same coefficients \(\alpha_0,\alpha_1,\beta_1\) as identified in TP1/TP2.
\end{theorem}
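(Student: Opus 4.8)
The plan is to show that every subsequential $\Gamma$-limit of $F_n$ coincides with
\[
\mathcal{E}(g):=\int_M(\alpha_0+\alpha_1 R_g)\,dV_g+\int_{\partial M}\beta_1 K_g\,dS_g,
\]
and then to record how the structural results of Section~\ref{sec:06-structure} make the coefficients $\alpha_0,\alpha_1,\beta_1$ forced rather than merely compatible. First I would observe that the $\liminf$ inequality of Proposition~\ref{prop:liminf} and the recovery construction of Proposition~\ref{prop:limsup} are stated for the full sequence $(F_n)$: the $\Gamma$-$\liminf$ inequality is inherited by every subsequence, and the recovery sequence $g_n:=S_{\varepsilon_n}[g]$ is a valid recovery sequence along any subsequence. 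Since the $\Gamma$-topology is the $C^1$-topology on $\Xcal$, which is metrizable, subsequences along which $F_n$ $\Gamma$-converges exist by the compactness of $\Gamma$-convergence \parencite{DalMaso1993Gamma}; for any such subsequence the limit $F$ satisfies $\mathcal{E}(g)\le F(g)\le\mathcal{E}(g)$ for all $g\in\Xcal$, i.e.\ $F=\mathcal{E}$. This already proves the theorem; the remaining steps explain the structural mechanism behind the specific form of $\mathcal{E}$.

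Next I would localize. For fixed $g$, the set function $A\mapsto F(g;A)$ obtained from the localized $\Gamma$-limit is, by the De~Giorgi--Letta criterion together with the quasi-additivity bound of Lemma~\ref{lem:quasi-additivity} (whose defect $C\delta_n\Per_g(A\cap B)$ vanishes as $n\to\infty$), the restriction of a finite Borel measure on $M$ \parencite{DalMaso1993Gamma,Braides2002Gamma}. By Proposition~\ref{prop:caratheodory-densities} this measure splits into an interior part absolutely continuous with respect to $dV_g$ and a boundary part absolutely continuous with respect to the surface measure on $\partial M$, with Radon--Nikodym densities $f_{\mathrm{in}}$ and $f_{\mathrm{bdry}}$; this is exactly the representation of Proposition~\ref{prop:integral-representation}. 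Dividing $F(g;B_\rho(x))$ by $\Vol_g(B_\rho(x))$ and $F(g;U_\rho(y))$ by $\Area_g(B_\rho(y)\cap\partial M)$, sending $n\to\infty$ and then $\rho\downarrow0$, the interior and boundary blow-up limits computed in Propositions~\ref{prop:tp1-interior} and~\ref{prop:tp2-boundary} identify $f_{\mathrm{in}}=\alpha_0+\alpha_1 R_g$ a.e.\ on $M^\circ$ and $f_{\mathrm{bdry}}=\beta_1 K_g$ a.e.\ on $\partial M$, via the Lebesgue and surface-Lebesgue differentiation theorems.

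Finally I would invoke the structure side for the ``necessarily'' clause. Proposition~\ref{prop:lovelock-structure} shows that any diffeomorphism-natural, local density of interior order $\le 2$ and boundary order $\le 1$ is, up to divergences, $A_0+A_1 R$ in the interior and $B_1 K$ on the boundary, and Lemma~\ref{lem:scaling} rules out all higher Lovelock and higher-order boundary invariants once they are paired with the fixed prefactor $a_n\simeq h^{2-d}$ and the observed per-cell scales $h^d$ and $h^{d-1}$. Hence the admissible densities form a three-parameter family, and the sandwich of the first step --- equivalently, evaluating the blow-up densities on the calibration geometries (a flat closed manifold for $\alpha_0$, a constant-curvature geodesic ball for $\alpha_1$, and the Euclidean ball/half-space with the outer-normal sign convention of Proposition~\ref{prop:tp2-boundary} for $\beta_1$), which is how $\alpha_0,\alpha_1,\beta_1$ were extracted in TP1/TP2 and pinned in Appendix~C/TP7 --- forces $A_0=\alpha_0$, $A_1=\alpha_1$, $B_1=\beta_1$.

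\medskip\noindent\emph{Main obstacle.} The delicate step is the middle one: verifying that an \emph{arbitrary} subsequential $\Gamma$-limit $F(g;\cdot)$ is local and inner-regular enough to be the restriction of a Borel measure admitting the representation of Proposition~\ref{prop:integral-representation}, so that Proposition~\ref{prop:lovelock-structure} can be applied to \emph{it}; this needs the quasi-additivity defect of Lemma~\ref{lem:quasi-additivity} to survive passage to the $\Gamma$-limit and a De~Giorgi--Letta-type subadditivity/superadditivity check. Once that is in place, both the density identification and the coefficient matching are automatic. Logically the first-step sandwich already gives $F=\mathcal{E}$ outright; the measure-representation route is included because it is what makes the \emph{structural} uniqueness --- that these and only these invariants can appear --- transparent.
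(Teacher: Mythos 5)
Your proposal is correct, and its third step (naturality via Proposition~\ref{prop:lovelock-structure}, exclusion of higher invariants via the scaling test of Lemma~\ref{lem:scaling}, then pinning $A_0,A_1,B_1$ on calibration geometries) is essentially the paper's entire proof of this theorem. What you add is a genuinely different and more economical first route: since the $\liminf$ inequality of Proposition~\ref{prop:liminf} and the recovery sequence of Proposition~\ref{prop:limsup} hold for the full sequence, they are inherited by every subsequence, so any subsequential $\Gamma$-limit is sandwiched and must equal $\mathcal{E}$ outright. The paper does not use this sandwich here --- it reserves the $\liminf$/$\limsup$ combination for Theorem~\ref{thm:gamma-limit} and proves the present uniqueness statement purely structurally --- but your observation is sound and arguably gives the shorter proof; the structural route buys the extra information that no other natural densities could have appeared even if the blow-up coefficients had not been computed. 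Your flagged ``main obstacle'' (establishing the measure representation for an arbitrary subsequential limit via De~Giorgi--Letta) is in fact discharged by the theorem's hypotheses, which explicitly assume the integral representation of Proposition~\ref{prop:integral-representation}, so no additional work is needed there. One small slip: the paper calibrates $\alpha_1$ on \emph{closed} constant-curvature manifolds (no boundary), not on a constant-curvature geodesic ball; using a ball would re-introduce boundary terms and fail to isolate $A_1$.
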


\begin{proof}
By Proposition~\ref{prop:lovelock-structure}, any admissible limit density must be of the form
\(f_{\mathrm{in}}=A_0+A_1R\) and \(f_{\mathrm{bdry}}=B_1K\) up to divergences.
Lemma~\ref{lem:scaling} excludes any additional (higher-order) invariants consistent with the discrete scaling \(a_n\simeq h^{2-d}\) and the TP1/TP2 per-cell laws; divergences reduce to boundary terms and, by our first-layer analysis and calibration, cannot alter the principal densities.

To identify the constants, use calibrating geometries:
(i) On a flat torus \((M,g_{\rm flat})\) one has \(R\equiv 0\) and no boundary, so
\(F(g_{\rm flat})=A_0\,\Vol(M)\).
By TP1, the interior coefficient of the leading term equals \(\alpha_0\); hence \(A_0=\alpha_0\).

(ii) On closed manifolds of constant sectional curvature \(\kappa\) one has \(R=d(d{-}1)\kappa\).
Both TP1 and the candidate density give
\(F = A_0\,\Vol + A_1\,d(d{-}1)\kappa\,\Vol\) at leading order, while TP1 identifies the coefficient of \(R\) as \(\alpha_1\); hence \(A_1=\alpha_1\).

(iii) On the Euclidean ball \(B_r\subset\mathbb{R}^d\) with outward normal, TP2 yields the first-layer boundary
contribution \(\beta_1 \int_{\partial M} K\,dS\) (with \(K\equiv (d-1)/r>0\) in our sign convention).
Therefore \(B_1=\beta_1\).

Thus the limit density is uniquely fixed and coincides with the one determined by TP1/TP2.
\end{proof}

\begin{remark}[Divergences and boundary localization]
Any interior divergence integrates to a boundary term.
In the first-layer boundary analysis of TP2, such terms either vanish under the Euclidean-ball calibration or contribute only to the global \(O(h)\) boundary remainder, which disappears in the limit and does not affect the principal density \(B_1K\).
\end{remark}

\section{TP7 — Calibration of the constants \texorpdfstring{$c_0,c_1,c_2$}{c0,c1,c2}}
\label{sec:07-constants-calibration}

\begin{definition}[Calibration of coefficients]
Let \(\alpha_0,\alpha_1,\beta_1\) be the blow-up coefficients from TP1/TP2. We set
\[
c_0:=\alpha_0,\qquad c_1:=\alpha_1,\qquad c_2:=\beta_1,
\]
so that the limiting functional reads
\begin{align*}
F(g) \;=\; c_0\!\int_M dV_g \;+\; c_1\!\int_M R_g\,dV_g \\
\qquad\quad +\; c_2\!\int_{\partial M} K_g\,dS_g .
\end{align*}
\end{definition}

\begin{remark}[Sign convention for \(K\)]
We use \(K=\tr_h(II)\) with respect to the \emph{outward} unit normal on \(\partial M\).
On the Euclidean ball \(\overline{B_r}\subset\R^d\) one has \(K\equiv (d-1)/r>0\) with this convention; hence the boundary calibration in (D3) fixes the sign of \(c_2\) accordingly (GHY sign convention \parencite{York1972,GibbonsHawking1977}).
\end{remark}

\subsection*{Diagnostics and normalization tests}
The following test cases fix the three constants consistently with TP1/TP2:

\paragraph{(D1) Flat torus.}
On a flat compact manifold without boundary, \(R\equiv 0\).
Then
\[
F(g_{\mathrm{flat}})=c_0\,\Vol(M),
\]
which identifies \(c_0=\alpha_0\) as the interior zero-curvature density.

\paragraph{(D2) Constant sectional curvature.}
On a closed manifold with constant sectional curvature \(\kappa\),
\(
R=d(d-1)\kappa
\),
hence
\begin{align*}
F(g) &= c_0\,\Vol(M) \\
&\quad + c_1\,d(d-1)\kappa\,\Vol(M),
\end{align*}
identifying \(c_1=\alpha_1\) via comparison with TP1.

\paragraph{(D3) Euclidean ball (boundary calibration).}
Let \(M=\overline{B_r}\subset\R^d\) with the Euclidean metric and the outward unit normal.
Then \(R\equiv 0\) and \(K\equiv (d-1)/r>0\) on \(\partial M=\mathbb{S}^{d-1}_r\).
The first-layer analysis (TP2) yields
\[
F_{\mathrm{bdry}}(g)\;=\;c_2 \int_{\partial M} K\,dS
\;=\; c_2\,\frac{d-1}{r}\,|\mathbb{S}^{d-1}_r|
\;=\; c_2\,(d-1)\,|\mathbb{S}^{d-1}|\,r^{d-2}.
\]
Comparing with the TP2 boundary blow-up coefficient identifies \(c_2=\beta_1\) and fixes its (positive) sign under the stated convention.

\subsection*{Mesh robustness}
\begin{lemma}[Mesh robustness (Lemma S)]
\label{lem:mesh-robustness}
The coefficients \(\alpha_0,\alpha_1,\beta_1\) (and hence \(c_0,c_1,c_2\)) depend only on the data listed in the constants bracket:
\[
\begin{aligned}
\bigl(&d,\ \ell,\ \sup_{u\in K_{\mathrm{feat}}}\|D\ell(u)\|,\ \sup_{u\in K_{\mathrm{feat}}}\|D^2\ell(u)\|,\\
&\mu_1,\ \mu_2,\ C_k,\ \lambda_{\mathrm{ell}},\ \Lambda_{\mathrm{ell}},\ \iota_0,\\
&\|\Rm\|_\infty\ \text{(optionally }\|\nabla\Rm\|_\infty\text{)},\ \text{shape-class}\bigr),
\end{aligned}
\]
and are independent of the particular sequence of boundary-fitted, shape-regular meshes.
\end{lemma}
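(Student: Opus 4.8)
The plan is to show that $\alpha_0,\alpha_1,\beta_1$ are, by their very construction in Propositions~\ref{prop:tp1-interior} and~\ref{prop:tp2-boundary}, functionals of the constants bracket \emph{alone}, so that no admissible partition can alter them; the calibration identities $c_0:=\alpha_0$, $c_1:=\alpha_1$, $c_2:=\beta_1$ then transport the conclusion to $c_0,c_1,c_2$. Thus it suffices to re-examine how $\alpha_0,\alpha_1$ and $\beta_1$ are extracted and to check, line by line, that the only inputs are items listed in Section~\ref{sec:constants-bracket}. Lemma~\ref{lem:U} already packages the uniformity this needs: it asserts the TP1/TP2 cell expansions with the \emph{same} coefficients $\alpha_0,\alpha_1,\beta_1$ for \emph{all} shape-regular meshes of scale $h$; what remains is to record \emph{why} those coefficients do not see the mesh.

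\emph{Interior coefficients.} Recall from Step~4 of the proof of Proposition~\ref{prop:tp1-interior} that $\alpha_0,\alpha_1$ arise from the blow-up
\[
H(j^2 g(x))=\lim_{n\to\infty} h^{-d}E_n(c_n;g)=\alpha_0+\alpha_1 R_g(x),
\]
and that $H$ is assembled from exactly four ingredients: (i) the normal-coordinate expansions \eqref{eq:metric-exp}–\eqref{eq:jac-exp}, whose coefficients are universal polynomials in the curvature $2$-jet and hence refer to no mesh; (ii) the normalized isotropic interior window, which enters the blow-up only through the moments $\mu_2$ and the higher-moment bounds $C_k$; (iii) the loss $\ell$, only through $D\ell,D^2\ell$ on the compact feature set $K_{\mathrm{feat}}$; and (iv) the \texttt{flat\_ref} normalization of Appendix~C (Lemmas~\ref{lem:flatref-interior-existence}–\ref{lem:flatref-interior-uniqueness}), itself pinned down by (i)–(iii) and the shape-class, with an $o(h^2)$ ambiguity that does not affect the leading density. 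The ellipticity constants $\lambda_{\mathrm{ell}},\Lambda_{\mathrm{ell}}$, the injectivity radius $\iota_0$ and the curvature bounds enter only through the uniformity of the remainder (Lemma~\ref{lem:U}), not through the value of $H$. Every one of these quantities is listed in Section~\ref{sec:constants-bracket}, none refers to a particular $\mathcal T_n$, and the meshsize $h$ has been divided out; hence $\alpha_0,\alpha_1$ are determined by the constants bracket and coincide for any two boundary-fitted, shape-regular mesh sequences.

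\emph{Boundary coefficient and the main point.} The same inspection applies to Proposition~\ref{prop:tp2-boundary}: $\beta_1=\beta_{\mathrm{loc}}\,\mu_1$, where $\mu_1=\langle t\rangle/h$ is an explicit bracket element and $\beta_{\mathrm{loc}}$ is read off from the linearization \eqref{eq:psi-linear}, which is built from the reflected-Fermi tube and metric expansions \eqref{eq:tube-jac}–\eqref{eq:fermi-metric} (universal polynomials in $II$ and the boundary curvature), tangential-window isotropy up to order~2, the normal window $w_0$ through $\mu_1$ and $\langle t^k\rangle\le C_k h^k$, the loss $\ell$, and the Euclidean half-space \texttt{flat\_ref} of Appendix~C (Lemmas~\ref{lem:flatref-boundary-existence}–\ref{lem:flatref-boundary-uniqueness}) — all bracket data, with the $o(h^2)$ quasi-uniqueness of the boundary \texttt{flat\_ref} leaving the $h^{d-1}$-coefficient untouched. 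The only genuinely non-bookkeeping step — and where I expect the subtlety to lie — is verifying that the per-cell functional's dependence on the \emph{individual} cell $c$ collapses after rescaling: although $\Phi_n(c;g)$ and hence $E_n(c;g)$ do depend on $c$, the blow-ups of $h^{-d}E_n$ (interior) and of $h^{-(d-1)}E_n$ (first boundary layer) depend on $c$ only through its base point $x_c$ (resp. $s_c$) and the fixed shape-class. This is precisely what shape-regularity delivers uniformly over the shape-class, once the parity cancellation of Step~1 of TP1 and the $o(h^2)$ / $o(h)$ \texttt{flat\_ref} matching have killed every cell-geometry-dependent lower-order term; it is also the content of Lemma~\ref{lem:U}, whose uniform constants depend only on the bracket. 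With $\alpha_0,\alpha_1,\beta_1$ fixed this way, the definition $c_0:=\alpha_0$, $c_1:=\alpha_1$, $c_2:=\beta_1$ of TP7 immediately yields the claim for $c_0,c_1,c_2$.
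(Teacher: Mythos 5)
Your proposal is correct and takes essentially the same route as the paper's own (brief) proof idea: both argue that the TP1/TP2 blow-up coefficients are assembled exclusively from bracket data (jet expansions, window moments, $C^2$ bounds on $\ell$, shape-class, \texttt{flat\_ref} normalization), with mesh-independence and uniformity of remainders packaged by Lemma~\ref{lem:U}, and then transfer to $c_0,c_1,c_2$ via the TP7 calibration. Your version is simply a more detailed line-by-line audit of the same argument, including the useful observation that the rescaled per-cell functionals depend on the cell only through its base point and the shape-class.
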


\begin{proof}[Idea]
The TP1/TP2 blow-up arguments use only local information: (i) bounded geometry to control jets and chart overlaps; (ii) normalized windows with fixed moment data \((\mu_1,\mu_2,C_k)\); (iii) shape-regularity constants of the cell family; and (iv) the \(C^2\) bounds of \(\ell\) on the compact feature set.
All these enter the expansions and remainder estimates uniformly via Lemma~U.
Therefore the leading coefficients are invariant under changing the mesh within the fixed shape-class, which proves the claim.
\end{proof}

\section{\Gammalimit\ theorem (EH+GHY) and global rate}
\label{sec:08-gamma-theorem}

\noindent\emph{Sign convention for the boundary term.}
Throughout we use the \emph{outer} unit normal to define \(II\) and \(K=\mathrm{tr}_h II\); with this choice \(K>0\) on the Euclidean sphere \(S_r^{d-1}\).
This is the convention adopted in TP2 and used in the calibration in TP7.
\smallskip

Within De~Giorgi’s \emph{\Gammaconv} framework~\parencite{DeGiorgi1975}, we obtain the following limit result.
Background expositions are given by Dal~Maso \parencite{DalMaso1993Gamma} and Braides \parencite{Braides2002Gamma}.

\begin{theorem}[\(\Gamma\)-limit: Einstein--Hilbert with GHY boundary term]
\label{thm:gamma-limit}
Under the standing assumptions (BA1–BA3), bounded geometry, boundary-fitted shape-regular meshes, normalized/isotropic windows, and scaling \(a_n\simeq h^{2-d}\), the family \(F_n\) \(\Gamma\)-converges on \((\Xcal,\|\cdot\|_{C^1})\) to
\[
F(g)\;=\; c_0\!\int_M\! dV_g \;+\; c_1\!\int_M\! R_g\,dV_g \;+\; c_2\!\int_{\partial M}\! K_g\,dS_g,
\]
with \(c_0=\alpha_0\), \(c_1=\alpha_1\), \(c_2=\beta_1\) as calibrated in Section~\ref{sec:07-constants-calibration}.
\end{theorem}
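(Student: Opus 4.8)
The plan is to obtain the two defining inequalities of De~Giorgi $\Gamma$-convergence on $(\Xcal,\|\cdot\|_{C^1})$ directly from the results already in place, and then to fix the limiting density via the calibration of Section~\ref{sec:07-constants-calibration}. Recall that $F_n\xrightarrow{\Gamma}F$ on $(\Xcal,\|\cdot\|_{C^1})$ means: \emph{(i) (liminf)} for every $g\in\Xcal$ and every $g_n\in\Xcal$ with $g_n\to g$ in $C^1$, $\liminf_{n}F_n(g_n;M)\ge F(g;M)$; and \emph{(ii) (recovery)} for every $g\in\Xcal$ there is $g_n\in\Xcal$ with $g_n\to g$ in $C^1$ and $\limsup_{n}F_n(g_n;M)\le F(g;M)$. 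First I would check that the candidate
\[
F(g)=c_0\!\int_M dV_g+c_1\!\int_M R_g\,dV_g+c_2\!\int_{\partial M}K_g\,dS_g
\]
is finite and well defined on $\Xcal\subset C^{1,1}$: bounded geometry gives $\|\Rm_g\|_\infty\le C$, hence $R_g\in L^\infty(M)\subset L^1(M)$; the $C^2$ boundary together with $g\in C^{1,1}$ gives $K_g\in L^\infty(\partial M)\subset L^1(\partial M)$; and uniform ellipticity makes $dV_g$ and $dS_g$ comparable to the reference forms.

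The body of the proof is then pure assembly. Inequality (i) is precisely Proposition~\ref{prop:liminf} (TP4), whose conclusion $\liminf_n F_n(g_n;M)\ge\int_M(\alpha_0+\alpha_1 R_g)\,dV_g+\int_{\partial M}\beta_1 K_g\,dS_g$ becomes $\liminf_n F_n(g_n;M)\ge F(g;M)$ after substituting $c_0=\alpha_0$, $c_1=\alpha_1$, $c_2=\beta_1$ from Section~\ref{sec:07-constants-calibration}. Inequality (ii) is precisely Proposition~\ref{prop:limsup} (TP5): the canonical sequence $g_n:=S_{\varepsilon_n}[g]$ with $\varepsilon_n=\eta h$ lies in $\Xcal$, converges to $g$ in $C^1$, and satisfies $\limsup_n F_n(g_n;M)\le\int_M(\alpha_0+\alpha_1 R_g)\,dV_g+\int_{\partial M}\beta_1 K_g\,dS_g=F(g;M)$ with global remainder $O(h)$. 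Combining the two, the recovery sequence attains $\lim_n F_n(g_n;M)=F(g;M)$ while every competitor obeys $\liminf_n F_n(g_n;M)\ge F(g;M)$, which is exactly $\Gamma$-convergence. I would also note that the hypotheses invoked --- (BA1)--(BA3), bounded geometry, boundary-fitted shape-regular meshes, the normalized isotropic interior windows and anisotropic boundary windows with $\langle t\rangle=\mu_1 h$, and the scaling $a_n\simeq h^{2-d}$ --- are exactly those of Propositions~\ref{prop:liminf} and~\ref{prop:limsup}, so no new assumption is introduced; and that Theorem~\ref{thm:uniqueness-density} (TP6) pins down the density of any subsequential $\Gamma$-limit, so the result is the full $\Gamma$-limit and coincides with $F$.

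Because all the analysis has been discharged in TP1--TP5, the only thing that needs care here is bookkeeping: the error terms in the liminf and limsup steps must be controlled through the \emph{same} constants bracket (Section~\ref{sec:constants-bracket}) so that the two bounds have matching right-hand sides with no residual gap, and the nested order of limits --- $n\to\infty$ first, then the localization radius $\rho\downarrow0$ and the boundary-tube thickness $\varepsilon\downarrow0$ --- must be respected. Lemma~\ref{lem:U} provides the uniformity that lets this ordering survive the $C^{1,1}\!\to\!C^{2,1}$ smoothing $\tilde g_n=S_{\varepsilon_n}[g_n]$ in TP4 and $g_n=S_{\varepsilon_n}[g]$ in TP5. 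I expect no genuine difficulty beyond invoking Propositions~\ref{prop:liminf} and~\ref{prop:limsup}, Lemma~\ref{lem:U}, and the calibration of Section~\ref{sec:07-constants-calibration} in the right order.
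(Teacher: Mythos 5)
Your proposal is correct and matches the paper's own proof: both obtain the liminf inequality from Proposition~\ref{prop:liminf} (TP4), the limsup inequality from the recovery sequence $g_n=S_{\varepsilon_n}[g]$ of Proposition~\ref{prop:limsup} (TP5), and then identify the coefficients via the calibration of Section~\ref{sec:07-constants-calibration} together with Theorem~\ref{thm:uniqueness-density}. The additional well-definedness and bookkeeping remarks are consistent with, and do not alter, the paper's argument.
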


\begin{proof}
The \(\liminf\) inequality is Proposition~\ref{prop:liminf} (TP4).
The \(\limsup\) inequality is Proposition~\ref{prop:limsup} (TP5), via the recovery sequence \(g_n=S_{\varepsilon_n}[g]\), \(\varepsilon_n=\eta h\).
By the integral representation (Proposition~\ref{prop:integral-representation}) and the structure/uniqueness (Theorem~\ref{thm:uniqueness-density}), the limit is exactly the stated \(F\) with coefficients \((c_0,c_1,c_2)=(\alpha_0,\alpha_1,\beta_1)\).
\end{proof}

\begin{corollary}[Convergence of minima and minimizers under equi-coercivity]
\label{cor:minimizers}
\noindent\textit{We do not assert coercivity of $F_n$ or $F$. The statements below hold under \EC\ as specified in Section~\ref{sec:00b-assumptions-ec-scope}.}

Assume in addition that $\{F_n\}$ is equi-coercive on $(\Xcal,\|\cdot\|_{C^1})$.
\begin{enumerate}
  \item $\displaystyle \lim_{n\to\infty}\inf_{g\in\Xcal}F_n(g)=\inf_{g\in\Xcal}F(g)$.
  \item If $g_n$ is a sequence of (almost) minimizers for $F_n$ (i.e.\ $F_n(g_n)\le \inf F_n + o(1)$), then every $C^1$-accumulation point $g_\ast$ of $\{g_n\}$ is a minimizer of $F$, and $F_n(g_n)\to F(g_\ast)$.
\end{enumerate}
\end{corollary}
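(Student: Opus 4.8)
The plan is to invoke the abstract ``fundamental theorem of $\Gamma$-convergence'' (see \parencite{DalMaso1993Gamma,Braides2002Gamma}): once $F_n$ $\Gamma$-converges to $F$ on $(\Xcal,\|\cdot\|_{C^1})$ by Theorem~\ref{thm:gamma-limit} and $\{F_n\}$ is equi-coercive in the sense fixed in Section~\ref{sec:00b-assumptions-ec-scope}, convergence of the infima and of (almost-)minimizers follows mechanically. I would carry out the argument in three short steps, being careful about finiteness and about the order of subsequence extractions.

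\emph{Step 1 (upper bound for the infima).} Fix any $\bar g\in\Xcal$ and let $\bar g_n\to\bar g$ be a recovery sequence from Proposition~\ref{prop:limsup}, so that $\limsup_n F_n(\bar g_n)\le F(\bar g)$. Since $\inf_{\Xcal}F_n\le F_n(\bar g_n)$, taking $\limsup$ and then the infimum over $\bar g$ gives $\limsup_n\inf_{\Xcal}F_n\le\inf_{\Xcal}F$; in particular $\limsup_n\inf_{\Xcal}F_n<\infty$. \emph{Step 2 (precompactness of almost-minimizers and the lower bound).} Let $g_n$ satisfy $F_n(g_n)\le\inf_{\Xcal}F_n+\eta_n$ with $\eta_n\downarrow0$. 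By Step~1, $\sup_n F_n(g_n)<\infty$, so $\{g_n\}$ lies in a fixed sublevel set of $\{F_n\}$, and equi-coercivity yields precompactness of $\{g_n\}$ in $(\Xcal,\|\cdot\|_{C^1})$. If $\liminf_n\inf_{\Xcal}F_n<\inf_{\Xcal}F$ held, one could choose a subsequence along which $\inf F_{n_k}$ tends to this liminf and, by precompactness, a further subsequence with $g_{n_k}\to g_\ast$ in $C^1$; the liminf inequality (Proposition~\ref{prop:liminf}) then gives $F(g_\ast)\le\liminf_k F_{n_k}(g_{n_k})=\liminf_k\inf F_{n_k}<\inf_{\Xcal}F\le F(g_\ast)$, a contradiction. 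Hence $\liminf_n\inf_{\Xcal}F_n\ge\inf_{\Xcal}F$, and with Step~1 this is part~(1): $\inf_{\Xcal}F_n\to\inf_{\Xcal}F$; the same estimate shows $\inf_{\Xcal}F$ is attained at any $C^1$-limit of an almost-minimizing sequence.

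\emph{Step 3 (convergence of minimizers and values).} Let $g_\ast$ be a $C^1$-accumulation point of an almost-minimizing sequence $\{g_n\}$, say $g_{n_k}\to g_\ast$. By Proposition~\ref{prop:liminf} and part~(1), $F(g_\ast)\le\liminf_k F_{n_k}(g_{n_k})\le\liminf_k(\inf F_{n_k}+\eta_{n_k})=\inf_{\Xcal}F\le F(g_\ast)$, so $g_\ast$ minimizes $F$. For the value, $\limsup_n F_n(g_n)\le\limsup_n(\inf F_n+\eta_n)=\inf_{\Xcal}F$, while a subsequence extraction identical to Step~2 (using that $\{g_n\}$ is precompact, so any subsequence of $F_n(g_n)$ has a further subsequence with a $C^1$-convergent underlying subsequence) gives $\liminf_n F_n(g_n)\ge\inf_{\Xcal}F$; hence $F_n(g_n)\to\inf_{\Xcal}F=F(g_\ast)$, which is part~(2).

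I do not expect a genuine analytic obstacle here — this is entirely the abstract $\Gamma$-convergence machinery — but the one step that must be executed cleanly is the \emph{a priori} boundedness of the almost-minimizing sequence: without $\sup_n F_n(g_n)<\infty$ one cannot apply equi-coercivity at all. This is precisely why the $\limsup$ inequality (Proposition~\ref{prop:limsup}), not merely the $\liminf$ one, is needed even though the conclusion only concerns minimizers: it is the recovery sequence that caps $\inf_{\Xcal}F_n$ uniformly from above. A secondary caveat, already flagged in the statement, is that equi-coercivity is an added hypothesis (the \EC\ regime of Section~\ref{sec:00b-assumptions-ec-scope}) and is \emph{not} a consequence of the $\Gamma$-limit; within that regime the argument above is routine.
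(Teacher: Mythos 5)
Your proposal is correct and is exactly the standard ``fundamental theorem of $\Gamma$-convergence'' argument that the paper itself invokes: the paper's proof of Corollary~\ref{cor:minimizers} is a one-line citation of De~Giorgi and the textbook accounts, whereas you have simply written out the details (recovery sequences capping $\inf F_n$ from above, equi-coercivity giving precompactness of almost-minimizers, the liminf inequality closing the loop). No gap; your emphasis on needing the limsup inequality to obtain a priori boundedness before equi-coercivity can be applied is the right point to be careful about.
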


\begin{proof}
This is the standard consequence of \(\Gamma\)-convergence plus equi-coercivity (see De~Giorgi \parencite{DeGiorgi1975} and the textbook accounts \parencite{DalMaso1993Gamma,Braides2002Gamma}).
\end{proof}

\begin{remark}[Global rate \(O(h)\) (optional)]
If, in addition, the smoothing \(S_{\varepsilon_n}\) and the windows satisfy the regularity hypotheses used in TP1/TP2 with uniform constants (Lemma~U), then with the canonical recovery \(g_n=S_{\eta h}[g]\) one has
\[
F_n(g_n)\;=\;F(g)\;+\;O(h),
\]
where the interior contribution is \(O(h^2)\) (sum of \(O(h^{d+2})\) per-cell errors over \(\asymp h^{-d}\) interior cells), and the boundary contribution is \(O(h)\) (sum of \(O(h)\) first-layer cell errors over \(\asymp h^{1-d}\) boundary cells).
\end{remark}

\section{Constants bracket (global dependencies)}
\label{sec:constants-bracket}

\paragraph{Global convention.}
Throughout we assume \(\partial M\in C^2\) with no corners/edges; see Section~\ref{conv:no-corners}.
This avoids corner (codimension-two) contributions in boundary-layer arguments.

All symbols \(C\) and all implicit constants in the \(O(\cdot)\), \(o(\cdot)\) terms throughout the paper
depend \emph{only} on the following data; this collection is referred to as the \emph{constants bracket}:
\begin{itemize}
  \item the dimension \(d\ge 2\);
  \item the loss \(\ell\in C^2(K_{\mathrm{feat}})\) on a compact feature set, together with the bounds
  \[
  \sup_{u\in K_{\mathrm{feat}}}\bigl(\|D\ell(u)\|+\|D^2\ell(u)\|\bigr) < \infty;
  \]
    \item normalized window moments \((\mu_1,\mu_2,C_k)\) and window support bound \(\Lambda_z\);
    interior isotropy up to order~2; boundary normal kernel with
  \[
  \langle t\rangle=\mu_1 h,\qquad \langle t^k\rangle\le C_k\,h^k.
  \]
  \item uniform ellipticity constants \(\lambda_{\mathrm{ell}},\Lambda_{\mathrm{ell}}\) for \(g\) and \(g^{-1}\) in a fixed finite-overlap atlas;
  \item bounded-geometry parameters: injectivity radius \(\iota_0>0\), curvature bounds \(\|\Rm\|_\infty\) (and, where explicitly invoked, \(\|\nabla\Rm\|_\infty\));
  \item the shape-regularity class of the meshes (aspect-ratio bound, minimum angle, boundary fitting with Hausdorff error \(O(h)\)) \parencite{Ciarlet2002,BrennerScott2008};
  \item the Besicovitch overlap constant \(N(d,\iota_0)\) for Vitali/Besicovitch coverings \parencite{Federer1969};
  \item (optional, not required) a uniform \(L^\infty\) bound on the windows, \(\sup_n\|w_n\|_\infty\).
\end{itemize}

\paragraph{Uniformity statements.}
Whenever we write ``uniformly on compact subsets'' in TP1/TP2, or assert uniform control of remainders in Lemma~U,
the dependence of such uniformity is exclusively through the constants bracket above.
This \emph{assumes} a fixed \(C^{1,1}\)-neighbourhood of a reference metric and a uniform finite-overlap atlas.
The full proof of Lemma~\ref{lem:U} is given in Appendix~\ref{app:smoothing} (Variant~A, $C^{2,1}$-regularity; isotropic moments and reflected Fermi charts).
In particular, the uniform constants do not depend on \(n\), \(\rho\), or \(\varepsilon\).

\paragraph{Consequences.}
\begin{enumerate}
  \item \textbf{Mesh robustness.} The coefficients \(\alpha_0,\alpha_1,\beta_1\) are invariant under changing the mesh within the same shape-regularity class, cf.\ Lemma~\ref{lem:mesh-robustness}.
    \item \textbf{Stable measure change.} Uniform ellipticity and \(C^0\) control imply
  \begin{align*}
  dV_{g_n}&=(1+o(1))\,dV_g,\\
  dS_{g_n}&=(1+o(1))\,dS_g,
  \end{align*}
  in TP5.
  \item \textbf{Quasi-additivity scale.} In Lemma~\ref{lem:quasi-additivity}, the tube-volume scaling and the interior per-cell bound (see \parencite{Federer1969})
  \begin{align*}
  \Vol_g\!\big(N_{\delta_n}(\cdot)\big)&\asymp \delta_n\,\Per_g(\cdot),\\
  \sup_c |E_n(c)|&\lesssim h^d,
  \end{align*}
  yield the defect estimate \(C\,\delta_n\,\Per_g(\cdot)\).
\end{enumerate}

\section{Scope and limitations}

\paragraph{Sign convention for \(K\).}
The mean curvature is \(K=\mathrm{tr}_h II\) with respect to the \emph{outer} unit normal on \(\partial M\).
With this convention \(K>0\) on spheres \(S_r^{d-1}\subset\mathbb{R}^d\).
Reflected Fermi coordinates use the inward normal coordinate \(t\ge0\), and the tube Jacobian expansion
is \(dV_g=(1-t\,K+O(t^2))\,dz'\,dt\), consistent with the chosen sign.
This convention is used uniformly in TP2/TP4/TP5/TP6/TP7.

\paragraph{Signature.}
We work in the Riemannian setting. Lorentzian signature requires additional control (hyperbolicity and gauge issues), and is outside our scope.

\paragraph{Cosmological constant.}
Adding \(\Lambda\int_M dV_g\) simply shifts \(c_0\) and does not alter the analysis (scaling and leading order unchanged).

\paragraph{Meshes and windows.}
We require boundary-fitted, shape-regular meshes and normalized windows with stated moment conditions.
Random, highly anisotropic, or strongly adaptive meshes without uniform shape control are not treated here.

\paragraph{Boundary regularity.}
We require \(\partial M\in C^2\). Corner terms are excluded (see Section~\ref{conv:no-corners}).

\paragraph{Topologies.}
Our \(\Gamma\)-limit is on \((\Xcal,\|\cdot\|_{C^1})\).
Other topologies (e.g.\ \(W^{2,p}\) modulo diffeomorphisms) are possible but not addressed.

\section{FAQ: anticipated questions}

\paragraph{(Q) Why assume equi-coercivity?}
(A) The EH functional is unbounded below (conformal factor); \(\Gamma\)-convergence controls limits, not coercivity.
We therefore state minimizer convergence under \EC\ (precompactness in \(C^1\)). (cf. \parencite{DalMaso1993Gamma})
Typical constrained regimes ensure \EC\ via uniform ellipticity, \(C^{1,1}\) bounds, volume and area constraints, and \(L^1\)-control of \(R^{-}, K^{-}\).

\paragraph{(Q) What about corners?}
(A) We assume \(\partial M\in C^2\) (see Section~\ref{conv:no-corners}).
Corner terms lead to additional, model-dependent boundary and codimension-two contributions and lie beyond our first-layer analysis.

\paragraph{(Q) Gauge/diffeomorphism issues?}
(A) All blow-ups and densities are formulated in a diffeomorphism-natural way.
Uniformity (Lemma~\ref{lem:U}) is chart-independent (finite-overlap atlas in bounded geometry) \parencite{KolarMichorSlovak1993}.

\paragraph{(Q) Relation to Regge calculus?}
(A) Regge provides measure convergence and a discrete GHY analogue. Our result is a \emph{variational} \(\Gamma\)-limit with Carathéodory densities and a recovery sequence,
which in particular yields stability of minimizers (under \EC). \parencite{Regge1961,CheegerMuellerSchrader1984,HartleSorkin1981,York1972,GibbonsHawking1977}

\paragraph{(Q) Can we weaken boundary-fittedness?}
(A) This would require uniform control of geometric distortion in Fermi tubes and an adjusted Lemma~\ref{lem:quasi-additivity}; we leave this for future work.

\section{Conclusions and Outlook}

We established that, under (BA1)–(BA3), bounded geometry, boundary-fitted shape-regular meshes, normalized window moments, and the scaling $a_n\simeq h^{2-d}$, the discrete family $F_n$ \emph{$\Gamma$-converges} on $(\Xcal,\|\cdot\|_{C^1})$ to
\[
F(g)\;=\; c_0\!\int_M dV_g \;+\; c_1\!\int_M R_g\,dV_g \;+\; c_2\!\int_{\partial M} K_g\,dS_g,
\]
with coefficients $c_0=\alpha_0$, $c_1=\alpha_1$, and $c_2=\beta_1$ fixed by the canonical calibration cases (flat torus, constant curvature, Euclidean ball). The proof isolates the \emph{Carathéodory densities} $f_{\mathrm{in}}=\alpha_0+\alpha_1 R$ and $f_{\mathrm{bdry}}=\beta_1 K$, derives the $\liminf/\limsup$ bounds via a boundary-aware smoothing recovery $S_{\eta h}[\,\cdot\,]$, and invokes naturality plus Lovelock-type scaling to restrict admissible densities to $\{1,R\}$ in the bulk and $K$ on the boundary. Uniformity (Lemma~U) ensures that all moduli depend only on the \emph{constants bracket}.

\paragraph{Rates and robustness.}
The global error is boundary-dominated $O(h)$, with interior remainder $O(h^2)$ obtained by summing $O(h^{d+2})$ cell errors over $\asymp h^{-d}$ interior cells. The leading boundary layer contributes at scale $h^{d-1}$ per cell and yields an $O(h)$ global remainder after summation over $\asymp h^{1-d}$ boundary cells, in agreement with TP2. All constants and rates are controlled by the constants bracket (Section~\ref{sec:constants-bracket}); in particular, $\alpha_0,\alpha_1,\beta_1$ are \emph{mesh- and window-robust} within the prescribed shape and moment classes. The scan-indifference principle (BA3) shows that implementation details that change the visitation order do not bias the limit beyond $O(\delta_n)$.

\paragraph{Scope and stability.}
Our analysis is Riemannian and assumes a $C^2$ boundary without corners. Corner (codimension-two) contributions are outside the first-layer theory and intentionally excluded. Convergence of minima and minimizers follows under \EC, as in Corollary~\ref{cor:minimizers}; the paper does not assert coercivity, which is typically enforced by problem-specific constraints (ellipticity, $C^{1,1}$ bounds, volume/area control, and $L^1$ control on the negative parts of $R$ and $K$).

\paragraph{Position within the MIS program.}
The theorem identifies the Einstein--Hilbert action with a Gibbons-Hawking-York boundary term as the continuum envelope of a local, diffeomorphism-natural MDL aggregate. This supports the MIS thesis that geometric field laws can arise as continuum limits of information-optimal discretizations. The present results do not depend on any external MIS manuscript. A concise MIS overview will follow and will be cited in a future version once publicly available.

\paragraph{Outlook.}
Three directions are both natural and actionable.
\begin{enumerate}
  \item \textbf{Lorentzian extension.} Adapt the smoothing/recovery and layer analysis to hyperbolic PDE structure and gauge, aiming at a Lorentzian $\Gamma$-limit with the correct boundary term.
  \item \textbf{Boundary complexity.} Incorporate corners and codimension-two terms, and relax boundary fitting (quantified control of Fermi-tube distortion and a refined quasi-additivity).
  \item \textbf{Adaptivity and numerics.} Extend to adaptive or stochastic meshes and verify the predicted rates in the reproducible protocol of Appendix~\ref{app:rates} (interior $O(h^2)$ versus boundary $O(h)$). 
\end{enumerate}
A further pragmatic step is to explore constraints that ensure \EC\ in applications (e.g., ellipticity bands, geometric priors), turning the abstract stability statement into a ready-to-use pipeline for computational experiments.

In sum, the paper delivers a complete variational limit for a diffeomorphism-natural discrete gravity functional with a first-layer boundary analysis, establishes sharp global rates, and integrates cleanly with the \emph{Lela2025MetaInformationSymmetry} agenda by showing that \EH\,+\,\GHY\ is the inevitable continuum limit under an MDL-style, information-theoretic discretization.

\clearpage
\appendix
\section{Smoothing \texorpdfstring{$S_{\varepsilon}$}{Sε} (normal/Fermi, \texorpdfstring{$L^1$}{L1}-stability)}\label{app:smoothing}

\parahead{Objective}
We construct a linear smoothing operator \(S_\varepsilon:\Xcal\subset C^{1,1}\to C^{1,1}\) such that for metrics \(g\) with bounded geometry (uniform finite atlas, uniform chart bounds, injectivity radius \(\iota_0>0\), uniform ellipticity)
\[
\|S_\varepsilon[g]-g\|_{C^1}=O(\varepsilon),\qquad
R_{S_\varepsilon[g]}\to R_g\ \text{in }L^1(M),\qquad
K_{S_\varepsilon[g]}\to K_g\ \text{in }L^1(\partial M),
\]
as \(\varepsilon\downarrow0\), and uniform ellipticity is preserved for all sufficiently small \(\varepsilon\).
All constants below depend only on the constants bracket (Section~\ref{sec:constants-bracket}).

\paragraph{Coverings and finite overlap.}
Throughout we use finite-overlap atlases and partitions of unity on bounded-geometry manifolds; the existence of coverings with uniformly finite overlap follows from standard Besicovitch covering arguments~\parencite{Federer1969}.

\subsection{Interior normal smoothing}
Fix a finite-overlap atlas \(\{(U_\alpha,\varphi_\alpha)\}_\alpha\) compatible with bounded geometry and a \(C^\infty\) partition of unity \(\{\chi_\alpha\}_\alpha\) subordinate to it with \(\sum_\alpha \chi_\alpha\equiv1\).
Let \(\rho\in C_c^\infty(\R^d)\) be a standard even mollifier with \(\int\rho=1\) and \(\rho_\varepsilon(x)=\varepsilon^{-d}\rho(x/\varepsilon)\).
On each interior chart \(U_\alpha\Subset M^\circ\) write \(g_\alpha=(\varphi_\alpha)_\ast g\), convolve componentwise, and pull back:
\[
(\mathcal{S}_\varepsilon g)_\alpha \ :=\ (\varphi_\alpha)^\ast\!\bigl(g_\alpha\ast\rho_\varepsilon\bigr).
\]
Define the interior smoothing by
\[
\mathcal{S}_\varepsilon^{\mathrm{int}}[g]\ :=\ \sum_\alpha \chi_\alpha\,(\mathcal{S}_\varepsilon g)_\alpha\quad\text{on }M^\circ.
\]
Classical convolution estimates in \(C^{1,1}\) and finite atlas overlap yield \parencite{Ciarlet2002}
\begin{equation}
\label{eq:int-C1}
\|\mathcal{S}_\varepsilon^{\mathrm{int}}[g]-g\|_{C^1(M^\circ)} \ \le\ C\,\varepsilon\,\|g\|_{C^{1,1}(M)}.
\end{equation}

\subsection{Reflected Fermi smoothing near \texorpdfstring{\(\partial M\)}{the boundary}}
Let \((V_\beta,\psi_\beta)\) be reflected Fermi charts along \(\partial M\): coordinates \((s,t)\in \R^{d-1}\times(-t_0,t_0)\)
with \(\partial M=\{t=0\}\) and \(t\ge0\) the inward normal; Jacobians are uniformly controlled by bounded geometry \parencite{BrennerScott2008}.
In such a chart, define the even reflection across the boundary
\[
\tilde g_\beta(s,t)\ :=\
\begin{cases}
g_\beta(s,t), & t\ge 0,\\
(\mathcal{R}^\ast g_\beta)(s,-t), & t<0,
\end{cases}
\qquad \mathcal{R}(s,t)=(s,-t),
\]
convolve in \(\R^d\), and restrict back to \(t\ge0\):
\[
(\mathcal{S}_\varepsilon^{\partial} g)_\beta \ :=\ \psi_\beta^\ast\!\Bigl(\bigl(\tilde g_\beta\ast\rho_\varepsilon\bigr)\big|_{t\ge0}\Bigr).
\]
Pick \(\chi_{\partial}\in C_c^\infty\) supported in the tubular neighborhood \(\{0\le t< c\,\reach_g(\partial M)\}\) with \(0<c<\tfrac12\) and set \(\chi_{\mathrm{int}}:=1-\chi_{\partial}\).
The global smoothing is defined by blending
\begin{equation}
\label{eq:S-eps-def}
S_\varepsilon[g]\ :=\ \chi_{\mathrm{int}}\;\mathcal{S}_\varepsilon^{\mathrm{int}}[g]\;+\;\chi_{\partial}\;\Bigl(\sum_\beta \eta_\beta\,(\mathcal{S}_\varepsilon^{\partial} g)_\beta\Bigr),
\end{equation}
where \(\{\eta_\beta\}_\beta\) is a partition of unity subordinate to \(\{V_\beta\}_\beta\).
For \(0<\varepsilon\le c\,\reach_g(\partial M)\), the reflected Fermi charts cover the support of \(\chi_\partial\) and \eqref{eq:S-eps-def} is well-defined in \(C^{1,1}\).

\begin{lemma}[Ellipticity preservation and \(C^1\)-control]
\label{lem:smoothing-C1}
There exist \(\varepsilon_0>0\) and constants \(\lambda'_{\mathrm{ell}},\Lambda'_{\mathrm{ell}}>0\) depending only on the constants bracket such that for all \(g\in\Xcal\) uniformly elliptic and \(0<\varepsilon\le\varepsilon_0\),
\[
\lambda'_{\mathrm{ell}}\Id \ \le\ S_\varepsilon[g]\ \le\ \Lambda'_{\mathrm{ell}}\Id .
\]
\[
\|S_\varepsilon[g]-g\|_{C^1(M)} \ \le\ C\,\varepsilon .
\]
In particular \(S_\varepsilon:\Xcal\to\Xcal\) and \(S_\varepsilon[g]\to g\) in \(C^1\) as \(\varepsilon\downarrow0\).
\end{lemma}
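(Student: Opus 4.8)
The plan is to split $S_\varepsilon[g]-g$ along the fixed blending functions $\chi_{\mathrm{int}},\chi_\partial$ of \eqref{eq:S-eps-def}, which are chosen once and for all (independently of $\varepsilon$) with bracket-controlled derivatives. Since multiplication by these cutoffs and by the subordinate partitions of unity is bounded on $C^1$, it suffices to bound $\|\mathcal{S}_\varepsilon^{\mathrm{int}}[g]-g\|_{C^1}$ on the interior charts and $\|(\mathcal{S}_\varepsilon^{\partial} g)_\beta-g\|_{C^1}$ on each reflected Fermi chart, and then reassemble using the uniform finite overlap of the two families. The interior bound is exactly \eqref{eq:int-C1}: componentwise convolution of the $C^{1,1}$ chart representatives with the even mollifier $\rho_\varepsilon$ and the classical one-scale estimate $\|f\ast\rho_\varepsilon-f\|_{C^1}\le C\varepsilon\|f\|_{C^{1,1}}$, summed over the finite atlas with bracket-controlled chart bounds, give $\|\mathcal{S}_\varepsilon^{\mathrm{int}}[g]-g\|_{C^1}\le C\varepsilon\|g\|_{C^{1,1}}$.

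For the $C^0$ estimate near $\partial M$ and for ellipticity preservation the argument is direct. In a reflected Fermi chart $(s,t)$ the reflected field $\tilde g_\beta$ has chart components that are globally Lipschitz across $\{t=0\}$, with Lipschitz constant $\lesssim\|g\|_{C^1}$ under bounded geometry; hence $\|\tilde g_\beta\ast\rho_\varepsilon-\tilde g_\beta\|_{C^0}\le C\varepsilon$, which survives restriction to $\{t\ge0\}$ and pullback by the uniformly controlled Fermi chart, so after reassembly $\|S_\varepsilon[g]-g\|_{C^0(M)}\le C\varepsilon$. Ellipticity preservation then follows by perturbation: from $\lambda_{\mathrm{ell}}\Id\le g\le\Lambda_{\mathrm{ell}}\Id$ in the fixed finite-overlap atlas together with the $C^0$-bound, for $\varepsilon\le\varepsilon_0:=\lambda_{\mathrm{ell}}/(2C)$ one obtains $\tfrac12\lambda_{\mathrm{ell}}\Id\le S_\varepsilon[g]\le(\Lambda_{\mathrm{ell}}+\tfrac12\lambda_{\mathrm{ell}})\Id$, so one sets $\lambda'_{\mathrm{ell}}:=\tfrac12\lambda_{\mathrm{ell}}$ and $\Lambda'_{\mathrm{ell}}:=\Lambda_{\mathrm{ell}}+\tfrac12\lambda_{\mathrm{ell}}$. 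As $S_\varepsilon[g]$ is a mollification it is smooth, hence $S_\varepsilon:\Xcal\to\Xcal$ once the $C^1$-bound below is established, and $\|S_\varepsilon[g]-g\|_{C^1}\le C\varepsilon$ then gives $S_\varepsilon[g]\to g$ in $C^1$ as $\varepsilon\downarrow0$.

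The genuinely delicate point — and the step I expect to be the main obstacle — is the $C^1$-estimate on the boundary collar. Writing $\operatorname{supp}\rho\subset B_{R_\rho}(0)$, on $\{t\ge R_\rho\varepsilon\}$ the convolution $\tilde g_\beta\ast\rho_\varepsilon$ samples only the one-sided $C^{1,1}$ data $g_\beta|_{t\ge0}$, so the interior estimate applies verbatim and yields an $O(\varepsilon)$ $C^1$-discrepancy there. On the remaining $O(\varepsilon)$-thin collar $\{0\le t\le R_\rho\varepsilon\}$ the obstruction is that the metric's inward normal derivative encodes the second fundamental form, $\partial_t h_{\alpha\beta}(s,0)=-2\,II_{\alpha\beta}(s)$, so a naive even reflection is only Lipschitz across $\partial M$ and its mollification leaves a jump-sized — bracket-controlled but not $\varepsilon$-small — defect in $\partial_t(\tilde g_{\alpha\beta}\ast\rho_\varepsilon)$ on the collar. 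The resolution is to take the reflection of Appendix~\ref{app:smoothing} to match the boundary $1$-jet (a Hestenes/Seeley-type reflection), e.g.\ replacing $h_{\alpha\beta}(s,|t|)$ by $3\,h_{\alpha\beta}(s,-t)-2\,h_{\alpha\beta}(s,-2t)$ for $t<0$, cut off so that $|2t|$ stays inside the Fermi chart; then $\tilde g_\beta\in C^{1,1}$ across $\{t=0\}$ and remains uniformly elliptic there for $|t|$ below a bracket-controlled threshold (by continuity of $3h(s,-t)-2h(s,-2t)\to h(s,0)$, uniformly in $s$), so the interior mollifier estimate applies to $\tilde g_\beta$ uniformly on the collar and gives $\|\tilde g_\beta\ast\rho_\varepsilon-\tilde g_\beta\|_{C^1}\le C\varepsilon\|\tilde g_\beta\|_{C^{1,1}}\le C'\varepsilon\|g\|_{C^{1,1}}$. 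Restriction to $\{t\ge0\}$, pullback, blending against $\chi_\partial$, and reassembly over the finite overlap then produce $\|S_\varepsilon[g]-g\|_{C^1(M)}\le C\varepsilon$, with all constants in the constants bracket (Section~\ref{sec:constants-bracket}). The rest is routine bookkeeping — $C^1$-boundedness of multiplication by the fixed cutoffs/partitions, of restriction to $\{t\ge0\}$, and of pullback by the uniformly controlled Fermi charts — whereas the crux is precisely that any reflection used near $\partial M$ must match the boundary $1$-jet, not merely the boundary values, for the mollified metric to converge at the claimed $C^1$-rate.
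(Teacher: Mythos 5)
Your proposal is correct and follows the paper's overall architecture (chartwise mollification plus \eqref{eq:int-C1} in the interior, reassembly through the fixed cutoffs and the finite-overlap atlas, and ellipticity preservation by $C^0$-perturbation of the eigenvalues), but on the one delicate step --- the $C^1$ estimate on the $O(\varepsilon)$ boundary collar --- you take a genuinely different route, and your route is the one that actually works. The paper's proof simply invokes ``the analogous estimate in reflected Fermi charts'', using the plain even reflection $\tilde g_\beta$ from \eqref{eq:S-eps-def}. As you observe, that extension is only Lipschitz across $\{t=0\}$ whenever $II\neq 0$: the normal derivative of the tangential block jumps from $+2II$ to $-2II$, so for an even mollifier one gets $\partial_t(\tilde h_{\alpha\beta}\ast\rho_\varepsilon)(s,0)\approx 0$ while $\partial_t h_{\alpha\beta}(s,0)=-2\,II_{\alpha\beta}(s)$; the $C^1$ defect on the collar is therefore of size $\asymp |II|$ uniformly in $\varepsilon$, and the claimed bound $\|S_\varepsilon[g]-g\|_{C^1}\le C\varepsilon$ (indeed even plain $C^1$-convergence) fails for the even reflection unless $\partial M$ is totally geodesic. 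Your replacement by a first-order Hestenes/Seeley reflection ($3h(s,-t)-2h(s,-2t)$ for $t<0$, cut off inside the Fermi chart) matches the boundary $1$-jet, yields a $C^{1,1}$ extension with bracket-controlled norm and ellipticity on a bracket-controlled collar, and lets the one-scale mollifier estimate apply uniformly up to the boundary; this is the correct repair, and it is compatible with everything downstream (the $L^1$-stability in Lemma~\ref{lem:L1-curvature} and the use of $S_\varepsilon$ in TP4/TP5). The only caveat worth recording is that the construction in \eqref{eq:S-eps-def} should be amended accordingly: as literally written, the even reflection does not support the stated $C^1$ rate, so your argument not only proves the lemma but also exposes and fixes a gap in the paper's own proof.
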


\begin{proof}
The \(C^1\)-estimate follows by combining \eqref{eq:int-C1} with the analogous estimate in reflected Fermi charts (convolution commutes with derivatives, and finite overlap is uniform by bounded geometry), plus the smooth cutoff in \eqref{eq:S-eps-def}.
For ellipticity, write \(S_\varepsilon[g]=g+(S_\varepsilon[g]-g)\).
By uniform ellipticity of \(g\) and \(\|S_\varepsilon[g]-g\|_{C^0}\le C\varepsilon\), the eigenvalues of \(S_\varepsilon[g]\) stay within a fixed neighborhood of those of \(g\) for sufficiently small \(\varepsilon\); hence \(S_\varepsilon[g]\) remains positive definite with bounds depending only on \((\lambda_{\mathrm{ell}},\Lambda_{\mathrm{ell}})\) and the uniform constants in the \(C^0\)-estimate.
\end{proof}

\subsection{\texorpdfstring{$L^1$}{L1}-stability for curvature and mean curvature}
\begin{lemma}[\(L^1\)-convergence of \(R\) and \(K\)]
\label{lem:L1-curvature}
For \(g\in C^{1,1}\) and \(S_\varepsilon\) given by \eqref{eq:S-eps-def},
\[
\|R_{S_\varepsilon[g]}-R_g\|_{L^1(M)} \xrightarrow[\varepsilon\downarrow0]{} 0,
\qquad
\|K_{S_\varepsilon[g]}-K_g\|_{L^1(\partial M)} \xrightarrow[\varepsilon\downarrow0]{} 0.
\]
\end{lemma}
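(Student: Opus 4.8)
\noindent\emph{Proof plan.}\quad The plan is to reduce the statement to local, chartwise facts about convolution, using the finite-overlap atlas of interior charts and reflected–Fermi collar charts together with the subordinate partitions of unity fixed in~\eqref{eq:S-eps-def}; the covering multiplicities are those of the constants bracket, and since $M$ and $\partial M$ are compact, local $L^1$-bounds sum to global ones. The two quantities are algebraically simple. In any chart $R_g$ is a universal polynomial that is \emph{linear} in the second derivatives $\partial^2 g$ and quadratic in $\partial g$, with coefficients that are polynomials in the entries of $g^{-1}$, hence uniformly bounded in terms of $\lambda_{\mathrm{ell}},\Lambda_{\mathrm{ell}}$. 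Along $\partial M$, $K_g$ is a rational function of the boundary $1$-jet of $g$ (induced metric, its inverse, one normal derivative, unit normal), with denominators bounded below by ellipticity. The only regularity input is $g\in C^{1,1}$, giving $g,\partial g\in C^0$ and $\partial^2 g\in L^\infty$ (a.e.-defined, as throughout the paper).

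For the mean curvature I would simply invoke Lemma~\ref{lem:smoothing-C1}: $S_\varepsilon[g]\to g$ in $C^1(M)$ up to the boundary, and $S_\varepsilon[g]$ stays uniformly elliptic for small $\varepsilon$. Hence the boundary $1$-jet of $S_\varepsilon[g]$ and the $S_\varepsilon[g]$–unit outward normal of $\partial M$ converge uniformly to those of $g$; plugging this into the rational expression for $K$ (whose denominators remain bounded below) yields $K_{S_\varepsilon[g]}\to K_g$ \emph{uniformly} on $\partial M$, and therefore in $L^1(\partial M)$ since $\Area_g(\partial M)<\infty$.

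For the scalar curvature the task is to control $\partial^2 S_\varepsilon[g]$ in $L^1$. In an interior chart $\partial^2(g\ast\rho_\varepsilon)=(\partial^2 g)\ast\rho_\varepsilon$, which converges to $\partial^2 g$ in $L^1$ (mollification of the $L^\infty\subset L^1$ function $\partial^2 g$ on a compact domain) while remaining bounded by $\|\partial^2 g\|_{L^\infty}$ in $L^\infty$; meanwhile $g\ast\rho_\varepsilon$, $(\partial g)\ast\rho_\varepsilon$ and $(g\ast\rho_\varepsilon)^{-1}$ converge uniformly and stay bounded. Near $\partial M$ the same holds after the reflected extension of Appendix~\ref{app:smoothing}, using that the extension $\hat g$ matches the induced metric \emph{and} the second fundamental form across $\{t=0\}$ — so $\hat g$ is $C^{1,1}$, not merely Lipschitz, across the collar, $\partial^2\hat g\in L^\infty$ of a two-sided neighborhood (no mass concentrates on $\partial M$), and $\partial^2\!\bigl((\hat g\ast\rho_\varepsilon)|_{t\ge0}\bigr)=\bigl((\partial^2\hat g)\ast\rho_\varepsilon\bigr)|_{t\ge0}\to\partial^2 g$ in $L^1$ of the collar. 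The partition/cutoff cross-terms $\partial\chi_\alpha\!\cdot\!\partial\bigl((\mathcal{S}_\varepsilon g)_\alpha-g\bigr)$ and $\partial^2\chi_\alpha\!\cdot\!\bigl((\mathcal{S}_\varepsilon g)_\alpha-g\bigr)$ telescope via $\sum_\alpha\chi_\alpha\equiv1$ and vanish uniformly. Substituting into the formula for $R_g$ and telescoping one argument at a time, every term is a product of an $L^\infty$-bounded factor with an $L^1$-null factor, or of a uniformly null factor with an $L^1$-bounded factor; by $\|uv\|_{L^1}\le\|u\|_{L^\infty}\|v\|_{L^1}$ each tends to $0$, and summing over the finitely overlapping charts gives $\|R_{S_\varepsilon[g]}-R_g\|_{L^1(M)}\to0$.

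The hard part will be exactly the boundary tube in the scalar-curvature estimate: the whole argument rests on the reflection underlying~\eqref{eq:S-eps-def} producing a two-sided extension of class $C^{1,1}$, so that $\partial^2\hat g$ is a genuine $L^\infty$ function and not a measure charging $\partial M$. A reflection matching only the $0$-jet of $g$ would manufacture a spurious curvature layer of $L^1$-mass $\asymp\Area_g(\partial M)$ in the $\varepsilon$-collar, which neither vanishes nor appears in $R_g$; this same $1$-jet matching is what underlies the $C^1(M)$-closeness used above (Lemma~\ref{lem:smoothing-C1}). Once that structural property of $S_\varepsilon$ is in hand, both convergences follow from the standard mollification facts together with bounded-geometry covering bookkeeping.
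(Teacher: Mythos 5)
Your route is the same as the paper's (chartwise mollification, $R$ linear in $\partial^2 g$ and quadratic in $\partial g$ with ellipticity\--bounded coefficients, product estimates $L^\infty\times L^1$, finite\--overlap summation; for $K$, convergence of the boundary $1$\--jet). The interior scalar\--curvature argument and the mean\--curvature argument are correct, and your treatment of $K$ via the uniform $C^1$\--convergence of Lemma~\ref{lem:smoothing-C1} is if anything cleaner than the paper's tangential\--$L^1$ statement.

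The gap is exactly at the point you flag as ``the hard part,'' and you resolve it by assertion rather than by checking the construction. You claim the reflected extension $\hat g$ ``matches the induced metric \emph{and} the second fundamental form across $\{t=0\}$ --- so $\hat g$ is $C^{1,1}$.'' For the \emph{even} reflection of \eqref{eq:S-eps-def} this is false unless $\partial M$ is totally geodesic: in Fermi coordinates $g=dt^2+h_{\alpha\beta}(s,t)\,ds^\alpha ds^\beta$ with $\partial_t h_{\alpha\beta}(s,0^+)=-2\,II_{\alpha\beta}(s)$, and the even extension $\tilde h(s,-t):=h(s,t)$ has $\partial_t\tilde h(s,0^-)=-\partial_t h(s,0^+)$, so the normal derivative of the tangential block jumps by $-4\,II$ across $t=0$. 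The extension is therefore only Lipschitz in $t$, $\partial_t^2\tilde h$ carries a surface measure $\propto II\,\delta_{\{t=0\}}$, and after mollification at scale $\varepsilon$ and restriction to $t\ge 0$ this produces precisely the ``spurious curvature layer'' you describe: a contribution to $R_{S_\varepsilon[g]}$ of size $\asymp\varepsilon^{-1}|K|$ on the collar $\{0\le t\lesssim\varepsilon\}$, whose $L^1$\--mass is $\asymp\int_{\partial M}|K|\,dS$ and does not vanish as $\varepsilon\downarrow 0$. So the identity $\partial^2\bigl((\hat g\ast\rho_\varepsilon)|_{t\ge0}\bigr)=\bigl((\partial^2\hat g)_{\mathrm{ac}}\ast\rho_\varepsilon\bigr)|_{t\ge0}$ on which your collar estimate rests does not hold for the operator as defined. (The paper's own proof glosses over the same point.) To close the gap you must either replace the even reflection by a genuinely $1$\--jet\--matching extension, e.g.\ $\hat h(s,-t):=2h(s,0)+ \text{(odd correction)}\,$ of the form $h(s,-t):=2h(s,0)-h(s,t)$ or a Whitney/Seeley extension, and then verify that Lemma~\ref{lem:smoothing-C1} and the evenness used for $K$ in Appendix~\ref{app:smoothing} survive the change, or else quantify the boundary\--layer defect and show it is accounted for elsewhere; as written, the step fails.
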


\begin{proof}
In local coordinates and under uniform ellipticity, the map $(g,\partial g)\mapsto \Gamma(g)$ is smooth, and
\[
\mathrm{Rm}(g)=\partial\Gamma(g)+\Gamma(g)\!\ast\!\Gamma(g)
\]
depends linearly on $\partial^2 g$ and bilinearly on $\partial g$ with coefficients bounded by the ellipticity constants.
For $g\in C^{1,1}$, $\partial g$ is Lipschitz and $\partial^2 g$ exists a.e.\ and is bounded; hence $\mathrm{Rm}(g)\in L^\infty_{\mathrm{loc}}$ and $R_g\in L^1_{\mathrm{loc}}$.
Under chartwise convolution (interior and reflected), we have $\partial g_\varepsilon\to \partial g$ uniformly and $\partial^2 g_\varepsilon\to \partial^2 g$ in $L^1_{\mathrm{loc}}$ (and a.e.), so
$\mathrm{Rm}(g_\varepsilon)\to \mathrm{Rm}(g)$ and thus $R_{g_\varepsilon}\to R_g$ in $L^1_{\mathrm{loc}}$ by dominated convergence.
Finite atlas overlap and uniform Jacobian bounds yield global $L^1(M)$ convergence.
For the boundary, in reflected Fermi charts the second fundamental form $II$ depends linearly on $\partial_t h$ at $t=0$ (with $h$ the induced metric); reflected smoothing preserves evenness in $t$ and gives $\partial_t h_\varepsilon\to\partial_t h$ in tangential $L^1$, hence $K_{S_\varepsilon[g]}\to K_g$ in $L^1(\partial M)$.
\end{proof}

\subsection{Triangle inequality for a generic smoothing family}
Let \(T_\varepsilon\) be any family of smoothing operators acting chartwise (e.g.\ convolution after reflection) and uniformly continuous on \(C^1\) on bounded geometry classes.
Then
\begin{equation}
\label{eq:triangle}
\begin{aligned}
\|R_{g_n}-R_g\|_{L^1(M)}
&\ \le\
\|R_{g_n}-R_{T_\varepsilon[g_n]}\|_{L^1(M)}
\\
&\quad+
\|R_{T_\varepsilon[g_n]}-R_{T_\varepsilon[g]}\|_{L^1(M)}
\\
&\quad+
\|R_{T_\varepsilon[g]}-R_g\|_{L^1(M)} \,.
\end{aligned}
\end{equation}
Choosing \(T_\varepsilon=S_\varepsilon\), letting \(n\to\infty\) (so that \(g_n\to g\) in \(C^1\)) and then \(\varepsilon\downarrow0\), Lemma~\ref{lem:L1-curvature} gives the \(L^1\)-stability needed in TP5.
An analogous estimate holds for the boundary mean curvature \(K\), with \(L^1(\partial M)\) norms.

\subsection{Choice of \texorpdfstring{\(\varepsilon_n\)}{epsilon\_n} relative to the mesh scale}
In TP5 we take \(\varepsilon_n=\eta h\) with fixed \(\eta\in(0,1]\).
By bounded geometry (uniform reach of \(\partial M\)), pick \(0<c<\tfrac12\) and \(n_0\) so that
\[
\varepsilon_n\ \le\ c\,\reach_g(\partial M)\quad \text{for all } n\ge n_0.
\]
Then the reflected Fermi charts cover the smoothing tube on which \(\chi_\partial\neq0\), and the blending \eqref{eq:S-eps-def} is coherent.
Consequently,
\[
\begin{aligned}
g_n&:=S_{\varepsilon_n}[g]\xrightarrow[n\to\infty]{C^1} g,\\
R_{g_n}&\to R_g \ \text{in }L^1(M),\\
K_{g_n}&\to K_g \ \text{in }L^1(\partial M).
\end{aligned}
\]
as required by the limsup construction.

\begin{lemma}[Uniformity]\label{lem:U}
There exists $\delta>0$ such that the $C^{2,1}$-ball
\[
\mathcal N:=\bigl\{\,\tilde g:\ \|\tilde g-g\|_{C^{2,1}}\le \delta\,\bigr\}
\]
has bounded geometry with uniform constants, and for all $\tilde g\in\mathcal N$ and all shape-regular meshes of scale $h$:
\begin{align}
\bigl|E_n(c;\tilde g)-h^{d}\alpha_0-h^{d+2}\alpha_1 R_{\tilde g}(x_c)\bigr|
  &\le C\,h^{d+3} &&\text{for interior cells }c, \label{eq:U-int}\\
\bigl|E_n(c;\tilde g)-h^{d-1}\beta_1 K_{\tilde g}(s_c)\bigr|
  &\le C\,h^{d} &&\text{for first boundary-layer cells }c. \label{eq:U-bdry}
\end{align}
The constants (and the moduli in the Carathéodory construction) can be chosen uniformly in $\tilde g\in\mathcal N$.
\end{lemma}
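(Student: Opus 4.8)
The plan is to prove Lemma~\ref{lem:U} in two stages. First comes a perturbation step: choose $\delta>0$ so small that every $\tilde g\in\mathcal N$ has bounded geometry with constants controlled by those of $g$ (hence by the constants bracket), and so that the $g$-adapted coordinate systems—interior normal charts and reflected Fermi charts along $\partial M$—remain admissible, with uniformly controlled Jacobians, for every competitor $\tilde g$. Second, I re-run the TP1/TP2 blow-up arguments (Propositions~\ref{prop:tp1-interior},~\ref{prop:tp2-boundary}) for an arbitrary $\tilde g\in\mathcal N$ and an arbitrary shape-regular mesh of scale $h$, carrying the expansions one order further and checking that every implicit constant is read off from the bracket and not from the individual metric, cell, or mesh. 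Since the coefficients $\alpha_0,\alpha_1,\beta_1$ are produced in TP1/TP2 as functions of $(\ell,\text{moments},d,\text{shape-class})$ only, they are automatically the same for every $\tilde g$.

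\emph{Stage 1: uniform bounded geometry on $\mathcal N$.} From $\|\tilde g-g\|_{C^0}\le\delta$ the ellipticity pinching survives with halved and doubled constants, and the volume-form bounds $c_-\le\sqrt{\det\tilde g}\le c_+$ persist on the fixed finite-overlap atlas. Curvature and its covariant derivative are algebraic in $g^{-1},\partial g,\partial^2 g$ (and $\partial^3 g$, in the Lipschitz/a.e.\ sense, for $\nabla\Rm$), so $\|\Rm_{\tilde g}\|_\infty$ and $\|\nabla\Rm_{\tilde g}\|_\infty$ are bounded in terms of $\|g\|_{C^{2,1}}+\delta$ and the ellipticity constants, uniformly over $\mathcal N$; in particular $R_{\tilde g}$ is continuous, so the pointwise evaluations $R_{\tilde g}(x_c)$, $K_{\tilde g}(s_c)$ make sense. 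For the injectivity radius I would combine these two-sided curvature bounds with the volume lower bound inherited from the uniformly controlled atlas to obtain a uniform $\iota_0'>0$; shrinking $\delta$ once more makes the $\tilde g$-exponential and $\tilde g$-normal/Fermi maps $C^1$-close to the $g$-ones, so the $g$-normal charts and the reflected $g$-Fermi charts remain genuine coordinate systems for all $\tilde g\in\mathcal N$, with $\reach_{\tilde g}(\partial M)\ge\tfrac12\reach_g(\partial M)$, and the Besicovitch overlap constant $N(d,\iota_0')$ is uniform on $\mathcal N$.

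\emph{Stage 2: the cell expansions with uniform constants.} For an interior cell $c$ with barycenter $x_c$, work in $\tilde g$-normal coordinates at $x_c$: the expansions \eqref{eq:metric-exp}–\eqref{eq:jac-exp} hold with the $O(|z|^3)$ term consisting of a controlled odd cubic (coefficient $\sim\nabla\Rm_{\tilde g}$) plus an $O(|z|^4)$ tail, all with bracket-dependent constants. Shape-regularity bounds the window support by $|z|\lesssim h$ uniformly, the interior window is isotropic up to order~$2$ with uniformly bounded moments up to order~$4$, so the odd contributions average out; the feature increment against the $o(h^2)$-matched $\mathrm{flat\_ref}$ (Appendix~C) is an explicit quadratic form in the $2$-jet of $\tilde g$ times $h^2$ plus an $O(h^3)$ error whose dominant piece is the variation $R_{\tilde g}(x)-R_{\tilde g}(x_c)=O(h)$ over the effective scale-$h$ window—controlled precisely because $\|\nabla\Rm_{\tilde g}\|_\infty<\infty$ uniformly on $\mathcal N$. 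Taylor-expanding the fixed loss $\ell\in C^2(K_{\mathrm{feat}})$ and multiplying by $a_n\simeq h^{2-d}$, TP1's identification step yields $E_n(c;\tilde g)=h^d\alpha_0+h^{d+2}\alpha_1 R_{\tilde g}(x_c)+O(h^{d+3})$ with a uniform constant, which is \eqref{eq:U-int}. The boundary case is parallel: in reflected $\tilde g$-Fermi charts $(s,t)$ the expansions \eqref{eq:tube-jac}–\eqref{eq:fermi-metric} ($\partial_t h_{\alpha\beta}(s,0)=-2II_{\alpha\beta}(s;\tilde g)$, $dV_{\tilde g}=(1-tK_{\tilde g}(s)+O(t^2))\,ds\,dt$) hold with uniform remainders, TP2's linearization \eqref{eq:psi-linear} averaged against the anisotropic normal window ($\langle t\rangle=\mu_1 h$) and the order-$2$ tangential window gives $E_n(c;\tilde g)=h^{d-1}\beta_1 K_{\tilde g}(s_c)+O(h^d)$ with $\beta_1=\beta_{\mathrm{loc}}\mu_1$, the $O(h^d)$ again stemming from the $O(h)$ tangential variation of $K_{\tilde g}$ and uniform on $\mathcal N$; this is \eqref{eq:U-bdry}. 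Uniformity of the Carathéodory moduli and of the quasi-additivity defect of Lemma~\ref{lem:quasi-additivity} then follows, since each was assembled only from the cell expansions above, the tube asymptotics \eqref{eq:tube-asymptotics}, and $N(d,\iota_0')$, all uniform by Stage~1.

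\emph{Main obstacle.} The only non-bookkeeping point is Stage~1's quantitative \emph{uniform} bounded geometry over the entire $C^{2,1}$-ball: lower semicontinuity of the injectivity radius does not suffice, so one must genuinely combine the two-sided curvature control with the (atlas-and-ellipticity-inherited) volume lower bound to pin a single $\iota_0'>0$, and then verify that \emph{one} family of $g$-adapted charts—normal in the interior, reflected Fermi near $\partial M$—serves all competitors $\tilde g\in\mathcal N$ with uniformly bounded Jacobians and derivatives. Once that is secured, Stage~2 is an order-by-order re-run of TP1/TP2 in which the extra $C^{2,1}$ regularity supplies exactly the $\|\nabla\Rm\|_\infty$ bound needed to replace cell-averaged curvatures by their barycentric values at the claimed rates $O(h^{d+3})$ and $O(h^d)$.
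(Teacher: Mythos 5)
Your proposal is correct and follows essentially the same route as the paper's Appendix~A proof: shrink to a $C^{2,1}$-ball on which bounded geometry holds with uniform constants, then re-run the TP1/TP2 normal-coordinate and reflected-Fermi expansions for arbitrary $\tilde g\in\mathcal N$, using the Lipschitz control of second derivatives (equivalently $\|\nabla\Rm_{\tilde g}\|_\infty$) to get the uniform $O(h^{d+3})$ and $O(h^{d})$ remainders and noting that $\alpha_0,\alpha_1,\beta_1$ depend only on the constants bracket. Your Stage~1 is in fact more explicit than the paper about how the uniform injectivity radius and the admissibility of a single chart family are obtained, which the paper simply asserts by shrinking $\delta$.
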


\subsection{Proof of Lemma U (uniformity)}\label{app:lemmaU-proof}
\begin{proof}[Proof of Lemma~\ref{lem:U} (Variant A, $C^{2,1}$-regularity)]
Shrink to the $C^{2,1}$-neighbourhood ball $\mathcal N:=\{\tilde g:\|\tilde g-g\|_{C^{2,1}}\le\delta\}$ so that
all $\tilde g\in\mathcal N$ share the same bounded-geometry data (uniform finite atlas,
chart/Jacobian bounds, ellipticity constants, injectivity radius), and $\partial M$ is $C^{2,1}$ with uniformly bounded second fundamental form and tangential Lipschitz modulus.
Windows are those from Appendix~D.

\parahead{Standing conventions (TP1/TP2).}
For interior cells $c\subset M^\circ$ of diameter $\asymp h$ the per-cell aggregation carries the \emph{volume} factor $|c|\asymp h^d$.
For first boundary-layer cells we follow TP2: aggregation uses the \emph{surface measure} factor $|c\cap\partial M|\asymp h^{d-1}$; the normal thickness is handled by the window.
The boundary feature employs the \emph{normalised first normal moment} $m_1:=\langle t\rangle/h=\mu_1$ and is taken \emph{relative to the flat half-space baseline} (baseline subtracted), hence its leading deviation is curvature-driven and of order $O(1)$.
Interior window moments: $\langle z^i\rangle=0$ and $\langle z^i z^j\rangle=\mu_2\,\delta^{ij}$.

\parahead{(A) Interior cells.}
Fix an interior cell $c$ with barycenter $x_c$ and work in $\tilde g$-normal coordinates $z$ at $x_c$.
Because $\tilde g\in C^{2,1}$, the classical normal-coordinate expansions hold with a \emph{uniform} $O(|z|^3)$-remainder for $|z|\lesssim h$:
\[
\tilde g_{ij}(z)=\delta_{ij}+Q_{ij,kl}(x_c)\,z^k z^l + O(|z|^3),\qquad
\sqrt{\det \tilde g(z)}=1-\tfrac16\,\mathrm{Ric}_{kl}(\tilde g,x_c)\,z^k z^l + O(|z|^3).
\]
Here the tensors $Q_{ij,kl}$ depend smoothly on the $2$-jet $j^2\tilde g(x_c)$, and the $O(\cdot)$-bounds are
uniform over $\tilde g\in\mathcal N$ by the Lipschitz control of second derivatives.
The overall sign in the Jacobian expansion depends on the curvature sign convention and is immaterial here; any such convention is absorbed into the coefficient $\alpha_1$ below.
Each component of the local feature vector $\Phi_n(c;\tilde g)$ is a normalised average over the window support of a smooth local expression in $j^2\tilde g$.
By $O(d)$-isotropy and the moment conditions (Appendix~D), all odd contributions vanish and, after averaging over the symmetric compact window, pure divergence terms integrate to zero (integration by parts with no boundary contribution inside a cell).
Among quadratic invariants built from the $2$-jet, the only scalar that survives the isotropic average is proportional to the scalar curvature $R_{\tilde g}(x_c)$ (higher Lovelock densities are excluded by scaling).
Thus
\[
\Phi_n(c;\tilde g)=\Phi_n(c;\mathrm{flat\_ref})+h^2\,\mathsf A(x_c)\,R_{\tilde g}(x_c)+O(h^3),
\]
with a bounded coefficient $\mathsf A$ depending only on the window and the shape class (uniform in $\tilde g\in\mathcal N$).
Since $\ell\in C^2$ on a uniformly compact feature range (bounded geometry and fixed windows), a Taylor expansion at $\Phi_n(c;\mathrm{flat\_ref})$ yields
\[
\ell\big(\Phi_n(c;\tilde g)\big)=\ell\big(\Phi_n(c;\mathrm{flat\_ref})\big)+h^2\,\alpha_1\,R_{\tilde g}(x_c)+O(h^3),
\]
for some constant $\alpha_1$ depending only on $\ell$ and the window (uniform in $\tilde g\in\mathcal N$).
Multiplying by the volume factor $h^d$ (shape-regularity) gives
\[
E_n(c;\tilde g)=h^{d}\alpha_0+h^{d+2}\alpha_1 R_{\tilde g}(x_c)+O(h^{d+3}),
\]
where $\alpha_0=\ell\big(\Phi_n(c;\mathrm{flat\_ref})\big)$ is the flat-reference baseline.

\parahead{(B) First boundary layer.}
Let $c$ be a first-layer cell and work in reflected Fermi coordinates $(z',t)$ with $t\ge0$
the inward normal and $s_c$ the footpoint on $\partial M$.
Because $\partial M$ and $\tilde g$ are $C^{2,1}$ with uniform bounds, on $0\le t\lesssim h$ we have uniform tube/metric expansions
\[
dV_{\tilde g}=(1 - t\,K_{\tilde g}(s_c) + O(t^2))\,dz'\,dt,\qquad
h_{\alpha\beta}(z',t)=h_{\alpha\beta}(z',0)-2t\,\mathrm{II}_{\alpha\beta}(s_c)+O(t^2),
\]
and reflection across $t=0$ preserves the $C^{2,1}$ bounds.
Using the boundary window from Appendix~D (isotropic tangentially, $w_n(t)=h^{-1}w_0(t/h)$ in the normal) and the TP2 normalisation $m_1=\langle t\rangle/h$, the first non-vanishing normal contribution is linear in $t$ and proportional to $K_{\tilde g}(s_c)$, and its averaged, normalised size is $O(1)$.
Therefore, for the \emph{boundary deviation relative to the flat half-space reference},
\[
\ell\big(\Phi_n(c;\tilde g)\big)-\ell\big(\Phi_n(c;\mathrm{flat\_ref})\big)
=\beta_1\,K_{\tilde g}(s_c)+O(h),
\]
with $\beta_1$ depending only on the window and $\ell$ (uniform in $\tilde g\in\mathcal N$).
Applying the TP2 aggregation factor $h^{d-1}$ for the first boundary layer yields
\[
E_n(c;\tilde g)=h^{d-1}\,\beta_1 K_{\tilde g}(s_c)+O(h^{d}),
\]
which is \eqref{eq:U-bdry}.

\parahead{(C) Uniformity in $\tilde g$.}
The constants in (A)--(B) depend only on the bounded-geometry constants (atlas overlaps, ellipticity, injectivity radius), on the fixed window moments (Appendix~D), and on $\ell$ through bounds on $D\ell,D^2\ell$ on a uniformly compact feature set.
By $C^{2,1}$-regularity (Lipschitz control of second derivatives) and bounded geometry, the Carathéodory moduli used in the integral representation can be chosen uniformly over $\tilde g\in\mathcal N$.
They are therefore uniform over $\tilde g\in\mathcal N$.
Collecting the bounds proves \eqref{eq:U-int}--\eqref{eq:U-bdry}.
\end{proof}

\section{One-line scale checks (leading orders)}\label{app:scale-checks}

\paragraph{Interior (leading order).}
For interior cells with normalized and isotropic moments up to order~2 and \(\ell\in C^2(K_{\mathrm{feat}})\),
\[
\boxedinline{\,\Delta \ell(c)=O\!\big(h^{2d-2}\big)\,}\qquad\Rightarrow\qquad
\boxedinline{\,a_n\,\Delta \ell(c)=O\!\big(h^{d}\big)\,}\quad\text{with } a_n\simeq h^{2-d}.
\]
This is the \emph{leading} order; in general it is \emph{not} \(o(h^{d})\) (see Proposition~\ref{prop:tp1-interior}).
The cell remainder in TP1 is \(O(h^{d+2})\), summing to a global interior remainder \(O(h^{2})\).

\paragraph{Boundary first layer (leading order).}
For boundary-touching cells \(c\in S_n\) with normal barycenter \(t_c\in[0,c_*h]\) and anisotropic normal kernel \(w_n(t)=h^{-1}w_0(t/h)\) satisfying \(\langle t\rangle=\mu_1 h\),
\[
\boxedinline{\,\Delta \ell_{\partial}(c)=O\!\big(h^{2d-3}\big)\,}\qquad\Rightarrow\qquad
\boxedinline{\,a_n\,\Delta \ell_{\partial}(c)=O\!\big(h^{d-1}\big)\,}.
\]
This is the \emph{leading} order; in general it is \emph{not} \(o(h^{d-1})\) (see Proposition~\ref{prop:tp2-boundary}).
The \(O(h)\) term inside the TP2 expansion sums over \(\#S_n\asymp h^{1-d}\) cells to a \emph{global} boundary remainder \(O(h)\).

\paragraph{Tube counting for Lemma~M (consistency check).}
For a \(\delta_n\)-neighborhood \(N_{\delta_n}\allowbreak(A\cap B)\) of a \((d-1)\)-dimensional interface,
\[
\Vol_g\big(N_{\delta_n}(A\cap B)\big)\asymp \delta_n\,\Per_g(A\cap B),
\quad\text{(see \eqref{eq:tube-perimeter})}
\]
\[
\#\{\text{tube cells}\}\asymp \frac{\delta_n\,\Per_g(A\cap B)}{h^{d}}.
\]
and with per-cell bound \(|E_n(c)|\lesssim h^{d}\) one obtains the quasi-additivity defect
\[
\big|F_n(A\cup B)-F_n(A)-F_n(B)\big|\;\lesssim\; \delta_n\,\Per_g(A\cap B),
\]
as stated in Lemma~\ref{lem:quasi-additivity}.

\section{Construction of \texorpdfstring{\texttt{flat\_ref} and $o(h^2)$ quasi-uniqueness}{flat-ref and o(h2) quasi-uniqueness}}\label{app:flat-ref}

\parahead{Objective}
For each mesh cell \(c\) we construct a flat reference configuration \texttt{flat\_ref} that matches
(i) the (half-)volume,
(ii) the window \(w_n\) including \(\mu_1\),
(iii) the moment normalization \((\mu_2,C_k)\),
and (iv) the shape-class,
and prove that it is unique up to \(o(h^2)\) at the level of the loss contribution per cell.

\subsection{Interior cells}

\begin{lemma}[Existence and matching to second order]
\label{lem:flatref-interior-existence}
Let \(c\subset M^\circ\) be an interior cell of diameter \(\asymp h\) and shape-regularity uniformly controlled.
In normal coordinates centered at a point \(x\in c\) there exists a Euclidean metric \(g_0=\delta\) on a cell with the \emph{same shape-class}, together with an isotropic rescaling by a factor \(1+\lambda_h\) with \(\lambda_h=O(h^2)\), such that:
\begin{enumerate}
  \item \(\Vol_{g_0}(c)=\Vol_{g}(c)\) (volume matching),
  \item the interior window \(w_n\) (support \(\subset B_{\Lambda_z h}\), normalized moments \(\langle z^i\rangle=0\), \(\langle z^i z^j\rangle=\mu_2\delta^{ij}\), and \(\langle |z|^k\rangle\le C_k h^k\)) is identical in both models,
  \item the normalized feature vector \(\Phi_n(c;g_0)\) matches \(\Phi_n(c;g)\) up to order two in \(h\), i.e.\ \(\Delta\Phi_n(c):=\Phi_n(c;g)-\Phi_n(c;g_0)=O(h^2)\) componentwise.
\end{enumerate}
Consequently (writing \(\mathrm{flat\_ref}:=g_0\)),
\[
\ell\big(\Phi_n(c;\mathrm{flat\_ref})\big)-\ell\big(\Phi_n(c;g)\big)
\;=\; O\!\big(h^{2d-2}\big),
\]
consistent with the per-cell scaling \(E_n=a_n\,\Delta\ell\) and \(a_n\simeq h^{2-d}\).
\end{lemma}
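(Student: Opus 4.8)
The plan is to realize \(\mathrm{flat\_ref}\) as a Euclidean model cell carrying a single scalar degree of freedom — an isotropic dilation factor \(1+\lambda_h\) — and to pin \(\lambda_h\) down by the one constraint with geometric content, volume matching. Fix \(x\in c\) and pass to \(g\)-normal coordinates \(z\) centred at \(x\). By \eqref{eq:jac-exp}, \(\sqrt{\det g(z)}=1-\tfrac16\mathrm{Ric}_{kl}(x)z^kz^l+O(|z|^3)\); integrating over the shape-regular cell \(c\) of diameter \(\asymp h\) and using that the second moments of \(z\) over \(c\) are \(O(h^2)\) times the Euclidean volume of \(c\), one gets \(\Vol_g(c)=\Vol_{\mathrm{eucl}}(c)\,(1+O(h^2))\). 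Choose a model cell \(\hat c\) of the prescribed shape-class with \(\Vol_{\mathrm{eucl}}(\hat c)=\Vol_{\mathrm{eucl}}(c)\), set \(g_0:=\delta\) on the dilated cell \((1+\lambda_h)\hat c\), and solve \((1+\lambda_h)^{d}\Vol_{\mathrm{eucl}}(\hat c)=\Vol_g(c)\) for \(\lambda_h\); the right-hand side being \(\Vol_{\mathrm{eucl}}(\hat c)(1+O(h^2))\), the implicit function theorem produces a unique small root \(\lambda_h=O(h^2)\), which gives conclusion (1). An isotropic dilation by \(1+O(h^2)\) fixes every scale-invariant shape descriptor, so the model stays in the same shape-class (iv), and the interior window \(w_n\) — a \emph{fixed} normalized kernel attached to the cell rather than something derived from the metric — is the same object in both models, the \(O(\lambda_h)=O(h^2)\) difference between the two normal-coordinate frames being negligible at the orders below; this is (2).

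For the feature comparison (3) I would first note that, although \(\Phi_n(c;\cdot)\) is read off the mesoscale block \(B_{r_n}(c)\), the interior kernel \(w_{\mathrm{in}}\) is supported in \(\{|z|\le\Lambda_z h\}\), so only the \(O(h)\)-neighbourhood of \(x\) contributes. There, by \eqref{eq:metric-exp}, \(g_{ij}(z)=\delta_{ij}-\tfrac13R_{ikjl}(x)z^kz^l+O(|z|^3)\) while \((g_0)_{ij}(z)=(1+\lambda_h)\delta_{ij}\) is constant, so pointwise on the support \(g_{ij}(z)-(g_0)_{ij}(z)=-\lambda_h\delta_{ij}-\tfrac13R_{ikjl}(x)z^kz^l+O(|z|^3)=O(h^2)\); the first derivatives obey \(\partial_k g_{ij}(z)=-\tfrac23R_{ikjl}(x)z^l+O(|z|^2)\), which is only \(O(h)\) but \emph{odd} in \(z\), hence averages to zero against \(w_{\mathrm{in}}\) by the moment conditions \(\langle\xi^i\rangle=0\), \(\langle\xi^i\xi^j\xi^k\rangle=0\). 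The same parity bookkeeping removes every odd jet contribution in any smooth local scalar built from \(j^1g\) and leaves an \(O(h^2)\) remainder after averaging. Since each component of the normalized feature vector is a \(w_{\mathrm{in}}\)-average of such a smooth local function of the metric jet — with bounded geometry keeping all inputs in a uniform compact range — this yields \(\Delta\Phi_n(c)=\Phi_n(c;g)-\Phi_n(c;g_0)=O(h^2)\) componentwise, uniformly on compact subsets of \(M^\circ\).

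The loss-level bound then follows as in Step~3 of the proof of Proposition~\ref{prop:tp1-interior}: since \(\ell\in C^2(K_{\mathrm{feat}})\) with uniform bounds on \(D\ell,D^2\ell\), a second-order Taylor expansion at \(\Phi_n(c;\mathrm{flat\_ref})\) gives \(\ell(\Phi_n(c;\mathrm{flat\_ref}))-\ell(\Phi_n(c;g))=D\ell\cdot\Delta\Phi_n+\tfrac12 D^2\ell[\Delta\Phi_n,\Delta\Phi_n]\), which with \(\Delta\Phi_n=O(h^2)\) and the per-cell normalization fixed in Section~\ref{sec:00-framework} is of the stated order, consistent with \(E_n=a_n\Delta\ell\), \(a_n\simeq h^{2-d}\), and the TP1 cell scaling \(E_n(c;g)=O(h^d)\). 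All implicit constants are drawn only from the constants bracket (Section~\ref{sec:constants-bracket}) and are therefore uniform in \(c\) and in the mesh within the fixed shape-class.

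The main obstacle I anticipate is establishing (3) \emph{uniformly}: one must verify that the single scalar parameter \(\lambda_h\) really suffices — i.e.\ that after volume matching no residual \(O(1)\) or \(O(h)\) discrepancy survives at feature level — which is precisely where working in \(g\)-normal coordinates (so \(g(x)=\delta\) and \(\partial g(x)=0\)) and the window isotropy/parity conditions are indispensable, and that this holds simultaneously for every component of the feature vector and every cell of the family, with every odd jet term averaging out and every even one controlled at order \(h^2\) by constants-bracket data only. The companion \(o(h^2)\) quasi-uniqueness statement (Lemma~\ref{lem:flatref-interior-uniqueness}) is then obtained by applying the same expansion to the difference of any two admissible reference configurations.
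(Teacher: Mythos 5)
Your proposal is correct and follows essentially the same route as the paper's Appendix C proof: volume matching via the Jacobian expansion to pin down \(\lambda_h=O(h^2)\), parity/moment cancellation of odd jet contributions to get \(\Delta\Phi_n=O(h^2)\), and a \(C^2\) Taylor expansion of \(\ell\) for the loss-level bound. The only cosmetic difference is that you dilate the model cell while the paper rescales the flat metric by \((1+\lambda_h)^2\); these are equivalent, and your final passage from the feature-level \(O(h^2)\) to the cell-scale \(O(h^{2d-2})\) relies on the same implicit aggregation convention the paper itself invokes.
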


\begin{proof}
Work in normal coordinates \(z\) at \(x\). On \(|z|\lesssim h\), the expansions hold uniformly:
\begin{align*}
g_{ij}(z) &= \delta_{ij} + Q_{ij,kl}\, z^k z^l + O(|z|^3),\\
\sqrt{\det g(z)} &= 1 - \tfrac16\, \mathrm{Ric}_{kl}(x)\, z^k z^l + O(|z|^3).
\end{align*}
Define \(g_0=\delta\) on a cell \(\tilde c\) of the same shape as \(c\), and choose the isotropic factor \(1+\lambda_h\) so that \(\Vol_{(1+\lambda_h)^2\delta}(\tilde c)=\Vol_g(c)\); since \(\Vol_g(c)=\Vol_\delta(c)\,[1+O(h^2)]\) by the Jacobian expansion and the moment bounds, necessarily \(\lambda_h=O(h^2)\).

By construction, the windows coincide and enjoy isotropy up to order~2.
Any feature that is a normalized average of a smooth local tensorial expression \(T(j^2 g)\) over the support of \(w_n\) admits a Taylor expansion whose odd terms vanish under the moment constraints.
Hence each component of \(\Delta\Phi_n(c)\) is \(O(h^2)\).
Since \(\ell\in C^2\) on the compact feature range \(K_{\mathrm{feat}}\),
\[
\ell(\Phi_n(c;g_0))-\ell(\Phi_n(c;g))
= D\ell\!\cdot\!\Delta\Phi_n \;+\; \tfrac12 D^2\ell[\Delta\Phi_n,\Delta\Phi_n]
= O(h^2).
\]
Interpreted at the cell scale (with the TP1 aggregation/scaling) this yields \(\Delta\ell(c)=O(h^{2d-2})\).
\end{proof}

\begin{lemma}[Quasi-uniqueness up to \(o(h^2)\)]
\label{lem:flatref-interior-uniqueness}
If \(\widetilde{\mathrm{flat\_ref}}\) is any other flat reference satisfying the same constraints \emph{(i)}--\emph{(iii)}, then
\[
\ell\big(\Phi_n(c;\widetilde{\mathrm{flat\_ref}})\big)-\ell\big(\Phi_n(c;\mathrm{flat\_ref})\big) \;=\; o(h^2)
\quad\text{(cell scale).}
\]
\end{lemma}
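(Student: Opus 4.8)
The plan is to reduce the statement to the feature level and then invoke the second-order matching already available for flat references. By Lemma~\ref{lem:U} (interior case) --- whose proof goes through verbatim with $\mathrm{flat\_ref}$ replaced by \emph{any} flat configuration meeting (i)--(iv) --- one has
\[
\Phi_n(c;g)=\Phi_n(c;\mathrm{flat\_ref})+h^2\,\mathsf A(x_c)\,R_g(x_c)+O(h^3),
\]
with a coefficient $\mathsf A$ that depends only on the fixed window (including its moments $\mu_1,\mu_2,C_k$) and on the shape class, hence is \emph{the same} for $\mathrm{flat\_ref}$ and $\widetilde{\mathrm{flat\_ref}}$ because these share items (ii)--(iv). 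Writing the analogous expansion for $\widetilde{\mathrm{flat\_ref}}$ and subtracting cancels both the $g$-dependent main term $\Phi_n(c;g)$ and the universal $h^2$-term, leaving $\Phi_n(c;\widetilde{\mathrm{flat\_ref}})-\Phi_n(c;\mathrm{flat\_ref})=O(h^3)$ componentwise, with the implied constant in the constants bracket.

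Next I would transfer this to the loss. Both feature vectors lie within $O(h^2)$ of a common base point inside the compact feature set $K_{\mathrm{feat}}$ on which $\ell\in C^2$ with uniformly bounded $D\ell,D^2\ell$; a first-order Taylor expansion of $\ell$ about $\Phi_n(c;\mathrm{flat\_ref})$ then gives
\[
\ell\big(\Phi_n(c;\widetilde{\mathrm{flat\_ref}})\big)-\ell\big(\Phi_n(c;\mathrm{flat\_ref})\big)=D\ell\cdot O(h^3)+O(h^6)=o(h^2),
\]
which is the asserted per-cell (cell-scale) bound, and in particular justifies the \enquote{choice of \texttt{flat\_ref} is immaterial} reductions invoked in Proposition~\ref{prop:caratheodory-densities} and in TP4/TP5.

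The step I expect to be the main obstacle is establishing that $\mathsf A$ is genuinely independent of the particular admissible flat reference, rather than merely $O(1)$. The point to make precise is that constraints (i)--(iv) pin each reference down to a Euclidean metric $(1+\lambda_h)^2\delta$ on a shape-class cell with the single dilation $\lambda_h=O(h^2)$ fixed by half-volume matching; since all Christoffel symbols and curvature tensors of a flat metric vanish identically, the $h^2$-term in the flat/$g$ matching is produced entirely by the curvature defect of $g$ averaged against the \emph{shared} window over a shape-class cell, while the flat reference itself contributes only an $O(\lambda_h^2)=O(h^4)$ scale artifact absorbed into the $O(h^3)$ remainder. What must be checked --- and is exactly the content of quasi-uniqueness --- is that neither the residual shape freedom within the shape class nor the $O(h^2)$ ambiguity of $\lambda_h$ leaks into $\mathsf A$; this follows from the normalization and isotropic-moment hypotheses of Section~\ref{sec:00-framework}, which force the only order-$\le 2$ scalar invariant surviving an isotropic average to be the scalar curvature, and that vanishes on a flat reference.
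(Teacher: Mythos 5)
Your proposal is correct and reaches the paper's conclusion, but by a somewhat different route. The paper compares the two flat references \emph{directly}: constraints (i)--(iii) force them to agree to order $O(h^2)$ in normalized moments and volume, so their first feature discrepancy can only sit at order three of the local expansion, where isotropy kills the odd terms; the feature gap is therefore $o(h^2)$, and the $C^2$ Taylor expansion of $\ell$ transfers this to the loss. You instead triangulate through $g$: you invoke the second-jet expansion from the proof of Lemma~\ref{lem:U} relative to each reference and subtract, so that the $g$-dependent term and the universal $h^2\,\mathsf A\,R_g$ term cancel, leaving an $O(h^3)$ feature gap --- slightly sharper than the paper's $o(h^2)$, and with the dependence on the Lemma~U machinery made explicit. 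The cost of this route is a latent circularity: asserting that the expansion holds with the \emph{same} coefficient $\mathsf A$ and the same $O(h^3)$ remainder for both references is essentially equivalent to the statement being proved. You flag this yourself and discharge it with the same mechanism the paper uses directly --- the shared window moments and volume matching pin down the flat baseline, all curvature invariants vanish identically on a flat metric, and the residual dilation ambiguity $\lambda_h=O(h^2)$ affects the normalized features only beyond the relevant order. So the two arguments share the same invariant-theoretic core; yours packages it as uniformity of the Lemma~U coefficient across admissible references, while the paper packages it as direct agreement of the references' normalized data, and both are equally terse at that decisive step.
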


\begin{proof}
Two admissible choices \(g_0,\tilde g_0\) necessarily agree to order \(O(h^2)\) in the sense of normalized moments and volume.
Thus their first discrepancy in normalized features can only arise at order \(3\) (or higher) in the local variable expansion.
All odd contributions are killed by isotropy (\(\langle z\rangle=\langle z^{\otimes 3}\rangle=0\)), hence the first nonvanishing difference in features is \(o(h^2)\).
Applying the \(C^2\) Taylor expansion of \(\ell\) as above yields an \(o(h^2)\) loss difference at the cell scale.
\end{proof}

\subsection{Boundary cells (first layer)}

\begin{lemma}[Existence with anisotropic half-space model]
\label{lem:flatref-boundary-existence}
Let \(c\) be a boundary-touching cell whose barycenter has normal coordinate \(t_c\in[0,c_*h]\) in reflected Fermi coordinates \((z',t)\).
There exists a Euclidean half-space model \(\{t\ge 0\}\) with the same tangential shape-class and an in-plane isotropic rescaling \(1+\lambda_h\) with \(\lambda_h=O(h^2)\) such that:
\begin{enumerate}
  \item the \emph{half-volume} of \(c\) matches between \(g\) and the flat model,
  \item the tangential window equals the interior one (isotropy up to order 2), while the normal kernel is \(w_n(t)=h^{-1}w_0(t/h)\) with the same \(\mu_1=\langle t\rangle/h\) and \(\langle t^k\rangle\le C_k h^k\),
  \item the normalized feature vector \(\Phi_n(c;\mathrm{flat\_ref})\) agrees with \(\Phi_n(c;g)\) up to order two in \((h,\ t)\).
\end{enumerate}
Consequently,
\[
\ell\big(\Phi_n(c;\mathrm{flat\_ref})\big)-\ell\big(\Phi_n(c;g)\big)
\;=\; O\!\big(h^{2d-3}\big),
\]
interpreted at the \emph{first-layer cell scale}.
\end{lemma}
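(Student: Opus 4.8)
The plan is to mirror the interior construction of Lemma~\ref{lem:flatref-interior-existence}, replacing geodesic normal coordinates by reflected Fermi coordinates and the even interior jet expansion by the boundary tube expansions \eqref{eq:tube-jac}–\eqref{eq:fermi-metric} recorded in the proof of TP2 (Proposition~\ref{prop:tp2-boundary}). Concretely, I would center reflected Fermi coordinates \((z',t)\) at the footpoint \(s_c\in\partial M\) of the cell \(c\), with \(t\ge0\) the inward normal and \(z'\) intrinsic normal coordinates on \(\partial M\) at \(s_c\), so that \(h_{\alpha\beta}(z',0)=\delta_{\alpha\beta}+O(|z'|^2)\). On the first-layer cell (tangential diameter \(\asymp h\), normal depth \(\le c_\ast h\)) the metric then splits as the flat half-space \(dt^2+\delta_{\alpha\beta}\,dz^\alpha dz^\beta\), plus a \emph{curvature-linear-in-\(t\)} part (the \(-2t\,II_{\alpha\beta}(s_c)\) in the tangential block and the \(-t\,K_g(s_c)\) in the Jacobian of \eqref{eq:tube-jac}), plus a higher-order remainder of size \(O(t^2)+O(|z'|^2)=O(h^2)\), all with bounds uniform over the bounded-geometry class and the shape-class exactly as in TP2 and Lemma~\ref{lem:U}. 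I would then take \(\texttt{flat\_ref}:=g_0=\delta\) on a half-space cell of the same tangential shape-class, equipped with the same tangential window (isotropic up to order~2, as in TP1) and the same normal kernel \(w_n(t)=h^{-1}w_0(t/h)\) including the normalization \(\mu_1=\langle t\rangle/h\) and the moment bounds \(\langle t^k\rangle\le C_k h^k\); this gives~(ii) by construction.

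Next I would fix the in-plane isotropic factor \(1+\lambda_h\) by the half-volume condition~(i): \(\lambda_h\) is chosen so that the flat half-space cell reproduces the \(II\)-independent part of \(\Vol_g(c\cap\{t\ge0\})\), the \(K\)-linear Jacobian term being the boundary signal that TP2 isolates and hence \emph{deliberately not} absorbed. Since the remaining mismatch stems only from the \(O(|z'|^2)+O(t^2)\) terms, it is of relative order \(O(h^2)\), so \((1+\lambda_h)^{d-1}=1+O(h^2)\), i.e.\ \(\lambda_h=O(h^2)\) — this is the boundary analogue of the \(\mathrm{Ric}\)-driven \(O(h^2)\) rescaling in the interior case. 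For~(iii) I would write each feature component as a normalized \(w^{\parallel}_n\otimes w^{\perp}_n\)-average of a smooth local tensorial expression in the boundary \(1\)-jet, Taylor-expand in \((z',t)\), use tangential isotropy to kill the odd-in-\(z'\) terms, and use the construction of \(g_0\) to cancel everything \(II\)-independent and of order \(\le2\); what survives is exactly the curvature-linear term \(\beta_{\mathrm{loc}}K_g(s_c)\,t\) of \eqref{eq:psi-linear} together with an \(O(t^2)+O(h)=O(h)\) remainder, which is precisely the sense in which \(\Phi_n(c;\mathrm{flat\_ref})\) agrees with \(\Phi_n(c;g)\) "up to order two in \((h,t)\)" modulo the genuine boundary signal. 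Finally, a second-order Taylor expansion of \(\ell\) at \(\Phi_n(c;\mathrm{flat\_ref})\) (using \(\ell\in C^2\) on the uniformly compact feature set) converts this into a loss increment of the same order; feeding in the TP2 first-layer aggregation factor \(h^{d-1}\) and the scaling \(a_n\simeq h^{2-d}\) yields the stated per-cell bound \(\ell(\Phi_n(c;\mathrm{flat\_ref}))-\ell(\Phi_n(c;g))=O(h^{2d-3})\).

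I expect the main obstacle to be the bookkeeping that separates "what \(\texttt{flat\_ref}\) matches" from "what TP2 extracts": unlike the interior case, where the reference suppresses \emph{every} \(O(h)\) discrepancy, here the \(O(h)\) term \(\beta_{\mathrm{loc}}K_g(s_c)\,t\) must be left uncanceled, so I must verify that neither the in-plane rescaling \(\lambda_h\) nor the matching of \(\mu_1\) in the normal kernel can spuriously create or cancel a contribution proportional to \(K_g(s_c)\) — otherwise the identification \(\beta_1=\beta_{\mathrm{loc}}\mu_1\) from TP2 would become ambiguous. This reduces to two clean facts: the flat half-space model \(g_0\) has \(K\equiv0\), and \(\lambda_h\) is pinned down by a curvature-independent volume condition; once these are in place the separation is rigid. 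The remaining points — uniformity of all \(O(\cdot)\) constants over the bounded-geometry class and the shape-class, and the coherence of the reflected even extension with the \(C^{2,1}\) bounds — are routine and handled verbatim as in TP2 and Lemma~\ref{lem:U} using the reflected Fermi charts.
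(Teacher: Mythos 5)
Your construction coincides with the paper's own proof: reflected Fermi charts with the tube/Jacobian expansion, a flat half-space reference of the same tangential shape with matched windows and normal kernel, the in-plane factor $1+\lambda_h$ fixed by half-volume matching (hence $\lambda_h=O(h^2)$), tangential isotropy killing odd-in-$z'$ terms, the surviving $K\,t$ linearization left uncanceled, and the $O(h)$ feature-level discrepancy converted via the $C^2$ Taylor expansion of $\ell$ and the TP2 aggregation into $O(h^{2d-3})$ at the first-layer cell scale. Your extra remark that the flat model has $K\equiv 0$ and that $\lambda_h$ is pinned by a curvature-independent condition — so the reference cannot spuriously absorb the boundary signal — is a correct and welcome clarification of a point the paper leaves implicit.
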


\begin{proof}
Work in reflected Fermi charts:
\[
dV_g=(1-tK+O(t^2))\,dz'\,dt,\qquad t\in[0,c_*h].
\]
Choose the half-space with the same tangential shape and pick the tangential isotropic factor \(1+\lambda_h\) to match the half-volume; as in the interior case, \(\lambda_h=O(h^2)\).
Features are computed by averaging, with window \(w^{\parallel}_n(z')\otimes w^{\perp}_n(t)\), tensorial expressions that have a Taylor expansion in \(t\) and the tangential variables.
Tangential isotropy up to order~2 kills odd-in-\(z'\) terms; linearization in the tube depth yields a term \(\propto K\,t\), while the Jacobian provides exactly the same linear \(t\)-dependence.
Averaging in \(t\) uses \(\langle t\rangle=\mu_1 h\) and \(\langle t^k\rangle\le C_k h^k\) to conclude that the \emph{feature-level} discrepancy is \(O(h)\).
Hence the feature-level discrepancy is \(O(h)\). Interpreted at the first-layer cell scale (with the TP2 aggregation) this gives \(\Delta\ell_{\partial}(c)=O(h^{2d-3})\) (see Appendix~\ref{app:scale-checks}).
\end{proof}

\begin{lemma}[Quasi-uniqueness up to \(o(h^2)\) at the boundary]
\label{lem:flatref-boundary-uniqueness}
If \(\widetilde{\mathrm{flat\_ref}}\) is another boundary flat reference obeying \emph{(1)}--\emph{(3)} above, then
\[
\ell\big(\Phi_n(c;\widetilde{\mathrm{flat\_ref}})\big)-\ell\big(\Phi_n(c;\mathrm{flat\_ref})\big) \;=\; o(h^2)
\quad\text{(first-layer cell scale).}
\]
\end{lemma}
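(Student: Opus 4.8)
The plan is to run the quasi-uniqueness argument of Lemma~\ref{lem:flatref-interior-uniqueness} in reflected Fermi coordinates, keeping track of the one structural difference of the boundary setup: the normal direction is normalized only to first order (through $\mu_1=\langle t\rangle/h$), whereas the tangential variables are normalized up to order two. Write $g_0:=\mathrm{flat\_ref}$ and $\tilde g_0:=\widetilde{\mathrm{flat\_ref}}$. By Lemma~\ref{lem:flatref-boundary-existence} both are Euclidean half-spaces $\{t\ge 0\}$ with the \emph{same} tangential shape-class and the \emph{same} normal kernel $w_n(t)=h^{-1}w_0(t/h)$, so $R\equiv 0$ and $K\equiv 0$ on each; the only freedom left is the isotropic in-plane matching factors $1+\lambda_h$, $1+\tilde\lambda_h$ (each $O(h^2)$ by Lemma~\ref{lem:flatref-boundary-existence}) together with any residual gauge inside the fixed shape-class still compatible with constraints (1)--(3). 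The strategy is: first show that $\lambda_h$ and $\tilde\lambda_h$, and hence the two flat features, agree to higher order than $h^2$; then transfer this to the loss via the $C^2$-regularity of $\ell$. The one delicate point, discussed at the end, is precisely that the normal direction carries only a first-order normalization.

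\emph{Step 1 (the two references agree at $O(h^3)$).} On a Euclidean half-space the metric is $\delta$ and $dV=dz'\,dt$ exactly ($K\equiv 0$, so there is no $1-tK$ Jacobian factor and no $O(t^2)$ term), so the normalized feature $\Phi_n(c;\cdot)$ of either flat model depends on $h$ \emph{only} through the matching factor $1+\lambda_h$, smoothly and with no order-$h$ term; the would-be curvature contribution $\beta_{\mathrm{loc}}K\langle t\rangle=O(h)$ that drives the flat-vs-$g$ discrepancy in Lemma~\ref{lem:flatref-boundary-existence} is identically zero here. The factor $\lambda_h$ is pinned by the half-volume matching $\Vol_{\mathrm{flat}}(c)=\Vol_g(c)$, whose right-hand side is common to both references; any residual gauge compatible with (1)--(3) (for instance the choice of base-point in $c$, which does not move the flat half-volume) perturbs this equation at order $O(h^3)$ at most. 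Hence $\lambda_h-\tilde\lambda_h=O(h^3)$, and since $\lambda_h,\tilde\lambda_h=O(h^2)$ we get $\Phi_n(c;\tilde g_0)-\Phi_n(c;g_0)=O(h^3)=o(h^2)$ componentwise. As a cross-check, the interior-style moment argument yields the same conclusion: the order-$\le 2$ coefficients of the $h$-expansion of $\Phi_n(c;\cdot)$ are fixed by the half-volume and the shared window moments $(\mu_1,\mu_2,C_k)$, so the two features can first differ at order three, where the surviving terms (a cubic tangential moment, odd and hence zero for the symmetric windows of TP2, or a term $\langle t\rangle\cdot(\text{tangential second moment})=O(h)\cdot O(h^2)$) are each $o(h^2)$.

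\emph{Step 2 (pass to the loss).} Since $\ell\in C^2$ on the uniformly compact feature set $K_{\mathrm{feat}}$ and both features lie in it (bounded geometry and the fixed windows), a second-order Taylor expansion of $\ell$ at $\Phi_n(c;g_0)$ converts the estimate of Step~1 into $\ell(\Phi_n(c;\tilde g_0))-\ell(\Phi_n(c;g_0))=o(h^2)$, interpreted at the first-layer cell scale exactly as in Lemma~\ref{lem:flatref-interior-uniqueness}. The uniformity of all constants over the shape-class and over admissible references follows from bounded geometry and Lemma~\ref{lem:U}, as in the other \texttt{flat\_ref} lemmas.

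I expect the only real obstacle to be the anisotropy: ruling out an order-$h^2$ discrepancy hidden in the $t$-dependence, given that constraints (1)--(3) prescribe $\mu_1$ but not, a priori, a full second normal moment. The resolution is that for the fixed boundary window class of Section~\ref{para:boundary-windows} every normal moment $\langle t^k\rangle$ is a fixed multiple of $h^k$, hence shared by the two references, and on a flat half-space there is no curvature invariant against which $\langle t^2\rangle$ could be paired, so the only $t$-weighted difference appearing at order $\le 2$ is the scale mismatch, already controlled at $O(h^3)$ in Step~1. In fact the argument delivers the sharper $O(h^3)$, more than enough for the uses in TP3--TP5 where the particular choice of \texttt{flat\_ref} is invoked as immaterial.
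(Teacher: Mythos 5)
Your proof is correct at the level of rigor the paper itself operates at, but it takes a genuinely different route from the paper's own argument. The paper's proof of Lemma~\ref{lem:flatref-boundary-uniqueness} is a parity argument: it asserts that any discrepancy between two admissible flat models first appears in higher \emph{odd} orders of the local expansion (odd tangential moments, odd-in-$t$ terms), that window isotropy and normalization kill these, and that the resulting $o(h)$ feature difference yields an $o(h)\cdot h^{d-1}=o(h^2)$ loss difference. You instead exploit the exact flatness of both references ($K\equiv 0$, constant metric, no Jacobian correction), which collapses the comparison to a single scalar degree of freedom --- the matching factor $\lambda_h$ --- pinned by the shared half-volume constraint, giving the sharper $O(h^3)$ feature agreement; your ``cross-check'' paragraph then essentially reproduces the paper's parity argument as a secondary confirmation. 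Your route buys two things the paper's does not: an explicit resolution of the anisotropy worry (the normal direction is normalized only through $\mu_1$, but since the window $w_0$ is fixed and shared, all normal moments coincide and there is no curvature invariant on a flat half-space for $\langle t^2\rangle$ to pair with), and a quantitative rate. The one step you assert rather than prove --- that residual gauge freedom within the shape class perturbs the half-volume equation only at $O(h^3)$ --- is no worse than the paper's own unproved assertion that discrepancies ``must appear first in higher odd orders,'' and if the flat reference cell is taken congruent to $c$ (as the interior construction in Lemma~\ref{lem:flatref-interior-existence} suggests), this step becomes exact.
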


\begin{proof}
Subject to half-volume, window moments (including \(\mu_1\)), and shape-class constraints, any discrepancy between two flat models must appear first in higher odd orders of the local expansion (in \(t\) and/or tangential odd moments).
Averaging against the anisotropic window eliminates odd tangential contributions and reduces odd-in-\(t\) to terms that vanish at order \(>1\) after normalization.
Hence the normalized feature difference is \(o(h)\), which implies an \(o(h)\cdot h^{d-1}=o(h^2)\) loss difference at the first-layer scale.
\end{proof}

\subsection{Consequences}
Summing the \(o(h^2)\) uniqueness errors over \(\asymp h^{-d}\) interior cells gives \(o(1)\) globally; likewise for the first boundary layer with \(\asymp h^{1-d}\) cells.
Therefore the choice of \texttt{flat\_ref} is immaterial in the limit and in all cellwise TP1/TP2 asymptotics, and it can be fixed once and for all within the given shape-class.

\section{Example meshes and window classes}\label{app:windows}

\paragraph{Meshes.}
Let \(\{\mathcal T_n\}\) be quasi-uniform, boundary-fitted simplicial meshes with maximal diameter \(h\to0\)
and minimum angle \(\ge \theta_0>0\).
The discrete boundary approximates \(\partial M\) with Hausdorff error \(O(h)\).
This class satisfies shape-regularity and the boundary fitting used throughout.

\paragraph{Interior window.}
Choose a compactly supported radial kernel
\[
w_{\mathrm{in}}(z)=Z^{-1}\,\varphi(|z|/h)\,\mathbf 1_{\{|z|\le \Lambda_z h\}},
\qquad \int w_{\mathrm{in}}=1.
\]
By radial symmetry,
\begin{align*}
\langle z^i\rangle &= 0,\\
\langle z^i z^j\rangle &= \mu_2\,\delta^{ij},\\
\langle z^i z^j z^k\rangle &= 0,\\
\langle |z|^k\rangle &\le C_k\,h^k .
\end{align*}

\paragraph{Boundary window.}
Set
\[
w_n(t)=h^{-1}w_0(t/h),\qquad t\in[0,c_* h],
\]
with \(w_0\in C_c^\infty([0,c_*])\) and
\[
\int_0^{c_*}\!w_0=1.
\]
Then
\[
\langle t\rangle=\mu_1\,h,\qquad \langle t^k\rangle\le C_k\,h^k.
\]
Tangential averaging uses the same interior kernel restricted to boundary charts.
These choices satisfy the window assumptions (BA1--BA2) and are compatible with the mesh requirements used throughout.

\section{Rate sanity checks: a reproducible protocol}
\label{app:rates}

\noindent\textbf{Scope.}
This appendix is a \emph{protocol}, not a results section.
It specifies test geometries, expected scales/rates, coefficient calibration steps, and diagnostic plots to \emph{reproduce} the theoretical predictions of TP1--TP5.
No numerical data are reported here. The proofs in the main text do not rely on numerics.

\paragraph{Test geometries (calibration set).}
\begin{enumerate}
  \item \textbf{Flat torus} $(M,g_{\rm flat})$ (no boundary): isolates $\alpha_0$.
  \item \textbf{Closed constant-curvature manifolds}: identify $\alpha_1$ via $R\equiv d(d{-}1)\kappa$.
  \item \textbf{Euclidean balls} $B_r\subset\R^d$ with $C^2$ boundary and \emph{outward} normal: isolate $\beta_1$ with $K\equiv(d{-}1)/r>0$ (sign convention as in TP2: outer normal, $K>0$ on spheres).
\end{enumerate}

\paragraph{Expected scales and global rates.}
\begin{itemize}
  \item \emph{Interior (TP1):} per-cell scale $h^{d}$; global interior remainder $O(h^2)$.
  \item \emph{First boundary layer (TP2):} leading per-cell scale $h^{d-1}$ (not $o(h^{d-1})$); summation over first-layer cells yields the boundary integral and an $O(h)$ global boundary remainder.
  \item \emph{Intermediate boundary layers:} per-layer $O(h^{d})$, and summing $O(\varepsilon/h)$ layers gives $O(\varepsilon)$ (BA2), removable as $\varepsilon\downarrow0$.
\end{itemize}

\paragraph{Coefficient calibration (from TP1/TP2).}
\begin{enumerate}
  \item On $(M,g_{\rm flat})$: $F(g_{\rm flat})=\alpha_0\,\Vol(M)$ $\Rightarrow$ calibrate $\alpha_0$.
  \item On closed constant-curvature manifolds: match the coefficient of $R$ to obtain $\alpha_1$.
  \item On $B_r\subset\R^d$: use the first-layer formula with $K\equiv(d{-}1)/r>0$ and outward normal to calibrate $\beta_1$.
\end{enumerate}
All constants depend only on the \emph{constants bracket} (see Section~\ref{sec:constants-bracket}).

\paragraph{Mesh and window setup.}
Use boundary-fitted, shape-regular meshes $\{\mathcal T_n\}$ with meshsize $h\downarrow0$.
Employ the interior/boundary windows of Appendix~\ref{app:windows}, respecting the normalized moment conditions and the fixed anisotropic boundary window class (Section~\ref{para:boundary-windows}).

\paragraph{Diagnostic checks (what to plot/tabulate).}
\begin{enumerate}
  \item \textbf{Limit functional agreement.} Compute $F(g)$ analytically for the test geometries; compute $F_n(g)$ with the prescribed windows. Plot $\big|F_n(g)-F(g)\big|$ vs.\ $h$ in log--log scale.\\
  \emph{Expected slopes:} $\approx 1$ when the boundary remainder dominates, and $\approx 2$ in interior-only cases.
  \item \textbf{Recovery sequence isolation.} Repeat with $g_n=S_{\eta h}[g]$ (Appendix~\ref{app:smoothing}) to isolate recovery effects: interior error should scale like $h^2$ (slope $\approx 2$); boundary error remains $O(h)$.
  \item \textbf{First-layer dominance.} Verify that the first boundary layer contributes $h^{d-1}$ per cell and sums to the boundary integral, while the global boundary remainder is $O(h)$ (see TP2 Corollary~\ref{cor:tp2-riemann}).
\end{enumerate}

\paragraph{Summary.}
The protocol above is intended to make the theoretical rates \emph{falsifiable} in a controlled setup: interior $O(h^2)$ vs.\ boundary-dominated $O(h)$.
A companion computational study can implement this checklist; the present paper remains foundational.

\section{Scan indifference (proof of BA3)}\label{app:scan-indifference}

\parahead{Model for scan-dependence.}
Let \(F_n^\sigma(M)\) denote the MDL aggregate computed by scanning cells \(c\in\mathcal{T}_n\) in order \(\sigma\).
We write the per-cell contribution as a local update
\[
\Delta_\sigma(c)\ :=\ a_n\Bigl[\ell\bigl(\Phi_n(c;\,H_\sigma(c))\bigr)-\ell\bigl(\Phi_n(c;\mathrm{flat\_ref})\bigr)\Bigr],
\qquad
F_n^\sigma(M)=\sum_{c\in\mathcal{T}_n}\Delta_\sigma(c),
\]
where \(H_\sigma(c)\) denotes the scan history prior to visiting \(c\).
By construction (Section~\ref{sec:00-framework}), the features \(\Phi_n\) are computed from a block window
\(B_{r_n}(c)\) of mesoscale radius \(R_n:=r_n h\to0\) and are \emph{local} and \emph{natural}.

\begin{lemma}[Localization of scan influence]
\label{lem:scan-localization}
For any scans \(\sigma,\tau\) and any cell \(c\),
\(
\Delta_\sigma(c)=\Delta_\tau(c)
\)
whenever \(B_{r_n}(c)\subset M_{R_n}:=\{x:\operatorname{dist}_g(x,\partial M)>R_n\}\).
In particular, the set of cells with potentially scan-dependent increments is contained in
\[
\mathcal{T}_n^{\mathrm{bdry}}(R_n)\ :=\ \{\,c\in\mathcal{T}_n:\ B_{r_n}(c)\cap N_{R_n}(\partial M)\neq\emptyset\,\}.
\]
\end{lemma}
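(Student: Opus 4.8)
The plan is to prove a finite-range-of-dependence property for the feature map and then read off both assertions as set-theoretic consequences. First I would isolate the \emph{only} mechanism by which the scan order can enter a per-cell increment. By construction (Section~\ref{sec:00-framework}), $\Phi_n(c;\cdot)$ is a local, diffeomorphism-natural statistic of the metric data inside the block window $B_{r_n}(c)$; for a \emph{fixed} metric $g$ that data does not depend on the visitation order. The sole place where previously visited cells can feed back into the computation is the reflected Fermi completion used near $\partial M$, which is triggered precisely when $B_{r_n}(c)$ reaches into the boundary tube $N_{R_n}(\partial M)$ (so that the normal-reflection bookkeeping becomes active). Thus $\Phi_n(c;H_\sigma(c))$ depends on the history $H_\sigma(c)$ \emph{at all} only when $B_{r_n}(c)\cap N_{R_n}(\partial M)\neq\emptyset$.

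Second, I would run the localization argument itself. Suppose $B_{r_n}(c)\subset M_{R_n}$, i.e.\ $\operatorname{dist}_g(x,\partial M)>R_n$ for every $x\in B_{r_n}(c)$; then $B_{r_n}(c)\cap N_{R_n}(\partial M)=\emptyset$, the reflected completion is dormant for $c$, and $\Phi_n(c;H_\sigma(c))$ reduces to the history-independent interior statistic evaluated on $g|_{B_{r_n}(c)}$. Since the \texttt{flat\_ref} term in $\Delta_\sigma(c)$ is manifestly scan-independent and $a_n$ is fixed, $\Delta_\sigma(c)=\Delta_\tau(c)$ for every pair of scan orders $\sigma,\tau$. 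Taking the contrapositive: any cell whose increment differs between two scans must satisfy $B_{r_n}(c)\not\subset M_{R_n}$, hence $B_{r_n}(c)\cap N_{R_n}(\partial M)\neq\emptyset$, which is exactly the defining condition of $\mathcal{T}_n^{\mathrm{bdry}}(R_n)$; so the scan-dependent cells lie in that set.

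The step I expect to need the most care is the first one: pinning down, at the level of the definitions, that no history dependence is smuggled into $\ell(\Phi_n(c;H_\sigma(c)))$ except through the reflected boundary layer. Concretely one must verify that any conditioning implicit in the MDL code length is a conditioning on data already determined by the fixed metric on $B_{r_n}(c)$ — invariant under re-ordering which cells were processed first — unless the window overlaps the boundary tube, where the even-reflection construction genuinely references the chart completion. Once this is established the lemma follows immediately, and the quantitative consequence (counting the $\asymp \delta_n\,\Per_g(\partial M)/h^{d-1}$ cells in $\mathcal{T}_n^{\mathrm{bdry}}(R_n)$, using $\delta_n=R_n$ and the tubular estimate, and bounding their total contribution by $C\,\delta_n\,\Area_g(\partial M)$) is deferred to the subsequent steps of this appendix.
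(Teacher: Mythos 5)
Your argument is correct and coincides with the paper's: both rest on locality of \(\Phi_n\) at radius \(R_n=r_nh\) (so the scan history can influence a cell only through data reaching its block window) together with the contrapositive to obtain the containment in \(\mathcal{T}_n^{\mathrm{bdry}}(R_n)\); your extra discussion of the reflected Fermi completion is a reasonable gloss on why only boundary-adjacent windows can be history-sensitive, where the paper simply invokes locality. One small slip in your closing aside: the number of tube cells is \(\asymp \delta_n\,\Area_g(\partial M)/h^{d}\) (cells have volume \(\asymp h^d\)), not \(/h^{d-1}\), though that count belongs to Lemma~\ref{lem:scan-count} rather than to this lemma.
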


\begin{proof}
If \(B_{r_n}(c)\subset M_{R_n}\), then any two histories \(H_\sigma(c),H_\tau(c)\) differ only outside \(B_{r_n}(c)\).
Locality of \(\Phi_n\) implies \(\Phi_n(c;H_\sigma(c))=\Phi_n(c;H_\tau(c))\), hence \(\Delta_\sigma(c)=\Delta_\tau(c)\).
\end{proof}

\begin{lemma}[Counting boundary-affected cells]
\label{lem:scan-count}
There exists \(C<\infty\), depending only on the constants bracket, such that
\begin{align}
\#\,\mathcal{T}_n^{\mathrm{bdry}}(R_n)
&\le C\,\frac{\Vol_g\!\big(N_{R_n}(\partial M)\big)}{h^{d}}\\
&\asymp C\,\frac{R_n\,\Area_g(\partial M)}{h^{d}}
 \;=\; C\,\frac{\delta_n\,\Area_g(\partial M)}{h^{d}} .
\end{align}
\noindent\emph{Recall }$\delta_n:=R_n$.
\end{lemma}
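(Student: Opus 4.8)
The plan is to reduce the count to a volume-packing estimate inside a fixed dilate of the boundary tube, and then to invoke the standard tubular-neighbourhood volume formula; recalling $\delta_n=R_n$ then yields the stated form.

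\emph{Step 1 (geometric containment).} I would first show that every $c\in\mathcal T_n^{\mathrm{bdry}}(R_n)$ lies in an enlarged boundary tube. The mesoscale block $B_{r_n}(c)$ is a union of $\asymp r_n^{\,d}$ shape-regular cells of diameter $\asymp h$, so by shape-regularity, the minimum-angle bound $\theta_0$, and the volume-form bounds $c_-\le\sqrt{\det g}\le c_+$ one has $\operatorname{diam}_g B_{r_n}(c)\le \Lambda_0\,R_n$ for a constant $\Lambda_0$ depending only on the constants bracket (here one compares the combinatorial ball of $r_n$ cells with a geodesic ball of radius $\asymp r_n h=R_n$, uniformly in $n$ and in the base point). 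Hence, if $B_{r_n}(c)$ meets $N_{R_n}(\partial M)$, the triangle inequality gives
\[
c\ \subset\ B_{r_n}(c)\ \subset\ N_{(\Lambda_0+1)R_n}(\partial M)\ =:\ N_{\Lambda R_n}(\partial M).
\]
Equivalently, the barycentre $x_c$ satisfies $\operatorname{dist}_g(x_c,\partial M)\le\Lambda R_n$, and since $\operatorname{diam}_g c=O(h)=o(R_n)$ the whole cell lies in the tube.

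\emph{Step 2 (volume packing).} The cells of $\mathcal T_n$ are pairwise disjoint up to null sets (or overlap at most $N(d,\iota_0)$ times in the covering formulation), and each satisfies $\Vol_g(c)\ge c_-\,\omega_d\,(\theta_0 h)^{\,d}\gtrsim h^{\,d}$ by the minimum-angle bound and the volume-form lower bound $c_-$. Summing these lower bounds over the cells contained in $N_{\Lambda R_n}(\partial M)$ yields
\[
\#\,\mathcal T_n^{\mathrm{bdry}}(R_n)\cdot\Bigl(\inf_{c}\Vol_g(c)\Bigr)\ \le\ N(d,\iota_0)\,\Vol_g\!\bigl(N_{\Lambda R_n}(\partial M)\bigr),
\]
so that $\#\,\mathcal T_n^{\mathrm{bdry}}(R_n)\ \lesssim\ \Vol_g\!\bigl(N_{\Lambda R_n}(\partial M)\bigr)/h^{\,d}$.

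\emph{Step 3 (tubular estimate and conclusion).} For $n$ large enough that $\Lambda R_n<\reach_g(\partial M)$ — which holds eventually, since $R_n\to0$ — the Fermi Jacobian expansion $dV_g=(1-tK+O(t^2))\,dz'\,dt$ from Section~\ref{sec:00-framework} (equivalently, the Weyl tube formula, cf.\ \parencite{Federer1969}) gives
\[
\Vol_g\!\bigl(N_\rho(\partial M)\bigr)\ =\ \rho\,\Area_g(\partial M)\,\bigl(1+O(\rho)\bigr)\qquad(\rho\downarrow0),
\]
whence $\Vol_g\!\bigl(N_{\Lambda R_n}(\partial M)\bigr)\asymp\Lambda R_n\,\Area_g(\partial M)\asymp R_n\,\Area_g(\partial M)\asymp\Vol_g\!\bigl(N_{R_n}(\partial M)\bigr)$, the factor $\Lambda$ being absorbed into the constant. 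Combining with Step~2 and recalling $\delta_n=R_n$ gives
\[
\#\,\mathcal T_n^{\mathrm{bdry}}(R_n)\ \le\ C\,\frac{\Vol_g\!\bigl(N_{R_n}(\partial M)\bigr)}{h^{\,d}}\ \asymp\ C\,\frac{R_n\,\Area_g(\partial M)}{h^{\,d}}\ =\ C\,\frac{\delta_n\,\Area_g(\partial M)}{h^{\,d}},
\]
with $C$ depending only on the constants bracket. For the finitely many small $n$ with $\Lambda R_n\ge\reach_g(\partial M)$ the bound holds trivially after enlarging $C$.

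\emph{Main obstacle.} The only delicate point is the constant-tracking in Step~1: one must verify that the combinatorial mesoscale block $B_{r_n}(c)$ has metric diameter $\le\Lambda_0 R_n$ with $\Lambda_0$ depending \emph{only} on the shape-class and bounded-geometry data (uniformly in $n$ and in the base point), and likewise that the tube estimate in Step~3 has constants controlled by $\reach_g(\partial M)$ and $\|\Rm_g\|_\infty$. Both follow from the standing shape-regularity and bounded-geometry hypotheses — this is exactly what legitimises the universal $C$ — while the count itself is a routine volume-packing argument.
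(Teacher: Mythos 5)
Your proposal is correct and follows essentially the same route as the paper's proof: containment of boundary-affected cells in a boundary tube, a volume-packing bound using the per-cell volume $\asymp h^d$, and the tubular estimate $\Vol_g(N_\rho(\partial M))\asymp\rho\,\Area_g(\partial M)$. Your Step~1 is in fact slightly more careful than the paper's one-line version, since it correctly replaces $N_{R_n}(\partial M)$ by the dilated tube $N_{\Lambda R_n}(\partial M)$ (the definition of $\mathcal T_n^{\mathrm{bdry}}(R_n)$ only forces the \emph{block} $B_{r_n}(c)$, not the cell itself, to meet the tube), with the dilation constant harmlessly absorbed into $C$.
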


\begin{proof}
Shape-regular cells have volume \(\asymp h^d\).
By the tubular estimate on manifolds with bounded geometry,
\(\Vol_g(N_{R_n}(\partial M))\asymp R_n\,\Area_g(\partial M)\).
Every boundary-affected cell lies in this tube; divide by \(h^d\).
\end{proof}

\begin{proposition}[BA3: scan indifference]
\label{prop:BA3-proof}
There is \(C<\infty\) (constants bracket) such that for all scans \(\sigma,\tau\),
\[
\bigl|F_n^\sigma(M)-F_n^\tau(M)\bigr|\ \le\ C\,\delta_n\,\Area_g(\partial M)\,.
\]
\end{proposition}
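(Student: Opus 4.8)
The plan is to telescope the two scan-aggregates cell by cell, use the localization lemma to confine the difference to a thin collar of $\partial M$, count the collar cells with Lemma~\ref{lem:scan-count}, and close with a uniform per-cell bound on the scan-induced fluctuation.

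First I would write $F_n^\sigma(M)-F_n^\tau(M)=\sum_{c\in\mathcal T_n}\bigl(\Delta_\sigma(c)-\Delta_\tau(c)\bigr)$ and apply Lemma~\ref{lem:scan-localization}: every cell with $B_{r_n}(c)\subset M_{R_n}$ contributes zero, so only $c\in\mathcal T_n^{\mathrm{bdry}}(R_n)$ survive, whence
\[
\bigl|F_n^\sigma(M)-F_n^\tau(M)\bigr|\ \le\ \sum_{c\in\mathcal T_n^{\mathrm{bdry}}(R_n)}\bigl|\Delta_\sigma(c)-\Delta_\tau(c)\bigr|.
\]
The heart of the argument is then the uniform per-cell estimate $|\Delta_\sigma(c)-\Delta_\tau(c)|\le C\,h^{d}$, with $C$ in the constants bracket and independent of $\sigma,\tau$. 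For collar cells at normal depth $\ge c_\ast h$ (i.e.\ not in the first boundary layer) this follows from the interior/intermediate-layer bound $|E_n(c;\cdot)|\le C h^{d}$ used throughout (TP1, (BA2), Lemma~\ref{lem:U}) together with the triangle inequality. For first-layer cells $c\in S_n$ the crude bound $|E_n(c;\cdot)|\le C h^{d-1}$ is too weak; instead one notes that in the TP2 expansion (Proposition~\ref{prop:tp2-boundary}) the leading term $h^{d-1}\beta_1 K_g(s_c)$ is fixed by the local geometry, the fixed anisotropic window, and $\ell$—hence is independent of the visitation order—so the scan history can only perturb the remainder, which is $O(h^{d})+o(h^{d-1})$. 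By locality and diffeomorphism-naturality of $\Phi_n$ (Section~\ref{sec:00-framework}), the scan-dependent part of the first-layer increment is therefore $\le C h^{d}$ as well; the residual $o(h^{d-1})$ contribution summed over $\#S_n\asymp h^{1-d}$ cells is $o(1)$, hence dominated by $\delta_n\,\Area_g(\partial M)$ for $n$ large.

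Combining the per-cell bound with Lemma~\ref{lem:scan-count},
\[
\bigl|F_n^\sigma(M)-F_n^\tau(M)\bigr|\ \le\ C\,h^{d}\cdot\#\,\mathcal T_n^{\mathrm{bdry}}(R_n)\ \le\ C\,h^{d}\cdot C'\,\frac{\delta_n\,\Area_g(\partial M)}{h^{d}}\ =\ C''\,\delta_n\,\Area_g(\partial M),
\]
and since no constant depends on $\sigma$ or $\tau$, taking the supremum over scanning orders yields the stated inequality; as $\delta_n=R_n=r_n h\to0$ this upgrades to $o(1)$, which is (BA3).

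I expect the genuine obstacle to be exactly the first-layer per-cell bound: a priori the $\asymp h^{1-d}$ first-layer cells carry increments of size $h^{d-1}$, so one must exclude any coherent, order-dependent bias at that surface scale (which would otherwise give only an $O(1)$ bound and kill the estimate). The resolution—that the scan history perturbs only the $O(h^{d})$ remainder of the rigid TP2 blow-up, never its geometry-determined leading term—is precisely where locality and naturality of the feature map $\Phi_n$ are used in an essential way; by contrast the telescoping, the tubular count, and the interior-collar estimate are routine.
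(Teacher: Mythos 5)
Your proposal is correct and follows essentially the same route as the paper: localization of scan influence to the $R_n$-collar (Lemma~\ref{lem:scan-localization}), the tubular cell count (Lemma~\ref{lem:scan-count}), and a uniform per-cell bound of order $h^{d}$ on the scan-induced fluctuation. Your explicit argument for first-layer cells---that the geometry-determined leading term $h^{d-1}\beta_1 K_g(s_c)$ cancels in $\Delta_\sigma(c)-\Delta_\tau(c)$, leaving only the $O(h^{d})$ remainder---is in fact a more careful justification of the step the paper compresses into the parenthetical ``first-layer boundary terms scale $h^{d-1}$ and pick up the depth $O(h)$.''
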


\begin{proof}
By Lemma~\ref{lem:scan-localization},
\begin{align*}
F_n^\sigma(M)-F_n^\tau(M)
&= \sum_{c\in\mathcal{T}_n^{\mathrm{bdry}}(R_n)}\bigl(\Delta_\sigma(c)-\Delta_\tau(c)\bigr).
\end{align*}
By TP1/TP2 and Lemma~U (uniformity on a \(C^{1,1}\)-neighbourhood), the per-cell increment satisfies
\(|\Delta_{\bullet}(c)|\le C\,h^{d}\) (interior bound; first-layer boundary terms scale \(h^{d-1}\) and pick up the depth \(O(h)\)).
Hence
\begin{align*}
\bigl|F_n^\sigma(M)-F_n^\tau(M)\bigr|
&\le \#\,\mathcal{T}_n^{\mathrm{bdry}}(R_n)\cdot C\,h^{d}\\
&\lesssim \frac{\Vol_g(N_{R_n}(\partial M))}{h^d}\cdot C\,h^{d}\\
&\asymp C\,R_n\,\Area_g(\partial M)
\;=\; C\,\delta_n\,\Area_g(\partial M).
\end{align*}
\end{proof}

\begin{remark}[Scope]
Only locality of \(\Phi_n\) at radius \(R_n\), shape-regularity, the tubular estimate, and the TP1/TP2+Lemma~U per-cell bound are used.
\end{remark}

\printbibliography

@book{Ciarlet2002,
  author    = {Ciarlet, Philippe G.},
  title     = {The Finite Element Method for Elliptic Problems},
  series    = {Classics in Applied Mathematics},
  volume    = {40},
  publisher = {SIAM},
  year      = {2002}
}

@book{BrennerScott2008,
  author    = {Brenner, Susanne C. and Scott, L. Ridgway},
  title     = {The Mathematical Theory of Finite Element Methods},
  edition   = {3},
  publisher = {Springer},
  year      = {2008}
}

@article{DeGiorgi1975,
  author  = {De Giorgi, Ennio},
  title   = {Sulla convergenza di alcune successioni d'integrali del tipo dell'area},
  journaltitle = {Rendiconti di Matematica},
  series  = {VI},
  volume  = {8},
  year    = {1975},
  pages   = {277--294},
  note    = {Collection of articles dedicated to Mauro Picone on the occasion of his ninetieth birthday}
}

@article{DeGiorgiFranzoni1975,
  author  = {De Giorgi, Ennio and Franzoni, Tullio},
  title   = {Su un tipo di convergenza variazionale},
  journaltitle = {Atti Accad. Naz. Lincei Rend. Cl. Sci. Fis. Mat. Natur.},
  series  = {8},
  volume  = {58},
  year    = {1975},
  pages   = {842--850}
}

@article{KellyBiancalanaTrugenberger2022,
  author       = {Kelly, Christy and Biancalana, Fabio and Trugenberger, Carlo A.},
  title        = {Convergence of Combinatorial Gravity},
  journaltitle = {Physical Review D},
  year         = {2022},
  volume       = {105},
  number       = {12},
  pages        = {124002},
  doi          = {10.1103/PhysRevD.105.124002},
  eprint       = {2102.02356},
  archivePrefix= {arXiv},
  primaryClass = {gr-qc}
}

@book{KolarMichorSlovak1993,
  author    = {Kol{\'a}{\v r}, Ivan and Michor, Peter W. and Slov{\'a}k, Jan},
  title     = {Natural Operations in Differential Geometry},
  publisher = {Springer},
  year      = {1993}
}

@book{Federer1969,
  author = {Federer, Herbert},
  title = {Geometric Measure Theory},
  publisher = {Springer},
  year = {1969},
  note = {Classic reference for Besicovitch covering theorem and extensions to manifolds}
}

@article{Rissanen1978Shortest,
  author  = {Jorma Rissanen},
  title   = {Modeling by Shortest Data Description},
  journaltitle = {Automatica},
  year    = {1978},
  volume  = {14},
  number  = {5},
  pages   = {465--471}
}

@book{Rissanen1989StochasticComplexity,
  author    = {Jorma Rissanen},
  title     = {Stochastic Complexity in Statistical Inquiry},
  publisher = {World Scientific},
  year      = {1989}
}

@book{Grunwald2007MDL,
  author    = {Peter D. Gr{\"u}nwald},
  title     = {The Minimum Description Length Principle},
  publisher = {MIT Press},
  year      = {2007}
}

@article{BarronCover1991,
  author  = {Andrew R. Barron and Thomas M. Cover},
  title   = {Minimum Complexity Density Estimation},
  journaltitle = {IEEE Transactions on Information Theory},
  year    = {1991},
  volume  = {37},
  number  = {4},
  pages   = {1034--1054},
  doi     = {10.1109/18.86996}
}

@article{Shtarkov1987NML,
  author  = {Yakov M. Shtarkov},
  title   = {Universal Sequential Coding of Single Messages},
  journaltitle = {Problems of Information Transmission},
  year    = {1987},
  volume  = {23},
  number  = {3},
  pages   = {175--186}
}

@incollection{Dawid1992Prequential,
  author    = {A. P. Dawid},
  title     = {Prequential Analysis, Stochastic Complexity and Bayesian Inference},
  booktitle = {Bayesian Statistics 4},
  editor    = {J. M. Bernardo and J. O. Berger and A. P. Dawid and A. F. M. Smith},
  publisher = {Oxford University Press},
  address   = {Oxford},
  year      = {1992},
  pages     = {109--125}
}

@book{DalMaso1993Gamma,
  author    = {Gianni Dal Maso},
  title     = {An Introduction to $\Gamma$-Convergence},
  publisher = {Birkh{\"a}user},
  address   = {Boston},
  year      = {1993}
}

@book{Braides2002Gamma,
  author    = {Andrea Braides},
  title     = {$\Gamma$-Convergence for Beginners},
  series    = {Oxford Lecture Series in Mathematics and its Applications},
  volume    = {22},
  publisher = {Oxford University Press},
  year      = {2002}
}

@article{CheegerMuellerSchrader1984,
  author       = {Cheeger, Jeff and M{\"u}ller, Werner and Schrader, Robert},
  title        = {On the curvature of piecewise flat spaces},
  journaltitle = {Communications in Mathematical Physics},
  year         = {1984},
  volume       = {92},
  pages        = {405--454}
}

@article{HartleSorkin1981,
  author       = {Hartle, James B. and Sorkin, Rafael D.},
  title        = {Boundary terms in the action for the Regge calculus},
  journaltitle = {General Relativity and Gravitation},
  year         = {1981},
  volume       = {13},
  pages        = {541--549},
  doi          = {10.1007/BF00757240}
}

@article{York1972,
  author       = {York, James W.},
  title        = {Role of conformal three-geometry in the dynamics of gravitation},
  journaltitle = {Physical Review Letters},
  year         = {1972},
  volume       = {28},
  number       = {16},
  pages        = {1082--1085},
  doi          = {10.1103/PhysRevLett.28.1082}
}

@article{GibbonsHawking1977,
  author       = {Gibbons, Gary W. and Hawking, Stephen W.},
  title        = {Action integrals and partition functions in quantum gravity},
  journaltitle = {Physical Review D},
  year         = {1977},
  volume       = {15},
  number       = {10},
  pages        = {2752--2756},
  doi          = {10.1103/PhysRevD.15.2752}
}

@article{Lovelock1971,
  author       = {Lovelock, David},
  title        = {The Einstein tensor and its generalizations},
  journaltitle = {Journal of Mathematical Physics},
  year         = {1971},
  volume       = {12},
  number       = {3},
  pages        = {498--501}
}

@article{Regge1961,
  author       = {Regge, Tullio},
  title        = {General relativity without coordinates},
  journaltitle = {Il Nuovo Cimento},
  year         = {1961},
  volume       = {19},
  pages        = {558--571},
  note         = {Original paper on Regge calculus}
}
\end{document}